\newtheorem{lemma}{Lemma}
\newtheorem{theorem}{Theorem}
\newtheorem{remark}{Remark}
\newcommand*\mathinhead[2]{\texorpdfstring{$\boldsymbol{#1}$}{#2}} % this command allows the inclusion of greek letters in section titles without errors
\begin{document}

\title{Recovery With Incomplete Knowledge: \newline Fundamental Bounds on Real-Time Quantum Memories}
\author{Arshag Danageozian}
\email{arshag.danageozian@gmail.com}
\affiliation{Hearne  Institute  for  Theoretical  Physics,  Department  of  Physics  and  Astronomy,  Louisiana  State  University,  Baton  Rouge,  Louisiana  70803,  USA}
\orcid{0000-0003-0044-9951}

\begin{abstract}
    The recovery of fragile quantum states from decoherence is the basis of building a quantum memory, with applications ranging from quantum communications to quantum computing. Many recovery techniques, such as quantum error correction, rely on the \textit{apriori} knowledge of the environment noise parameters to achieve their best performance. However, such parameters are likely to drift in time in the context of implementing long-time quantum memories. This necessitates using a ``spectator'' system, which estimates the noise parameter in real-time, then feedforwards the outcome to the recovery protocol as a classical side-information. The memory qubits and the spectator system hence comprise the building blocks for a real-time (i.e. drift-adapting) quantum memory. In this article, I consider spectator-based (incomplete knowledge) recovery protocols as a real-time parameter estimation problem (generally with nuisance parameters present), followed by the application of the ``best-guess'' recovery map to the memory qubits, as informed by the estimation outcome. I present information-theoretic and metrological bounds on the performance of this protocol, quantified by the diamond distance between the ``best-guess'' recovery and optimal recovery outcomes, thereby identifying the cost of adaptation in real-time quantum memories. Finally, I provide fundamental bounds for multi-cycle recovery in the form of recurrence inequalities. The latter suggests that incomplete knowledge of the noise could be an advantage, as errors from various cycles can cohere. These results are illustrated for the approximate [4,1] code of the amplitude-damping channel and relations to various fields are discussed.
\end{abstract}

\maketitle
\tableofcontents

\section{Introduction}
Quantum memories comprise an important component of current and future quantum technologies. Their use ranges from quantum communications and networks \cite{gundougan2021proposal, wallnofer2022simulating, gundougan2023time} to sensing \cite{sidhu2021tight}, and even computation. This wide range of relevance stems from the fact that a quantum memory preserves a quantum system's (often fragile) state from decoherence, which encodes the desired quantum information. 

Depending on the physical implementation of the quantum memory, current coherence times range from milliseconds to minutes \cite{wang2017single}. However, various quantum technologies may require even longer coherence times \cite{stas2022robust}. Two of the most common techniques implemented in a quantum memory are quantum error correction (QEC) \cite{schumacher1996quantum, schumacher1996sending, knill1997theory} and dynamical decoupling \cite{viola1998dynamical, viola1999dynamical}. These techniques generally benefit from the \textit{apriori} knowledge of the noise surrounding the system of interest. For example, channel-adaptation techniques in QEC \cite{fletcher2008channel} have been shown to outperform general QEC codes, as they are given additional knowledge of the environment noise \cite{leung1997approximate, fletcher2007optimum}. Such techniques rely on some physical model of the (noisy) implementation medium of the quantum memory. The noise model is partially built upon physical assumptions (e.g. in the choice of the Hamiltonians) and partially upon phenomenology. Hence, the former gives a physically motivated family of quantum dynamics $\{\mathcal{N}_{\theta}\}_{\theta \in \Theta}$ for the quantum state of the memory qubits \cite{breuer2002theory}, and the latter determines the value of the noise parameter $\theta$ such that the dynamics $\mathcal{N}_{\theta}$ fits the observed data the best.

Although very powerful, a shortcoming of this approach is that the environment noise parameter is generally time-varying. This has been studied most extensively for superconducting qubits \cite{muller2015interacting, klimov2018fluctuations, etxezarreta2021time, dasgupta2020characterizing, dasgupta2021stability}. Hence, real-time techniques to track the change (drift) of the noise are necessary, assuming we want to operate quantum memories for times larger than the characteristic times of the drift. 

Indeed, efforts have been made towards designing ``spectator'' systems that aid in detecting and tracking such changes \cite{bonato2017adaptive, cortez2017rapid, proctor2020detecting, majumder2020real, gupta2020integration, youssry_2023, danageozian2022noisy, song2023optimized, tonekaboni2023greedy, singh2023mid}. Being subject to the same physical environment, the goal of the spectator system is to perform real-time quantum sensing of the noise parameter. The estimate is then used as a classical side-information in various recovery protocols. The physical requirements of spectator systems are two-fold: $(1)$ proximity to the memory qubits, such that the spatial dependence of the noise parameter can be neglected, and $(2)$ exhibiting faster dynamics than the memory qubits. The latter is necessary if the feedback information is to be useful for recovery. We showcase the functionality of the spectator system within a quantum memory using Figs.~\ref{fig:circuit1} and ~\ref{fig:circuit2}. Note that, since the memory and spectator systems are generally different physical systems with different couplings to the same environment, their dynamics will generally be described by different quantum channels with the same noise parameter, i.e. $\mathcal{N}_{\theta}$ and $\mathcal{M}_{\theta}$, respectively. More precisely, a spectator-based recovery protocol is comprised of the following characteristic stages, as shown in Fig.~\ref{fig:protocol}:
\begin{enumerate}
	\item \textit{Individual state preparation} of the quantum memory and the spectator system.
	\item \textit{Free evolution} of the joint memory-spectator system under the action of the shared environment, with generally unknown noise parameters.
	\item \textit{Quantum parameter estimation} of the environment noise parameters, using the spectator system as a real-time quantum sensor (i.e. a probe). 
	\item \textit{Post-processing} of the measurement outcomes to extract the value of the locally unbiased estimator, and use it to construct the best-guess recovery map.
	\item \textit{Recovery} of the original state of the quantum memory by applying the best-guess recovery map.
	\item \textit{Recycling of spectator state}, which prepares it for the next recovery cycle.
\end{enumerate}

There exist systems that satisfy the physical properties of a spectator system. For example, nitrogen-vacancy (NV) centers in diamond, which were used to prove the first loophole-free Bell inequality violation \cite{hensen2015loophole}, provide both a spectator qutrit and a memory qubit. Namely, the nuclear spin degree of freedom of its $^{14}N$ or $^{15}N$ atom comprises the memory qubit, whereas a nearly-closed three-level $\Lambda$ system \cite{togan2010quantum, golter2014optically}, optically selected out of the electronic degrees of freedom of the NV center, comprises the spectator system \cite{danageozian2022noisy, wu2021continuous, turner2022real}. The two-time separation between the pure dephasing times of the memory qubit ($T_{\varphi}^{\operatorname{memo}} \sim 100\mu$s) and the spectator qutrit ($T_{\varphi}^{\operatorname{spec}} \sim 100$ns \cite{lekavicius2017transfer}) is necessary to simultaneously yields (i) a metrologically useful spectator state $\mathcal{M}_{\theta}(\psi)$ for parameter estimation, and (ii) a relatively small noise parameter value of the memory dynamics $\mathcal{N}_{\theta}$ (and hence a generally higher recovery fidelity), for relevant times $t$ of the spectator dynamics. The latter is seen from the implicit dependence of the noise parameter $\theta$ on time \cite{nielsen2002quantum}: $\theta=1-\exp{(-t/T_{\varphi}^{\operatorname{memo}})}=1-[\exp(-t/T_{\varphi}^{\operatorname{spec}})]^{T_{\varphi}^{\operatorname{spec}}/T_{\varphi}^{\operatorname{memo}}}<<1-\exp{(-t/T_{\varphi}^{\operatorname{spec}})}\equiv \theta_{\operatorname{eff}}$, where $\mathcal{M}_{\theta}\equiv \mathcal{N}_{\theta_{\operatorname{eff}}}$. 

Although spectator systems are a promising building block for real-time (i.e. drift-adapting) quantum memories, we expect fundamental limitations to manifest nonetheless. This is based on the following physical intuition: In real-time, the spectator system's goal is to perform a quantum estimation of the environment noise parameter $\theta$. However, due to the quantum Cram\'er-Rao bound (QCRB) \cite{braunstein1994statistical, braunstein1996generalized}, any locally unbiased estimate $\hat{\theta}$ of the noise parameter $\theta$ will have a non-zero variance. Namely, $\operatorname{Var}(\hat{\theta}) \ge 1/\textsf{I}_{\operatorname{QF}}(\mathcal{M}_{\theta}(\psi))$, where $\textsf{I}_{\operatorname{QF}}(\mathcal{M}_{\theta}(\psi))$ is the quantum Fisher information (QFI) of the family of parametric states $\{\mathcal{M}_{\theta}(\psi)\}_{\theta \in \Theta}$ describing the spectator dynamics (see Fig.~\ref{fig:circuit2}). For a given setup, this fundamental uncertainty in the estimate $\hat{\theta}$ will propagate within the overall protocol and manifest itself as a fundamental limitation of the specific recovery technique.

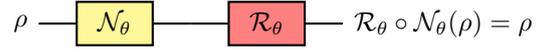
\begin{figure}[t]
    \begin{quantikz}
        \lstick{$\rho$} & \gate[style={fill=yellow!50}, wires=1][1cm]{\mathcal{N}_{\theta}} & \qw &  \gate[style={fill=red!50}, wires=1][1cm]{\mathcal{R}_{\theta}} & \qw & \rstick{\hspace{-0.5cm}$\mathcal{R}_{\theta}\circ \mathcal{N}_{\theta}(\rho) = \rho$} \\
    \end{quantikz}
    \caption{Recovery with perfect knowledge (time flows from left to right). The quantum memory is prepared in the quantum state $\rho$. The recovery channel $\mathcal{R}_{\theta}$ is implemented using perfect knowledge of the noise parameter $\theta \in \Theta$ of the environment noise $\mathcal{N}_{\theta}$.}
    \label{fig:circuit1}
\end{figure}

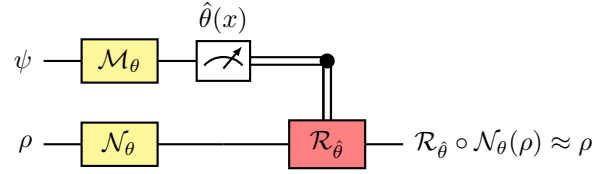
\begin{figure}[t]
    \begin{quantikz}
        & \lstick{$\psi$} & \gate[style={fill=yellow!50},wires=1][1cm]{\mathcal{M}_{\theta}}& \meter{$\hat{\theta}(x)$} & \cwbend{1}\\
        & \lstick{$\rho$} & \gate[style={fill=yellow!50}, wires=1][1cm]{\mathcal{N}_{\theta}} & \qw &  \gate[style={fill=red!50}, wires=1][1cm]{\mathcal{R}_{\hat{\theta}}} \vcw{-1} & \qw & \rstick{\hspace{-0.5cm}$\mathcal{R}_{\hat{\theta}}\circ \mathcal{N}_{\theta}(\rho) \approx \rho$} \\
    \end{quantikz}
    \caption{Recovery with limited knowledge (time flows from left to right). The quantum memory (second register) and spectator (first register) systems are prepared in the quantum states $\rho$ and $\psi$, respectively. The recovery channel $\mathcal{R}_{\hat{\theta}}$ is implemented based on the spectator's best estimate $\hat{\theta}$ of the noise parameter $\theta \in \Theta$ of the environment noise $\mathcal{N}_{\theta}$. The estimate is informed by the measurement outcome $x$ of the spectator observable $X$, following the spectator dynamics $\mathcal{M}_{\theta}$.}
    \label{fig:circuit2}
\end{figure}

\begin{figure*}
	\centering
	\begin{subfigure}[][][t]{0.45\textwidth}
		\includegraphics[width=\textwidth]{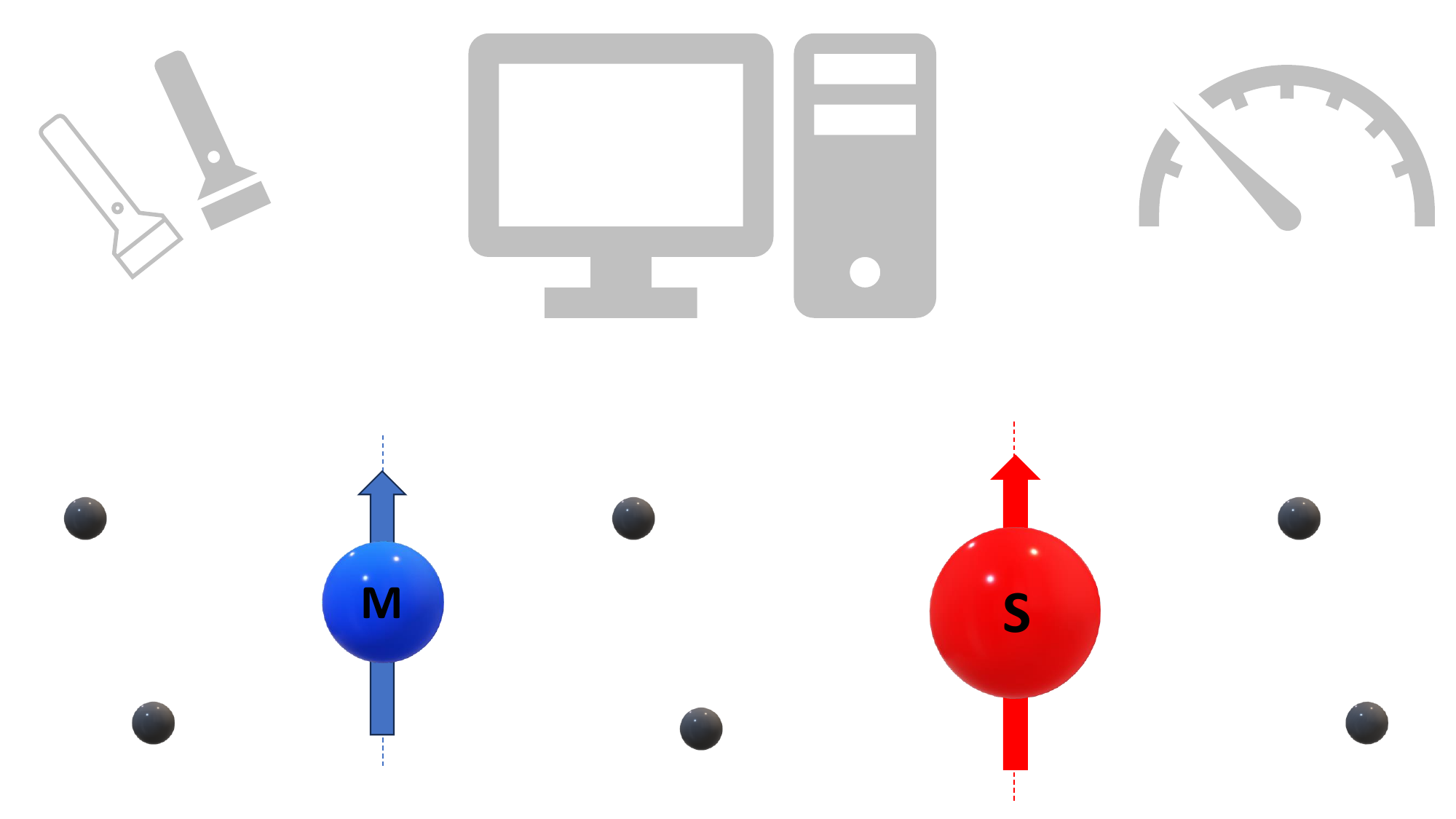}
		\caption{\textit{State Preparation}: A desired state of the quantum memory $M$, and some metrologically useful state of the spectator system $S$, are prepared.}
		\label{fig:first}
	\end{subfigure}
	\hfill
	\begin{subfigure}[][][t]{0.45\textwidth}
		\includegraphics[width=\textwidth]{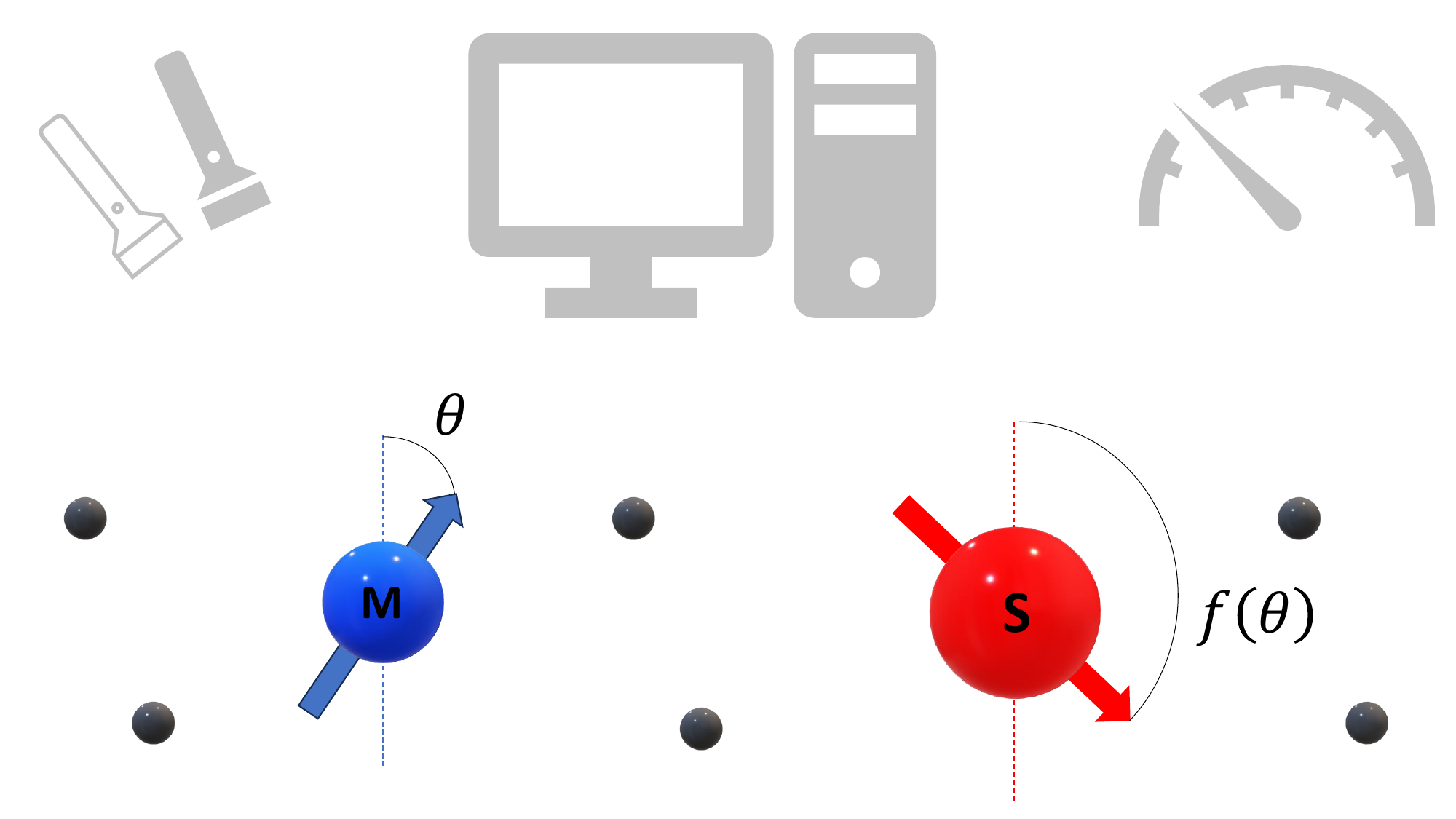}
		\caption{\textit{Free Evolution}: Due to the interaction with their joint environment, the states of the memory and the spectator evolve. The evolution of $M$ is parameterized by $\theta$, and the evolution of $S$ by some function $f(\theta)$ of $\theta$.}
		\label{fig:second}
	\end{subfigure}
	\hfill
	\begin{subfigure}[][][c]{0.45\textwidth}
		\includegraphics[width=\textwidth]{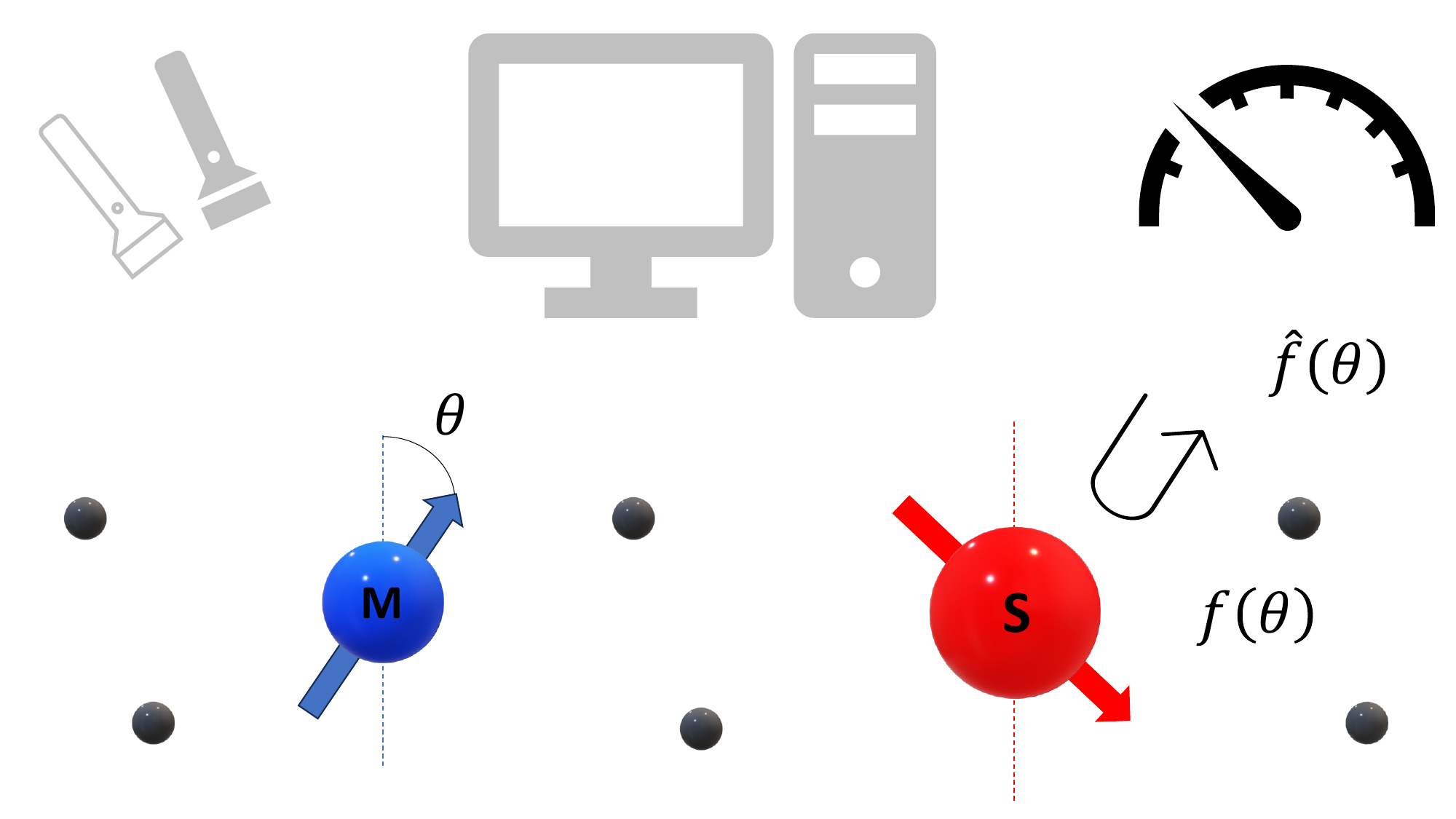}
		\caption{\textit{Quantum Parameter Estimation}: The spectator system is used as a real-time quantum sensor (probe) to find the best estimate $\hat{\theta}$ of the noise parameter $\theta$.}
		\label{fig:third}
	\end{subfigure}
	\hfill
	\begin{subfigure}[][][c]{0.45\textwidth}
		\includegraphics[width=\textwidth]{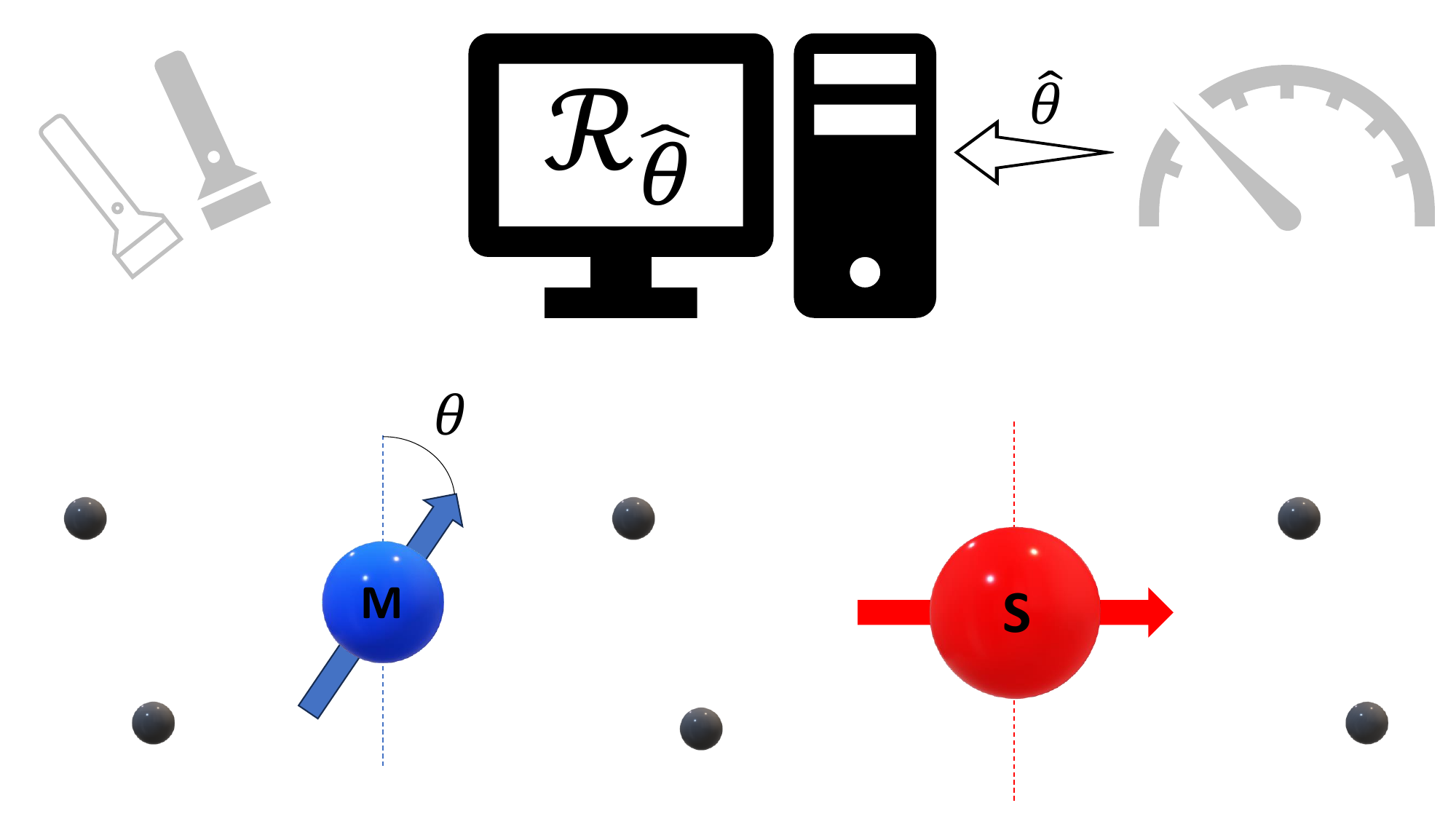}
		\caption{\textit{Post-Processing}: Where the best estimate $\hat{\theta}$ is used to obtain a ``best-guess'' recovery map $\mathcal{R}_{\hat{\theta}}$, which is optimal, given the incomplete knowledge of the true value of the noise parameter $\theta$. This map is generally different from the truly optimal recovery map $\mathcal{R}_{\theta}$, corresponding to the parameterized dynamics of the quantum memory.}
		\label{fig:fourth}
	\end{subfigure}
    \hfill
	\begin{subfigure}[][][b]{0.45\textwidth}
		\includegraphics[width=\textwidth]{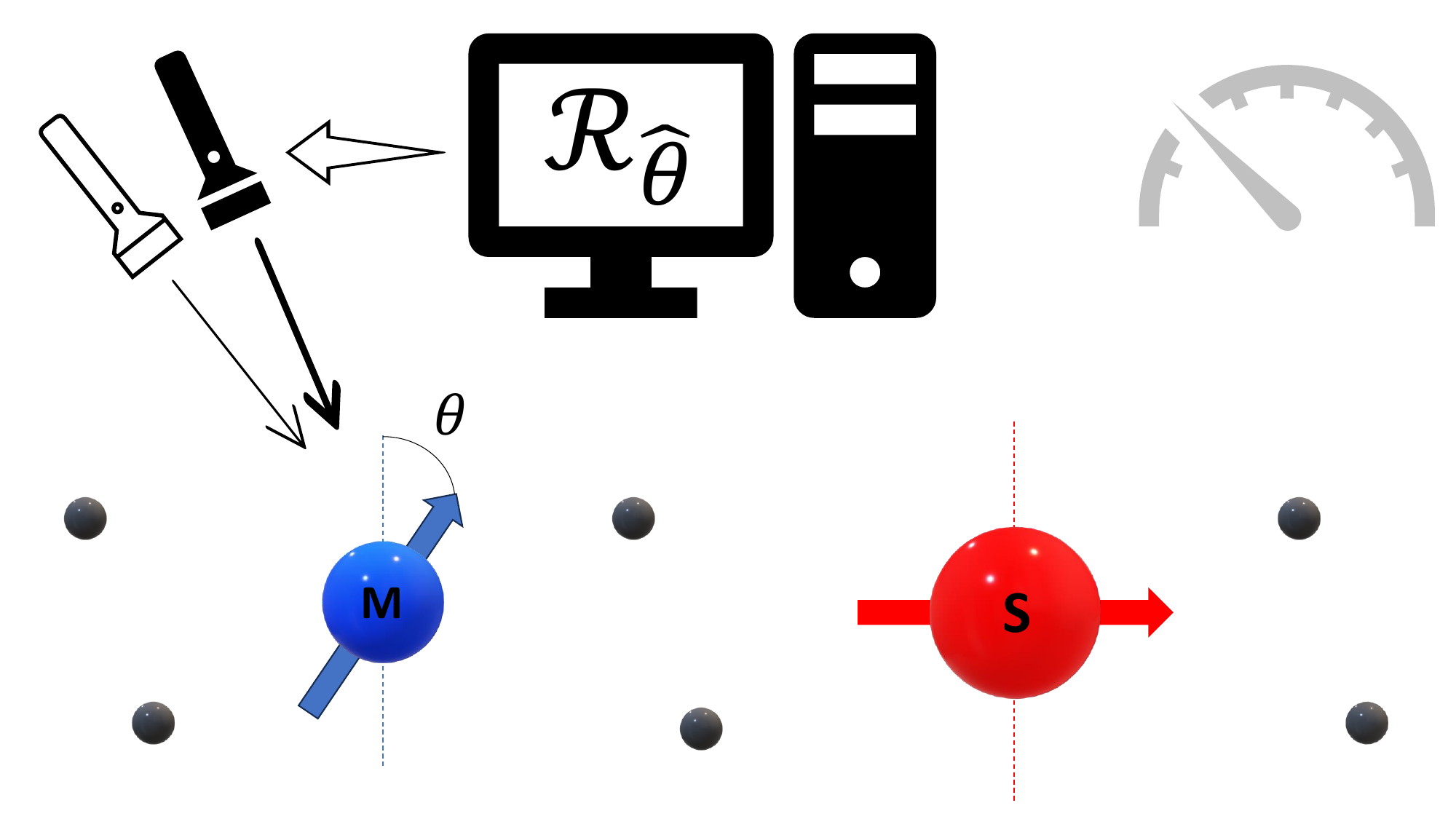}
		\caption{\textit{Best-Guess Recovery}: Where the ``best-guess'' recovery map $\mathcal{R}_{\hat{\theta}}$ is applied to the quantum memory, to recover (perfectly or approximately) the quantum information encoded on its initial state.}
		\label{fig:fifth}
	\end{subfigure}
	\hfill
	\begin{subfigure}[][][b]{0.45\textwidth}
		\includegraphics[width=\textwidth]{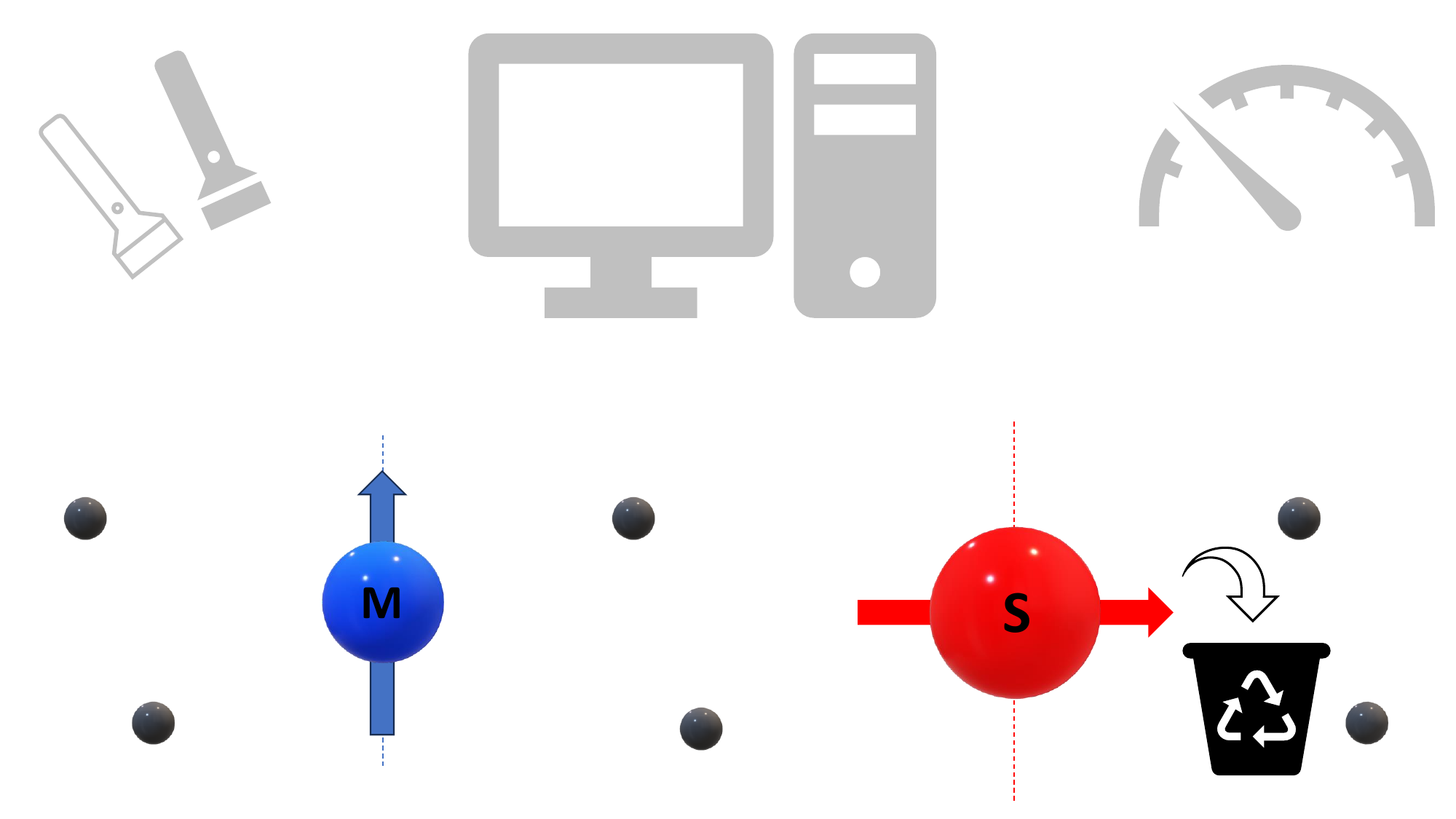}
		\caption{\textit{Spectator System Recycling}: The final step is to recycle the state of the spectator system to prepare it for the next recovery cycle.}
		\label{fig:sixth}
	\end{subfigure}
	\caption{Cartoon description of spectator-based recovery protocols (temporal order corresponds to the alphabetical order of the subfigures). The letters ``$M$'' and ``$S$'' stand for quantum memory and spectator system, respectively. The black dots represent the environment spins that contribute to the noise. This protocol combines two important disciplines of quantum information theory: quantum parameter estimation and recovery of quantum information.}
	\label{fig:protocol}
\end{figure*}

In this article, I formalize the above intuition by proving information-theoretic and metrological bounds on the performance of spectator-based recovery protocols for adaptive quantum memories, considering QEC as an example. The main results of the article are summarized below
\begin{enumerate}
    \item Derivation of a lower bound for the diamond distance between any two quantum channels (Lemma~\ref{co:examples}). This generalizes a lower bound for the diamond distance between a quantum channel and the identity channel in \cite{ouyang2023approximate}, a qudit depolarizing channel and the identity channel in \cite{pirandola2019fundamental}, as well as the analytic formula in \cite{benenti2010computing} for the diamond distance between two-qubit depolarizing channels. A similar lower bound is shown for generalized distinguishability measures, such as entropy or fidelity-based distinguishability measures (e.g. quantum relative entropy or Bures distance, respectively) is found in Theorem~\ref{th:main_th} of Appendix~\ref{apx:gen_dist_meas}.
    %\item Proof of a lower bound for generalized distinguishability measures between any two quantum channels (Theorem~\ref{th:main_th}). I show that the lower bound is a function of the entanglement fidelities of the individual channels, where the form of the function depends on the choice of the distinguishability measure used. %This further justifies the use of entanglement fidelity in QEC \cite{schumacher1996sending} as a measure of how successful the state recovery is, as is common in literature, e.g. \cite{reimpell2005iterative, fletcher2007optimum, fletcher2008channel, ballo2009robustness, kosut2008robust, kosut2009quantum}.
    %\item When the distinguishability measure is the diamond distance, the lower bound can be computed analytically and is equal to the difference between the entanglement fidelities of the two channels (Lemma~\ref{le:exact_diamond} and Lemma~\ref{co:examples}). This generalizes a previous observation in \cite{pirandola2019fundamental} for the diamond distance between the identity channel and a depolarizing channel for qudits, as well as the analytic formula in \cite{benenti2010computing} for the diamond distance between two depolarizing channels for qubits. 
    \item A general formulation of the spectator-based recovery protocol as two consecutive, but complementary, tasks in a real-time quantum memory (Section~\ref{sec:general_spec_based_recov}): (i) a multi-parameter quantum estimation in the presence of nuisance parameters, and (ii) recovery of quantum information using the ``best-guess'' recovery map, as informed by the estimation outcome.
    \item Derivation of Information-theoretic costs of adaptation in spectator-based recovery protocols. This is shown both for finite (Theorem~\ref{th:finite_var}) and small (Theorem~\ref{th:spec_QFI}) estimation errors of the noise parameters. The latter yields a metrological lower bound, in terms of the quantum Fisher information of the spectator dynamics. The adaptation cost is illustrated for the [4,1] code of the amplitude-damping channel (Figs.~\ref{fig:[4,1]_circuit} and \ref{fig:[4,1]fund}), by comparing the performance of the spectator-based recovery protocol with the corresponding optimal recovery protocol \cite{fletcher2008channel, fletcher2007optimum} when no adaptation is required.
    %\item Information-theoretic characterization of the contribution of spectator systems in limited-knowledge recovery protocols. This characterization has two forms: (1) the diamond distance between the optimal and best-guess recovery operations for the given quantum noise (Theorem~\ref{th:spec_info_theo}), and (2) the quantum Fisher information of the spectator dynamics (Theorem~\ref{th:spec_QFI}). This is illustrated for the [4,1] code of the amplitude-damping channel. I show that improvements upon the well-known approximate code \cite{leung1997approximate} that are achieved by channel adaptation techniques \cite{fletcher2007optimum, fletcher2008channel} (which is noise parameter dependent) are partially preserved, even in the incomplete knowledge scenario (Figs.~\ref{fig:[4,1]fund} and \ref{fig:[4,1]ent_fid}).
    \item Reformulation of an upper bound for the entanglement fidelity of concatenated quantum channels (building upon a theorem of \cite{carignan2019bounding}) in the form of recurrence inequalities for multi-cycle recovery protocols (Lemma~\ref{th:rec_ent_fid}). These bounds are growing in relevance, as multiple QEC rounds have been demonstrated in practice \cite{cramer2016repeated}. It is shown that spectator-based recovery protocols, under conditions of varying noise, could outperform optimal recovery protocols with constant noise (Theorem~\ref{th:spec_multi}). This is exclusively a multi-cycle phenomenon, where errors from different cycle numbers can cohere \cite{carignan2019bounding}, including errors from incomplete knowledge of the noise parameter. Finally, this is illustrated for the [4,1] code of the amplitude-damping channel (Fig.~\ref{fig:multicycle}). Similar error coherence effects have been seen recently in the context of quantum state transfer in quantum networks, between gate and readout errors \cite{thotakura2022quantum}.
\end{enumerate}

\section{Preliminaries}
\subsection{Quantum States and Channels}
Let $\mathcal{H}$ denote a Hilbert space, and $\mathcal{L(H)}$ be the set of bounded linear operators acting on~$\mathcal{H}$. Denote by $\mathcal{L_{+}(H)}$ the subset of positive semi-definite operators of $\mathcal{L}(\mathcal{H})$. We define the Hilbert-Schmidt inner product between two linear operators $A$, $B \in \mathcal{L}(\mathcal{H})$ to be $\langle A, B \rangle \coloneqq \mathrm{Tr}[A^{\dagger}B]$. The state of a physical system is described by a density matrix $\rho \in \mathcal{D}(\mathcal{H})$, where $\mathcal{D}(\mathcal{H})$ is the subset of positive semi-definite linear operators $\mathcal{L}_{+}(\mathcal{H})$ that have a unit trace. We denote the dimensions of a Hilbert space $\mathcal{H}$ by $d \coloneqq \text{dim}\mathcal{H}$.

A linear map from $\mathcal{L}(\mathcal{H}^{A})$ to  $\mathcal{L}(\mathcal{H}^{B})$ is denoted by $\mathcal{Q}^{A\rightarrow B}:\mathcal{L}(\mathcal{H}^{A})\rightarrow \mathcal{L}(\mathcal{H}^{B})$. We say that a linear map is positive if  $\mathcal{Q}^{A\rightarrow B}(L_{A}) \in \mathcal{L}_{+}(\mathcal{H}^{B})$ for all $L_{A} \in \mathcal{L}_{+}(\mathcal{H}^{A})$, and trace preserving (TP) if $\mathrm{Tr}[\mathcal{Q}^{A\rightarrow B}(L_{A})]=\mathrm{Tr}[L_{A}]$ for all   $L_{A} \in \mathcal{L}(\mathcal{H}^{A})$. A positive linear map $\mathcal{Q}^{A\rightarrow B}$ is called completely positive (CP) if for every Hilbert space $\mathcal{H}^{R}$, the map $\textsf{id}^{R}\otimes \mathcal{Q}^{A\rightarrow B}$ is positive, where $\textsf{id}^{R}$ is the identity map acting on $\mathcal{L}(\mathcal{H}^{R})$. For any $\mathcal{Q}^{A \rightarrow B}$ and $\mathcal{S}^{B \rightarrow C}$ linear maps, their composition is defined to be $(\mathcal{S}\circ \mathcal{Q})^{A \rightarrow C}(L_{A}) \coloneqq \mathcal{S}(\mathcal{Q}(L_{A}))$, for all  $L_{A} \in \mathcal{L}(\mathcal{H}^{A})$. We define the adjoint map $\left( \mathcal{Q}^{A\rightarrow B} \right)^{\dagger}$ of a linear map $\mathcal{Q}^{A\rightarrow B}$ with respect to the Hilbert-Schmidt inner product as $\langle \mathcal{Q}^{A\rightarrow B}(N_{A}), M_{B}\rangle=\langle N_{A}, (\mathcal{Q}^{A\rightarrow B})^{\dagger}(M_{B})\rangle$ for all $N_{A} \in \mathcal{L}(\mathcal{H}^{A})$ and $M_{B} \in \mathcal{L}(\mathcal{H}^{B})$. More explicitly, if $\{Q_{i}\}_{i=1}^{K}$ are the Kraus operators of the CP map $\mathcal{Q}^{A \rightarrow B}$ (see Eq.~\eqref{eqn:kraus}), then the Kraus operators of the adjoint map $\left( \mathcal{Q}^{A\rightarrow B} \right)^{\dagger}$ are given by $\{Q_{i}^{\dagger}\}_{i=1}^{K}$.

The Choi Matrix of any linear map $\mathcal{Q}^{A\rightarrow B}$ is defined to be 
\begin{equation}
    \Gamma^{\mathcal{Q}}_{RB} \coloneqq \textsf{id}^{R}\otimes \mathcal{Q}^{A\rightarrow B}(|\Gamma \rangle \! \langle \Gamma |_{RA}) \; ,
\end{equation}
where $| \Gamma \rangle_{RA}\coloneqq \sum_{i=0}^{d-1}|i\rangle_{R} |i\rangle_{A}$ is the unnormalized maximally entangled state, with $d \equiv \text{dim}\mathcal{H}^{A}=\text{dim}\mathcal{H}^{R}$. The corresponding Choi state is defined as $\Phi_{RB}^{\mathcal{Q}}\coloneqq \Gamma^{\mathcal{Q}}_{RB}/d$. The linear map $\mathcal{Q}^{A \rightarrow B}$ is TP if and only if its Choi matrix satisfies $\mathrm{Tr}_{B}[\Gamma^{\mathcal{Q}}_{RB}]=I_{R}$, and CP if and only if its Choi matrix is positive, i.e. $\Gamma^{\mathcal{Q}}_{RB} \ge 0$.

In what follows, we suppress the system subscript and/or superscript if it does not lead to ambiguities. Every CP map $\mathcal{Q}^{A\rightarrow B}$ admits a Kraus decomposition
\begin{equation}
    \mathcal{Q}(\cdot) = \sum_{i=1}^{K}Q_{i}(\cdot)Q^{\dagger}_{i} \;, \label{eqn:kraus}
\end{equation}
in terms of Kraus operators $\{ Q_{i} \}_{i=1}^{K}$. If $\mathcal{Q}^{A\rightarrow B}$ is also TP, then $\sum_{i=1}^{K}Q_{i}^{\dagger}Q_{i} =I_{A}$ holds.

\subsection{Diamond Distance Between Quantum Channels}
The trace norm of any linear operator $L \in \mathcal{L}(\mathcal{H})$ is given as $\Vert L \Vert_{1} \coloneqq \operatorname{Tr}\left( \vert L\vert \right)$, where we have denoted by $\vert L \vert \coloneqq \sqrt{L^{\dagger}L}$. Therefore, we define the trace distance between any two quantum states $\rho, \sigma \in \mathcal{D}(\mathcal{H})$ to be $\frac{1}{2}\Vert \rho-\sigma \Vert_{1}$. More generally, we define the diamond distance $\frac{1}{2} \Vert \mathcal{Q}^{A}-\mathcal{S}^{A} \Vert_{\diamond}$ between two quantum channels $\mathcal{Q}^{A}$ and $\mathcal{S}^{A}$ as follows
\begin{equation}
    \sup_{\rho_{RA}} \frac{1}{2} \left \Vert \left(\textsf{id}^{R}\otimes \mathcal{Q}^{A} \right)(\rho_{RA})- \left(\textsf{id}^{R}\otimes \mathcal{S}^{A} \right)(\rho_{RA}) \right \Vert_{1} \; .
\end{equation}

\subsection{Quantum Fidelities}
Given any two quantum states $\rho, \sigma \in \mathcal{D}(\mathcal{H})$ with $\text{dim}\mathcal{H} \equiv d$, we define the fidelity function as follows
\begin{equation}
    F(\rho, \sigma)\coloneqq \left(\mathrm{Tr}\left[\sqrt{\sqrt{\rho}\sigma\sqrt{\rho}}\right]\right)^{2}=\Vert \sqrt{\rho} \sqrt{\sigma} \Vert_{2}^{2} \; .
\end{equation}
If one of the two state, say $\sigma \equiv |\psi\rangle \! \langle \psi|$, is pure, then we have $F(\rho, \psi)=\langle \psi|\rho|\psi\rangle$. Based on this definition, various fidelities that are relevant in QEC and other areas of quantum information have been defined. One such quantity is called entanglement fidelity $F_{e}$ of a channel $\mathcal{Q}^{A}\equiv \mathcal{Q}^{A \rightarrow A}$ with respect to a state $\rho \in \mathcal{D}(\mathcal{H}^{A})$ \cite{schumacher1996quantum, schumacher1996sending}, which is given by 
\begin{align}
    F_{e}(\mathcal{Q}, \rho) &\coloneqq F(\textsf{id}^{R}\otimes \mathcal{Q}^{A}(\psi^{\rho}_{RA}), \psi^{\rho}_{RA}) \\ &=\langle \psi^{\rho}|\textsf{id}^{R} \otimes \mathcal{Q}^{A}(\psi_{RA}^{\rho})|\psi^{\rho}\rangle_{RA}\;, \label{eqn:ent_fid}
\end{align}
where $\psi^{\rho}_{RA} \in \mathcal{H}^{R}\otimes \mathcal{H}^{A}$ is a purification of the density matrix $\rho \in D(\mathcal{H}^{A})$, i.e. $\mathrm{Tr}_{R}\psi_{RA}^{\rho}=\rho_{A}$. It can be shown that the entanglement fidelity is independent of the particular choice of the purification, following from the fact that the former can be expressed in terms of the Kraus operators of $\mathcal{Q}^{A}$ as \cite{schumacher1996sending, schumacher1996quantum}
\begin{equation}
    F_{e}(\mathcal{Q}, \rho)=\sum_{i=1}^{K}\mathrm{Tr}[\rho Q_{i}]\mathrm{Tr}[\rho Q_{i}^{\dagger}]=\sum_{i=1}^{K}\vert \mathrm{Tr}[\rho Q_{i}]\vert^{2} \; . \label{eqn:ent_fid_kraus}
\end{equation}

The entanglement fidelity of a quantum channel $\mathcal{Q}^{A}$ is defined to be the entanglement fidelity of $\mathcal{Q}^{A}$ with respect to the maximally mixed state $\rho_{A}=I_{A}/d$ \cite{reimpell2005iterative} (which is purified by the maximally entangled state $|\Phi \rangle_{RA}$). This can also be written in terms of the Choi state of $\mathcal{Q}^{A}$, as follows
\begin{align}
    F_{e}(\mathcal{Q}) &\equiv F_{e}\left(\mathcal{Q}, \frac{I}{d}\right) \\
    &=\langle \Phi|\textsf{id}^{R} \otimes \mathcal{Q}^{A}(\Phi_{RA})|\Phi \rangle_{RA} \\
    &=\langle \Phi| \Phi^{\mathcal{Q}}_{RA}|\Phi \rangle_{RA} \\ &=F(\Phi^{\mathcal{Q}}, \Phi)\; . \label{eqn:ent_choi}
\end{align}
Another important fidelity measure of the form $F(\rho, \psi)=\langle \psi|\rho|\psi\rangle$ is the average channel (gate) fidelity $F_{\text{avg}}(\mathcal{Q})$, defined for any $|\psi\rangle  \in \mathcal{H}^{A}$ and CPTP map $\mathcal{Q}^{A}$ as
\begin{equation}
    F_{\text{avg}}(\mathcal{Q}) \coloneqq \int d \psi \langle \psi | \mathcal{Q}(\psi) |\psi \rangle \; ,
\end{equation}
where the discrete version has appeared in \cite{schumacher1996quantum, schumacher1996sending}. In \cite{horodecki1999general}, the authors have shown that the average and entanglement fidelities are related by
\begin{equation}
    F_{\text{avg}}(\mathcal{Q})=\frac{dF_{e}(\mathcal{Q})+1}{d+1}\; . \label{eqn:HHH}
\end{equation}
Finally, in what follows, we also use the simplifying notation 
\begin{equation}
    \delta^{\mathcal{Q}} \coloneqq \arccos{\sqrt{F_{e}(\mathcal{Q})}} \; \label{eqn:err_angle}, 
\end{equation}
which can be interpreted in the $\chi$-matrix representation of quantum channels (see Appendix~\ref{sec:chi_matrix}) as the ``error angle'' by which the Kraus operators of the noisy channel $\mathcal{Q}$ deviate from the desired ``no error'' normalized basis element $B_{0}=I/\sqrt{d}$ (where $\langle B_{0}, B_{0} \rangle=1$) of the vector space $\mathcal{L}(\mathcal{H})$. It turns out that the error angle notation is very convenient when expressing the average fidelity of composite channels in terms of the individual average channel fidelities \cite{carignan2019bounding}. 

\section{Lower-Bounding Diamond Distance Using Entanglement Fidelity}
We now consider the diamond distance between any two quantum channels and show that it is lower bounded by the difference between their entanglement fidelities. This is generalized in Theorem~\ref{th:main_th} of Appendix~\ref{apx:gen_dist_meas}, where we show that the lower bound for any generalized distinguishability measure between the two channels is still fully determined by their entanglement fidelities. Besides the diamond distance considered here, fidelity and entropy-based distinguishability measures, such as Bures distance and quantum relative entropy, are also used to quantify the performance of recovery protocols, e.g. in Refs.~\cite{kubica2021using, beny2010general} and Refs.~\cite{junge2018universal, buscemi2016approximate}, respectively. Therefore, the results of this section (as well as the following sections) could be generalized for other distinguishability measures, in the light of Appendix~\ref{apx:gen_dist_meas}. 

The diamond distance is especially relevant for two reasons: $(1)$ it has a clear operational meaning in terms of the maximum probability of distinguishing between two channels in a quantum channel discrimination task \cite{acin2001statistical}, and $(2)$ it satisfies the triangle inequality and hence also the chaining property, which is useful for bounding errors in fault-tolerant quantum computing (see Appendix~\ref{sec:chaining}). 

We start by proving the following:

\begin{lemma} \label{le:exact_diamond}
For any two depolarizing channels $\tilde{\mathcal{Q}}^{A\rightarrow A}$ and $\tilde{\mathcal{S}}^{A \rightarrow A}$, the diamond distance between them is equal to the difference between their entanglement fidelities, namely 
\begin{equation}
        \frac{1}{2}\Vert \tilde{\mathcal{Q}}-\tilde{\mathcal{S}} \Vert_{\diamond} =
        \vert F_{e}(\tilde{\mathcal{Q}})-F_{e}(\tilde{\mathcal{S}}) \vert \; .
    \end{equation}
    
\end{lemma}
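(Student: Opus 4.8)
The plan is to reduce everything to a two-parameter family of states and compute a scalar optimization. Since $\tilde{\mathcal{Q}}$ and $\tilde{\mathcal{S}}$ are both depolarizing channels, write $\tilde{\mathcal{Q}}(\rho) = (1-p)\rho + p\,\tfrac{I}{d}$ and $\tilde{\mathcal{S}}(\rho) = (1-q)\rho + q\,\tfrac{I}{d}$, so that $\tilde{\mathcal{Q}} - \tilde{\mathcal{S}} = (q-p)\bigl(\,\cdot\, - \tfrac{I}{d}\mathrm{Tr}[\,\cdot\,]\bigr)$ as a Hermiticity-preserving map. The diamond distance is $\tfrac12\|\tilde{\mathcal{Q}}-\tilde{\mathcal{S}}\|_\diamond = |q-p|\cdot \tfrac12\|\Delta\|_\diamond$ where $\Delta(\rho) = \rho - \tfrac{I}{d}\mathrm{Tr}[\rho]$, using absolute homogeneity of the diamond norm. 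So the whole computation collapses to evaluating $\tfrac12\|\Delta\|_\diamond$ for this single fixed map, and then checking that $|q-p|\cdot \tfrac12\|\Delta\|_\diamond = |F_e(\tilde{\mathcal{Q}}) - F_e(\tilde{\mathcal{S}})|$.

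For the last identity, I would use that $F_e$ of a depolarizing channel with parameter $p$ equals $(1-p) + p/d^2$ — this follows from Eq.~\eqref{eqn:ent_choi} together with $F_{\text{avg}} = 1 - \tfrac{d-1}{d}p$ and the Horodecki relation Eq.~\eqref{eqn:HHH}, i.e. $F_e(\tilde{\mathcal{Q}}) = \tfrac{(d+1)F_{\text{avg}} - 1}{d} = \tfrac{(d+1)(1 - \tfrac{d-1}{d}p) - 1}{d}$, which simplifies to $1 - p\tfrac{d^2-1}{d^2}$. Hence $F_e(\tilde{\mathcal{Q}}) - F_e(\tilde{\mathcal{S}}) = (q-p)\tfrac{d^2-1}{d^2}$, so it remains to show $\tfrac12\|\Delta\|_\diamond = \tfrac{d^2-1}{d^2}$.

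For that, I would exploit the fact that $\Delta$ is covariant under the full unitary group, so (by joint convexity / the argument already used in the excerpt, or by Remark~\ref{le:leditzky}) the supremum in the diamond-norm optimization is attained on the maximally entangled state $\Phi_{RA}$. Then $\tfrac12\|\Delta\|_\diamond = \tfrac12\|(id\otimes\Delta)(\Phi_{RA})\|_1$. Computing: $(id\otimes\tilde{\mathcal{Q}})(\Phi) = (1-p)\Phi + p\,\tfrac{I}{d^2}$, so $(id\otimes\Delta)(\Phi) = \Phi - \tfrac{I}{d^2}$, whose eigenvalues are $1 - \tfrac1{d^2}$ (multiplicity one, on the maximally entangled vector) and $-\tfrac1{d^2}$ (multiplicity $d^2-1$). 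The trace norm is therefore $(1-\tfrac1{d^2}) + (d^2-1)\tfrac1{d^2} = 2\tfrac{d^2-1}{d^2}$, giving $\tfrac12\|\Delta\|_\diamond = \tfrac{d^2-1}{d^2}$ as needed, and assembling the pieces yields $\tfrac12\|\tilde{\mathcal{Q}}-\tilde{\mathcal{S}}\|_\diamond = |q-p|\tfrac{d^2-1}{d^2} = |F_e(\tilde{\mathcal{Q}})-F_e(\tilde{\mathcal{S}})|$.

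The main obstacle is the justification that the diamond-norm supremum is actually achieved on $\Phi_{RA}$ rather than merely bounded by its value there; the $\Phi_{RA}$ evaluation gives a lower bound on $\|\Delta\|_\diamond$ for free, so the real content is the matching upper bound. I would get this either by invoking unitary covariance of $\Delta$ together with the symmetrization argument in Remark~\ref{le:leditzky} (a $1$-design suffices to force the optimal input state to be maximally entangled), or, failing that, by a direct estimate: for any pure $\psi_{RA}$ with reduced state $\rho_A$, $(id\otimes\Delta)(\psi) = \psi_{RA} - \tfrac{I_R}{d}\otimes\rho_A$, and one bounds $\|\psi_{RA} - \tfrac{I_R}{d}\otimes\rho_A\|_1 \le \|\psi_{RA}\|_1 + \|\tfrac{I_R}{d}\otimes\rho_A\|_1$ refined by the fact that $\psi_{RA}$ is rank one, to show the value never exceeds $2\tfrac{d^2-1}{d^2}$. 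I expect the covariance route to be the cleanest and would present that, relegating the direct estimate to a remark if needed.
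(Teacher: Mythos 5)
Your proposal is correct and its skeleton coincides with the paper's: the same reduction $\tilde{\mathcal{Q}}-\tilde{\mathcal{S}}=(q-p)\bigl(id-\tfrac{I}{d}\mathrm{Tr}\bigr)$, the same identity $F_{e}=1-p\tfrac{d^{2}-1}{d^{2}}$ via the Horodecki relation, so that everything hinges on showing $\tfrac12\Vert id-\tfrac{I}{d}\mathrm{Tr}\Vert_{\diamond}=\tfrac{d^{2}-1}{d^{2}}$. Where you genuinely diverge is in evaluating that constant. The paper attacks it through the primal semidefinite program of Watrous for the diamond norm, diagonalizing $\Gamma_{RA}-I_{RA}/d$ and arguing that the optimal $\sigma_{RA}$ must be supported on the single positive eigenvector with normalization $z\le 1/d$. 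You instead invoke covariance of $id-\tfrac{I}{d}\mathrm{Tr}$ under the full (irreducible) unitary group to pin the diamond-norm optimizer to the maximally entangled state, and then read off the trace norm of $\Phi-I/d^{2}$ from its spectrum --- the same eigenvalue list the paper obtains, just rescaled by $d$. Your route is cleaner and shorter, but it outsources the matching upper bound to the standard fact that for channels covariant with respect to an irreducible representation the diamond distance is attained on $\Phi_{RA}$; this is true (a Pauli/Heisenberg--Weyl $1$-design already suffices, via purifying the group average and using data processing under dephasing the group register), but it is a nontrivial lemma that you should state and either prove or cite precisely, since the paper's Remark on covariant channels only asserts reduction to symmetric states and the irreducibility step is what actually forces $\rho_{A}=I/d$. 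The SDP route, by contrast, is fully self-contained. One caution on your fallback: the naive triangle-inequality estimate $\Vert\psi_{RA}-\rho_{R}\otimes\tfrac{I_{A}}{d}\Vert_{1}\le 2$ is strictly weaker than what you need; to salvage it you must use that $\psi_{RA}-\rho_{R}\otimes\tfrac{I_{A}}{d}$ is traceless with at most one positive eigenvalue (by Weyl, since $\psi_{RA}$ has rank one) and then bound that eigenvalue by $1-1/d^{2}$, which takes some work --- so the covariance route really is the one to present. (Also, $(id\otimes\Delta)(\psi)$ equals $\psi_{RA}-\rho_{R}\otimes\tfrac{I_{A}}{d}$, not $\psi_{RA}-\tfrac{I_{R}}{d}\otimes\rho_{A}$; the two have equal trace norms by the Schmidt-basis swap, so this is immaterial, but worth fixing.)
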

\begin{proof}
    Assume that $\tilde{\mathcal{Q}}$ and $\tilde{\mathcal{S}}$ are depolarizing channels with depolarizing parameters $p^{\mathcal{Q}}$ and $p^{\mathcal{S}}$, respectively. Namely,
    \begin{align}
        \tilde{\mathcal{Q}}^{A}&=(1-p^{\mathcal{Q}})\textsf{id}^{A}+p^{\mathcal{Q}}\frac{I_{A}}{d_{A}}\mathrm{Tr}_{A} \; ,  \\
        \tilde{\mathcal{S}}^{A}&=(1-p^{\mathcal{S}})\textsf{id}^{A}+p^{\mathcal{S}}\frac{I_{A}}{d_{A}}\mathrm{Tr}_{A} \; ,
    \end{align}
    and hence
    \begin{align}
        \textsf{id}^{R}\otimes \tilde{\mathcal{Q}}^{A}&=(1-p^{\mathcal{Q}})\textsf{id}^{RA}+p^{\mathcal{Q}} \mathrm{Tr}_{A}\otimes \frac{I_{A}}{d_{A}} \; ,  \\
        \textsf{id}^{R}\otimes \tilde{\mathcal{S}}^{A}&=(1-p^{\mathcal{S}})\textsf{id}^{RA}+p^{\mathcal{S}}\mathrm{Tr}_{A}\otimes  \frac{I_{A}}{d_{A}}\; ,
    \end{align}
    yields for the diamond distance
\begin{align}
    \frac{1}{2}\Vert \tilde{\mathcal{Q}}-\tilde{\mathcal{S}} \Vert_{\diamond} &= \kappa\left \vert p^{\mathcal{S}}-p^{\mathcal{Q}} \right \vert \\
    &= \frac{d \kappa}{d-1}\vert F_{\text{avg}}(\tilde{\mathcal{Q}})-F_{\text{avg}}(\tilde{\mathcal{S}}) \vert \\ &= \frac{d^{2} \kappa}{d^{2}-1}\vert F_{e}(\tilde{\mathcal{Q}})-F_{e}(\tilde{\mathcal{S}}) \vert
    \; , \label{eqn:diamond_lower}
\end{align}
where 
\begin{equation}
    \kappa(d) \equiv \frac{1}{2} \sup_{\psi_{RA}}\left\Vert \psi_{RA}-\rho_{R}\otimes \frac{I_{A}}{d_{A}} \right \Vert_{1} \; ,
\end{equation}
and we have used Eq.~\eqref{eqn:HHH}. We can rewrite $\kappa$ using the diamond distance between the identity and the replacement channel, as follows
\begin{equation}
    \kappa(d) = \frac{1}{2}\left\Vert \textsf{id}^{A}-\frac{I_{A}}{d}\mathrm{Tr}_{A} \right\Vert_{\diamond} \; .
\end{equation}
Next, we use the semi-definite program for the normalized diamond norm \cite{watrous2009semidefinite}
    \begin{align}
        \frac{1}{2}&\left\Vert \textsf{id}^{A} -\frac{I_{A}}{d}\mathrm{Tr}_{A} \right\Vert_{\diamond} \\ &=\sup_{\sigma_{RA}, \rho_{R}} \mathrm{Tr}_{RA}\left[ \sigma_{RA}\left( \Gamma_{RA}^{id}-\Gamma_{RA}^{\frac{I_{A}}{d}\mathrm{Tr}_{A}} \right) \right] \\
        &=\sup_{\sigma_{RA}, \rho_{R}} \mathrm{Tr}_{RA}\left[ \sigma_{RA}\left( \Gamma_{RA}-\frac{I_{RA}}{d} \right) \right] \; , \label{eqn:SDP_1}
    \end{align}
where the supremum is taken over all positive matrices $0 \leq \sigma_{RA} \leq \rho_{R}\otimes I_{A} $, and $\rho_{R} \in \mathcal{D}(\mathcal{H}^{R})$. Consider the eigenvalues of the $d^{2}\times d^{2}$ matrix $\Gamma_{RA}-I_{RA}/d$. First, if we denote some fixed eigenvalue of $\Gamma_{RA}$ by $\gamma$, then the corresponding eigenvalue of $\Gamma_{RA}-I_{RA}/d$ is $\gamma-1/d$. Next, let us show that $d^{2}-1$ of the $d^{2}$ eigenvalues of $\Gamma_{RA}$ are zero. This follows by considering the kernel space of $\Gamma_{RA}$ (the zero eigenvalue subspace), denoted by $\text{ker}(\Gamma_{RA}) \subset \mathcal{H}^{R}\otimes \mathcal{H}^{A}$. Namely, for all $|\psi\rangle_{RA} \in \text{ker}(\Gamma_{RA})$ we have, by definition, $\Gamma_{RA}|\psi\rangle_{RA}=0$. This is true for all states $|\psi\rangle_{RA}$ for which $\langle \Gamma|\psi\rangle_{RA}=0$, i.e. $|\psi\rangle_{RA} \in (\text{span}\{|\Gamma\rangle \})^{\perp}$, which is the ($d^{2}-1$)-dimensional orthogonal complement of the one-dimensional subspace $\text{span}\{|\Gamma\rangle\}\subset \mathcal{H}^{R}\otimes \mathcal{H}^{A}$. Finally, we note that the only non-zero eigenvalue $\gamma_{0}$ of $\Gamma_{RA}$ is determined by the trace $\mathrm{Tr}_{RA}\left[\Gamma_{RA}\right]=d$, and hence $\gamma_{0}=d$. Therefore, the eigenvalues of $\Gamma_{RA}-I_{RA}/d$ are given by the list
\begin{equation}
    \text{eigenval}\left(\Gamma-\frac{I}{d}\right)=\left\{\frac{d^{2}-1}{d}, -\frac{1}{d}, -\frac{1}{d}, \cdots, -\frac{1}{d}\right\} \; .
\end{equation}
As we can see, only one of the eigenvalues of $\Gamma_{RA}-I_{RA}/d$ is positive. Therefore, we write the spectral decomposition of the matrix $\Gamma_{RA}-I_{RA}/d$ as follows
\begin{equation}
    \Gamma_{RA}-\frac{I_{RA}}{d}=\frac{d^{2}-1}{d}|\gamma_{0}\rangle \! \langle \gamma_{0} | -\frac{1}{d}\sum_{i=1}^{d^{2}-1}|\gamma_{i}\rangle \! \langle \gamma_{i} | \; ,
\end{equation}
where $\{|\gamma_{i} \rangle\}_{i=0}^{d^{2}-1} $ is its orthonormal eigenbasis.

It follows that, to maximize the argument of Eq.~\eqref{eqn:SDP_1}, we need to consider the support of the positive semi-definite operator $\sigma_{RA}$ to be in the (one-dimensional) support of $\Gamma_{RA}$ (which is orthogonal to $\text{ker}(\Gamma_{RA})$), i.e. we need to search for $\sigma_{RA}$ in the form $\sigma_{RA}=z|\gamma_{0}\rangle \! \langle \gamma_{0} |$ for some $z\ge 0$. Substituting into the constraint $\sigma_{RA} \leq \rho_{R}\otimes I_{A}$ gives
\begin{align}
    &\rho_{R}\otimes I_{A}-z|\gamma_{0}\rangle \! \langle \gamma_{0} | \ge 0 \label{eqn:deriv_1} \\ 
    \Rightarrow & \langle \Gamma|\rho_{R}\otimes I_{A}|\Gamma \rangle_{RA}-z|\langle \Gamma|\gamma_{0}\rangle |^{2} \ge 0 \\
    \Rightarrow & \mathrm{Tr}[\rho]-z\langle \gamma_{0}|\Gamma|\gamma_{0}\rangle_{RA} \ge 0 \\ 
    \Rightarrow &1-zd \ge 0 \Rightarrow z\leq \frac{1}{d} \; .
\end{align}
Consequently, if there exists $ \rho_{R} \in \mathcal{D}(\mathcal{H}^{R})$ such that $\sigma_{RA}$ is fully in the support of $\Gamma_{RA}$, i.e. $\sigma_{RA}=z|\gamma_{0}\rangle \! \langle \gamma_{0}|_{RA}$ (where $\Gamma|\gamma_{0}\rangle=d|\gamma_{0}\rangle$), then it must be the case that the normalization $z \leq 1/d$. The resulting maximization in Eq.~\eqref{eqn:SDP_1} will thus yield for $\kappa(d)$
\begin{align}
    \sup_{\sigma_{RA}, \rho_{R}}\mathrm{Tr}_{RA}\left[ \sigma_{RA}\left(\Gamma_{RA}-\frac{I_{RA}}{d}\right) \right] &=\frac{d^{2}-1}{d}z_{\text{max}} \label{eqn:deriv_2} \\ &=\frac{d^{2}-1}{d^{2}} \; ,
\end{align}
or equivalently,
\begin{align}
    \frac{1}{2}\Vert \tilde{\mathcal{Q}}-\tilde{\mathcal{S}} \Vert_{\diamond} &= \frac{d^{2} \kappa(d)}{d^{2}-1}\vert F_{e}(\tilde{\mathcal{Q}})-F_{e}(\tilde{\mathcal{S}}) \vert \\ &= \vert F_{e}(\tilde{\mathcal{Q}})-F_{e}(\tilde{\mathcal{S}}) \vert \; .
\end{align}
In the above analysis, we presumed the existence of a density matrix $\rho_{R}$ for which $\sigma_{RA}=z|\gamma_{0} \rangle \! \langle \gamma_{0}|_{RA} \leq \rho_{R}\otimes I_{A}$. It is easy to see that the pick $\rho_{R}=I_{R}/d$ satisfies the inequality $z|\gamma_{0} \rangle \! \langle \gamma_{0}|_{RA} \leq \rho_{R}\otimes I_{A}$, as well as allowing the normalization $z$ to reach its maximum value $z_{\text{max}}=1/d$.

Note that, if we extend the support  $\sigma_{RA}=z|\gamma_{0}\rangle \! \langle \gamma_{0}|+\sum_{i=1}^{d^{2}-1}z_{i}|\gamma_{i}\rangle \! \langle \gamma_{i}|$, then the above argument (starting from Eq.~\eqref{eqn:deriv_1}) still yields $z \leq 1/d$, while simultaneously leading to a sub-optimal outcome in Eq.~\eqref{eqn:deriv_2} due to the contribution of the negative eigenvalues of $\Gamma_{RA}-I_{RA}/d$. Taking $\sigma_{RA}$ to be off-diagonal in the $\{|\gamma_{i}\rangle\}_{i=0}^{d^{2}-1}$ does not change this argument.
\end{proof}

It is important to note that this lemma has been known previously for special cases, e.g. in \cite{pirandola2019fundamental, benenti2010computing} between qubit depolarizing maps ($d=2$) and between a qudit depolarizing map and the identity map ($p^{\mathcal{S}}=0$), respectively. However, Pirandola \textit{et al.} in \cite{pirandola2019fundamental} used a different technique to compute essentially the same quantity $\kappa(d)$ appearing in our derivation of Lemma~\ref{le:exact_diamond}, which crucially does not depend on the depolarizing parameters $p^{\mathcal{Q}}$ and $p^{\mathcal{S}}$.

We now prove a lower bound for the diamond distance between any two quantum channels with the same input and output spaces. We frame this as follows

\begin{lemma} \label{co:examples}
    For any two CPTP maps $\mathcal{Q}^{A\rightarrow A}$ and $\mathcal{S}^{A \rightarrow A}$, the diamond distance between them is lower bounded by the difference in their entanglement fidelities, namely
    \begin{equation}
        \frac{1}{2}\Vert \mathcal{Q}-\mathcal{S} \Vert_{\diamond} \ge
        \vert F_{e}(\mathcal{Q})-F_{e}(\mathcal{S}) \vert \; .
    \end{equation}
    %where the inequality is saturated if the joint convexity property is saturated for a set of unitary 2-designs and a uniform probability distribution over this set.
\end{lemma}
\begin{proof}
    It is known that any quantum supermap (a linear map from one quantum channel to another) that is a convex combination of Pauli unitary supermaps (also known as ``twirling'') renders any input channel $\mathcal{Q}$ into a depolarizing channel $\tilde{\mathcal{Q}}$ \cite{dankert2009exact}, with the same entanglement fidelity. The Lemma is then a direct consequence of applying the data-processing inequality to the diamond distance $\Vert \mathcal{Q}-\mathcal{S} \Vert_{\diamond}$ with respect to the Pauli twirling supermap \cite{gour2019comparison}, which yields  
    \begin{equation}
        \frac{1}{2}\Vert \mathcal{Q}-\mathcal{S} \Vert_{\diamond} \ge \frac{1}{2}\Vert \tilde{\mathcal{Q}}-\tilde{\mathcal{S}} \Vert_{\diamond}
         \; ,
    \end{equation}
    where $\tilde{\mathcal{Q}}$ and $\tilde{\mathcal{S}}$ are the resulting depolarizing channels \cite{horodecki1999general} (also see Lemma~\ref{le:twirling} in Appendix~\ref{apx:gen_dist_meas}). Then, we note that the right-hand side is found from
    Lemma~\ref{le:exact_diamond}. Finally, the proof is completed by the fact that any random unitary supermap preserves the entanglement fidelity \cite{horodecki1999general}, hence $F_{e}(\tilde{\mathcal{Q}})=F_{e}(\mathcal{Q})$ and $F_{e}(\tilde{\mathcal{S}})=F_{e}(\mathcal{S})$.
\end{proof}

\begin{remark}
    If one of the quantum channels is the identity, then a much simpler derivation could be found in the supplementary material of \cite{ouyang2023approximate} using the Fuchs-van de Graaf inequality for quantum channels, which yields a two-sided bound on the diamond distance.
\end{remark}

%\begin{remark}
    %One could argue that the above bound might not be tight as there are infinitely many such lower bounds corresponding to taking different twirlings in Lemma~\ref{le:gen_fund_lower}. instead write in the appendix and say: a simpler proof of this lower bound is also possible...
%\end{remark}

\section{Fundamental Bounds on Recovery with Incomplete Knowledge} \label{sec:single_cycle}
Here, we are interested in applying the lower bound derived in Lemma~\ref{co:examples} of the previous section to the spectator-based recovery setting, succinctly described in Fig.~\ref{fig:circuit2}. We will assume that we are given a parametric family of quantum channels $\{\mathcal{N}_{\theta}\}_{\theta \in \Theta}$ that is motivated from certain physical assumptions about the memory-environment interaction, where $\Theta$ is the allowed range of values for the noise parameter $\theta$. %Furthermore, a prior characterization of the true noise parameter $\theta=\theta_{0}$ is natural for the operation of general quantum memory. Therefore, we also presume the presence of prior statistics $\theta_{(0)}$ of the parameter $\theta_{0}$. Here, the notation $\theta_{(0)}$ indicates the list of observations of the random variable $\theta$ prior to the observation $\theta=\theta_{0}$.

\subsection{The Regime of Validity} \label{sub:validity}
Let us now identify four different noise instability regimes and then expand upon the relevant regime for this article. The four cases are described as follows:
\begin{enumerate}
    \item When neither the noise family $\{\mathcal{N}_{\theta}\}_{\theta \in \Theta}$ nor the true noise parameter $\theta$ change in time. This case is best described by the ``perfect'' knowledge scenario in Fig.~\ref{fig:circuit1}, and is the most common in literature. %The underlying assumption is the presence of sufficient characterization $\theta_{(0)}$ of $\theta=\theta_{0}$ prior to the operation of the quantum memory.
    \item When the noise family $\{\mathcal{N}_{\theta}\}_{\theta \in \Theta}$ remains valid, but the true noise parameter $\theta$ varies \textit{stroboscopically}. Namely, the characteristic timescale $\tau_{\theta}$ for the variations of the true value of $\theta$ is larger compared to the duration of a single recovery cycle $\Delta t_{\mathcal{R}}$. Therefore, variations in the noise parameter are on the timescale of multiple recovery cycles (see e.g. \cite{muller2015interacting} for superconducting qubits). In this regime, performing the real-time quantum estimation of the new value of $\theta$ for every recovery cycle becomes useful, hence the need for a spectator system. This regime is best described by the ``incomplete'' knowledge scenario in Fig.~\ref{fig:circuit2}. 
    \item When the noise family $\{\mathcal{N}_{\theta}\}_{\theta \in \Theta}$ remains valid, but the noise parameter $\theta$ varies non-negligibly during a single recovery cycle (i.e. $\tau_{\theta} \sim \Delta t_{\mathcal{R}}$). In this case, the usefulness of the classical side-information in the spectator-based recovery in Fig.~\ref{fig:circuit2} is no longer clear. Instead, this noise regime might benefit from continuously applied recovery, e.g. \cite{kwon2022reversing}. Alternatively, a robustness approach (as opposed to adaptation) might also be suitable (see below). %of the recovery with respect to noise variations in the same cycle (this is also discussed below). We do not consider this noise regime in the rest of the manuscript, as it presents a suboptimal regime for a spectator-based recovery, compared to the stroboscopic case.  
    \item When the noise family $\{\mathcal{N}_{\theta}\}_{\theta \in \Theta}$ is changing within a single recovery cycle. This noise regime will mainly benefit from the design of robust recovery protocols, e.g. in \cite{kosut2008robust, ballo2009robustness, huang2019robustness, layden2020robustness}, rather than the spectator-based recovery presented in Fig.~\ref{fig:circuit2}.
\end{enumerate}
In the rest of the article, we focus on the stroboscopic noise regime, and consider the advantages and limitations of using the spectator-based recovery protocol in Fig.~\ref{fig:circuit2}.

To measure the success of the recovery protocol, we recall that the goal is to achieve a complete (or, at least, an approximate) recovery $\mathcal{R}\circ \mathcal{N}_{\theta}(\rho)=\rho$ of the noisy channel $\mathcal{N}_{\theta}$ for a subset of states $\rho \in \mathcal{D}(\mathcal{C}) \subset \mathcal{D}(\mathcal{H})$ in the codespace $\mathcal{C}$. The ``optimality'' of the recovery map $\mathcal{R}$ for a given noisy channel $\mathcal{N}_{\theta}$ could be quantified in various ways. Motivated by Lemma~\ref{co:examples} for the diamond distance (and more generally Theorem~\ref{th:main_th} for all distinguishability measures), we choose the entanglement fidelity to be the quantifier of the optimal recovery, i.e.
\begin{equation}
    \mathcal{R}_{\theta}\coloneqq \text{argmax}F_{e}(\mathcal{R}\circ \mathcal{N}_{\theta}) \hspace{0.1cm} \text{;} \hspace{0.1cm} \mathcal{R}\in \text{CPTP}(\mathcal{H}) \; . \label{eqn:opt_recovery}
\end{equation}
Indeed, entanglement fidelity has been used as a figure of merit for QEC in e.g. \cite{reimpell2005iterative, fletcher2007optimum, fletcher2008channel, ballo2009robustness, kosut2008robust, kosut2009quantum}. We note that the concatenated form $\mathcal{R}\circ \mathcal{N}_{\theta}$ of the memory dynamics presupposes that the recovery map $\mathcal{R}$ is applied much faster than the noisy dynamics. In what follows, we shall compare two different scenarios:
\begin{itemize}
    \item \textit{Optimal recovery} scenario (Fig.~\ref{fig:circuit1}), which corresponds to the optimal choice of the recovery map $\mathcal{R}_{\theta}$ (as defined in Eq.~\eqref{eqn:opt_recovery}) for the noisy channel $\mathcal{N}_{\theta}$, where the value of $\theta \in \Theta$ is completely known. 
    \item \textit{Best-guess recovery} scenario (Fig.~\ref{fig:circuit2}), which corresponds to the optimal recovery choice $\mathcal{R}_{\hat{\theta}}$ (as defined in Eq.~\eqref{eqn:opt_recovery}) for the estimated noisy channel $\mathcal{N}_{\hat{\theta}}$, where $\hat{\theta}$ is the best estimate of $\theta$. The latter is defined to be the minimum variance unbiased estimator (MVUE). %Of course, the true noise parameter is $\theta$ and the noisy channel is $\mathcal{N}_{\theta}$, whereas $\mathcal{N}_{\hat{\theta}}$ is our best estimate of the true noise.
\end{itemize}

%In general, necessary and sufficient conditions for the existence of a perfect recovery operation for a fixed noise model (for all   $\rho \in \mathcal{D}(\mathcal{C})$) is given by the Knill-Laflamme theorem \cite{knill1997theory}. However, approximate QEC codes (AQEC) have also been considered using different figures of merit (e.g. entanglement fidelity \cite{leung1997approximate}, relative entropy \cite{junge2018universal}, Bures distance \cite{beny2010general}, etc.), as well as channel-adapted QEC (CAQEC) \cite{fletcher2008channel, fletcher2007channel} and Petz recovery maps \cite{ng2010simple} (the latter recently shown to be simulable on a quantum computer \cite{gilyen2022quantum}). A celebrated example of AQEC and CAQEC that we will consider shortly is the family of amplitude damping channels (see Eq.~\eqref{eqn:AD_channel}). 

\begin{remark} \label{rem:non_pauli}
    Not all QEC codes require a recovery channel $\mathcal{R}_{\theta}$ that depends on the noise parameter $\theta$. Such channels are known to be Pauli channels, where the recovery operation is fully determined by a subset of Pauli operators from the general Pauli group, as is known in the stabilizer formalism of QEC \cite{gottesman1997stabilizer, nielsen2002quantum}. Therefore, in what follows, we consider non-Pauli channels, of which, the generalized amplitude damping channel is a prime example \cite{nielsen2002quantum}. 
\end{remark}

\begin{figure}[t]
    \begin{quantikz}
        & \lstick{$\psi^{S}$} & \gate[style={fill=yellow!50},wires=2][1cm]{\mathcal{Z}_{\bm{\theta}}}& \meter{$\bm{\hat{\theta}}^{S}(x)$} & \cwbend{1}\\
        & \lstick{$\rho^{M}$} &  & \qw &  \gate[style={fill=red!50}, wires=1][1cm]{\mathcal{R}_{\bm{\hat{\theta}}^{M}}} \vcw{-1} & \qw \\
    \end{quantikz}
    \caption{The spectator ($S$) and memory ($M$) systems are subject to a single environment, characterized by the parametric family of quantum channels $\{\mathcal{Z}_{\bm{\theta}}\}_{\bm{\theta}\in \Theta}$. The mother channel $\mathcal{Z}^{MS}_{\bm{\theta}}$ generates the two local channels $\mathcal{N}^{MS \rightarrow M}_{\bm{\theta}^{M}}\equiv \operatorname{Tr}_{S}\circ \mathcal{Z}_{\bm{\theta}}$ and $\mathcal{M}^{MS \rightarrow S}_{\bm{\theta}^{S}}\equiv \operatorname{Tr}_{M}\circ \mathcal{Z}_{\bm{\theta}}$ describing the reduced dynamics of the memory and spectator systems, respectively.}
    \label{fig:multi_param}
\end{figure}
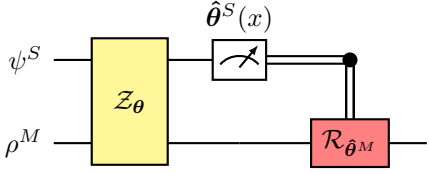

\subsection{A General Framework For Spectator-Based Recovery Protocols}
\label{sec:general_spec_based_recov}
In this section, we formulate the spectator-based recovery protocol (Fig.~\ref{fig:circuit2}) as the combination of two consecutive processes: (i) parameter estimation of the memory noise, using the spectator system, and (ii) application of the corresponding ``best-guess'' recovery map (informed by the estimated value of the noise parameter(s), rather than the true value) to the quantum memory. It is important to emphasize that the parameter estimation stage of spectator-based recovery is more resource efficient than a direct process tomography of the noise. This is due to the prior knowledge of the parametric family of noisy quantum channels, either from physical consideration, or even initial process tomography. This efficiency is especially important for operating a real-time quantum memory, which requires such resources to be replenished after each recovery cycle.

To set up a general framework for spectator-based recovery protocols, we start by considering the origins of the reduced memory ($M$) and spectator ($S$) dynamics in the general multi-parameter regime. In what follows, $MS$ denotes the mother system for both the quantum memory $M$ and the spectator system $S$, and its Hilbert space has the tensor product structure $\mathcal{H}^{MS}=\mathcal{H}^{M}\otimes \mathcal{H}^{S}$. Since both systems are subject to the same local environment, we assume that there is a mother channel $\mathcal{Z}^{MS}_{\bm{\theta}}$ (with a noise parameter vector $\bm{\theta} \in \Theta^{p}$, where $\Theta^{p}$ is a $p$-dimensional parameter space) that acts on the memory and spectator systems collectively, as shown in Fig.~\ref{fig:multi_param}. Without loss of generality, we assume that the components of the multiparameter vector $(\theta_{1}, \cdots, \theta_{p})\in \Theta^{p}$ are treated as independent parameters, namely that none of the components could be expressed as a function of the rest, e.g. $\theta_{i}=f(\theta_{1}, \cdots, \theta_{i-1}, \theta_{i+1}, \cdots, \theta_{p})$. Otherwise, for every such constraint, the number of independent parameters is reduced by one. The mother channel yields the reduced dynamics 
\begin{align}
    \mathcal{N}^{MS \rightarrow M}_{\bm{\theta}^{M}}&\equiv \operatorname{Tr}_{S}\circ \mathcal{Z}^{MS}_{\bm{\theta}} \; , \label{eqn:reduced_N} \\
    \mathcal{M}^{MS \rightarrow S}_{\bm{\theta}^{S}}&\equiv \operatorname{Tr}_{M}\circ \mathcal{Z}^{MS}_{\bm{\theta}} \; , \label{eqn:reduced_M}
\end{align}
where the partial tracing with respect to $S$ in the first equation defines the partition of the global set of noise parameters $\bm{\theta}$ into a subset of relevant parameters $\bm{\theta}^{M} \in \Theta^{p^{M}}$ for the reduced dynamics of $M$ (where in general $p^{M} \leq p$) and a complementary set of parameters $\bm{\theta}^{M\perp} \in \Theta^{p-p^{M}}$ that is irrelevant for the reduced dynamics (namely $\Theta^{p}=\Theta^{p^{M}}\times \Theta^{p-p^{M}}$). A similar partition of $\bm{\theta}=\{\bm{\theta}^{S}, \bm{\theta}^{S\perp}\}$ to relevant and irrelevant parameters for the spectator dynamics occurs when applying the partial trace with respect to $M$ to the mother channel (please see Appendix~\ref{apx:param_indep} for details on when this is possible). With these two natural partitions $\{\bm{\theta}^{M}, \bm{\theta}^{M\perp}\}$ and $\{\bm{\theta}^{S}, \bm{\theta}^{S\perp}\}$ of the parameter vectors $\bm{\theta}$, we can also decompose the latter via a joint partition, as follows
\begin{equation}
    \bm{\theta}=(\bm{\theta}^{S} \cap \bm{\theta}^{M}) \cup (\bm{\theta}^{S} \cap \bm{\theta}^{M\perp}) \cup (\bm{\theta}^{S \perp} \cap \bm{\theta}^{M}) \cup (\bm{\theta}^{S \perp} \cap \bm{\theta}^{M\perp}) \; . \label{eqn:4_partition}
\end{equation}
From this joint partition, it is clear that the spectator system $S$ can only help estimate the subset of memory noise parameters $\bm{\theta}^{S}_{\text{I}} \equiv \bm{\theta}^{S} \cap \bm{\theta}^{M} \subseteq \bm{\theta}^{M}$ (where the subscript ``I'' stands for ``interest'' parameters, as opposed to the ``nuisance'' parameters $\bm{\theta}^{S}_{\text{N}} \equiv \bm{\theta}^{S} \cap \bm{\theta}^{M \perp} \not\subseteq \bm{\theta}^{M}$ of the spectator dynamics \cite{suzuki2020quantum}, which are irrelevant for the memory dynamics). %This is a direct consequence of the fact that, in the spectator-based recovery setup, we only have access to local measurements in the quantum estimation step that is of the form $\Pi^{S}\otimes I^{M}$, where $\Pi^{S}=\{ \Pi_{x} \}_{x \in \mathcal{X}}$ is a POVM.
Therefore, we see that it is necessary to have $\Theta^{p^{M}} \subseteq \Theta^{p^{S}}$ for the quantum estimation task via the spectator system to be useful for identifying the relevant noise parameters $\bm{\theta}^{M}$ which affect the quantum memory. If this is not the case, then one would require multiple spectators $S_{1}, S_{2}, \cdots, S_{l}$ (from potentially different physical systems) such that the parameter space $\Theta^{p^{M}}$ is contained by the combined parameter subspaces $\cup_{i=1}^{l}\Theta^{p^{S_{i}}}$. For simplicity, we assume in the rest of the article that $p^{S}_{\text{I}}=p^{M}$.

In general, the reduced states of $S$ and $M$ following the application of Eqs.~\eqref{eqn:reduced_N} and \eqref{eqn:reduced_M} will depend on the \textit{global} input state $MS$. This observation still holds even if the input state of $MS$ is of product form $\rho^{MS}=\rho^{M}\otimes \psi^{S}$. Therefore, to arrive at local noise channels $\mathcal{N}^{M}_{\bm{\theta}^{M}}$ and $\mathcal{M}^{S}_{\bm{\theta}^{S}}$ (as shown in Fig.~\ref{fig:circuit2}) that are independent of the input states of $S$ and $M$, respectively, the mother channel itself has to be separable. Namely
\begin{equation}
    \mathcal{Z}^{MS}_{\bm{\theta}}=\mathcal{N}^{M}_{\bm{\theta}^{M}} \otimes \mathcal{M}^{S}_{\bm{\theta}^{S}} \; ,
\end{equation}
similar to the independent noise approximation that is commonly used in QEC literature. The input state $\rho^{MS}=\rho^{M}\otimes \psi^{S}$ is selected such that $\rho^{M}$ is the state of the quantum memory that we would like to protect from the noise, whereas $\psi^{S}$ is the probe state of the spectator system which we are technically free to choose to achieve the optimal precision in parameter estimation.

In Appendix~\ref{apx:QFIM}, we review relevant aspects of multi-parameter quantum estimation theory. It is known that when there are no nuisance parameters present ($p^{S}_{\text{I}}=p^{S}$, $p^{S}_{N}=0$), the quantum estimation limit of the parameters $\bm{\theta}^{S}_{\text{I}}\equiv \bm{\theta}^{M}$ is given by the partial symmetric logarithmic derivative (SLD) quantum Fisher information matrix (QFIM), which has dimensions of $p^{S}\times p^{S}$. However, when $p^{S}_{\text{N}}=p^{S}-p^{S}_{\text{I}}>0$ nuisance parameters are present, then the quantum estimation limit will be given by the $p^{S}_{\text{I}}\times p^{S}_{\text{I}}$ \textit{partial} SLD QFIM, which comprises a tighter lower bound on the estimation variance $\operatorname{Var}(\bm{\hat{\theta}}_{\text{I}}^{S})$ than the $p^{S}_{\text{I}}\times p^{S}_{\text{I}}$ standard SLD QFIM of the parameters $\bm{\theta}^{S}_{\text{I}}$. It is known that these two quantities are equal only if the nuisance parameters are informationally orthogonal to the parameters of interest \cite{barndorff1994inference}, namely when the SLD QFIM of $\bm{\theta}^{S}$ block diagonalizes, with blocks of dimensions $p^{S}_{\text{I}}$ and $p^{S}_{\text{N}}=p^{S}-p^{S}_{\text{I}}$, respectively. Furthermore, it has been shown in \cite{suzuki2020quantum} that, in the single parameter regime $p^{M}=1$ ($p^{S}_{\text{I}}=1$, $p^{S}_{N}=p^{S}-1$), the lower bound in the variance of any locally unbiased estimator $\hat{\theta}^{S}_{\text{I}}$ is achievable via an optimal measurement constructed from the eigenprojectors of the SLD operators of $\{\mathcal{M}_{\bm{\theta}^{S}}(\psi)\}_{\bm{\theta}^{S}}$. A similar optimal measurement construction achieving the QCRB is not known for $p^{M}>1$ in the presence of nuisance parameters. However, when the nuisance parameters are absent, the above optimal measurement saturates the QCRB for any $p^{M} \ge 1$ if and only if the SLD operators of different parameters commute \cite{liu2020quantum}. In the rest of the article, we consider the single parameter case $p^{M}=1$.

Finally, we would like to point out that the main theorems of this manuscript are generalizable to the multiparameter setting $p_{M}>1$, however, a full consideration of all its nuances are left for future work. This includes the incompatibility of different parameters \cite{ragy2016compatibility, belliardo2021incompatibility, lu2021incorporating}, which is an exclusive problem to the multiparameter regime. Further, it is known that the QCRB is a less tight version of the Holevo Cram\'er-Rao bound (HCRB) in multi-parameter quantum estimation theory. The latter is known to be efficiently computable \cite{albarelli2019evaluating, sidhu2021tight}, but it is generally saturated for collective measurements over different probes. Instead, if one is interested in local measurements (which is more practical for parameter estimation), then the Nagaoka-Hayashi (NH) bound \cite{nagaoka2005generalization} is the relevant bound, which is also efficiently computable \cite{conlon2021efficient}. A recent work \cite{hayashi2022tight} shows how these Cram\'er-Rao type bounds are unified under the umbrella of conic linear programming.

\subsection{Spectator Dynamics With No Nuisance Parameters}

In the introduction, as well as Figs.~\ref{fig:circuit2} and \ref{fig:[4,1]_circuit}, we have emphasized the fact that the spectator system need not be the same physical system as the computational or memory system. Consequently, the dynamics of the spectator system $\mathcal{M}_{\theta}$ is generally different from the dynamics $\mathcal{N}_{\theta}=\bigotimes_{i=1}^{n}\mathcal{N}_{\theta}^{(1)}$ of the $n$ memory qubits, though it still depends on the same environment noise parameter $\theta$.

\subsubsection{Independent Noise Approximation}
If the spectator system is made out of $s$ subsystems (e.g. qubits), then the independent noise model reads
\begin{equation}
    \mathcal{M}_{\theta}=\bigotimes_{i=1}^{m}\mathcal{M}^{(1)}_{\theta} \; , \label{eqn:M_spec}
\end{equation}
where $\mathcal{M}_{\theta}^{(1)}$ is a quantum channel acting on the $i$-th subsystem. Note that we assumed negligible spatial variability of the noise parameter $\theta$. The noise separability assumption need not mean that the qubit noises are uncorrelated, as classical correlation between the experienced noise parameters by different qubits is still possible in principle. The above separability assumption only means that the noise correlations are classical and hence non-entangling.

\subsubsection{Spectator Qubits As Memory Qubits With Controllable Environment Coupling}
To perform recovery with incomplete knowledge, we need to hypothesize a relation between the spectator and memory qubit dynamics, i.e. $\mathcal{M}_{\theta}^{(1)}$ and $\mathcal{N}_{\theta}^{(1)}$. Since both types of qubits are subject to the same noisy environment with potentially different coupling strengths, we hypothesise
\begin{equation}
    \mathcal{M}_{\theta}^{(1)}=\mathcal{N}_{f(\theta)}^{(1)} \; , \label{eqn:memo_to_spec}
\end{equation}
where $f(\theta) \in [0, 1]$ is a monotone increasing function of its argument. To justify this choice, consider the case where $\theta$ has the following form
\begin{equation}
    \theta = 1-e^{-t/T_{1}} \; ,
\end{equation}
where $T_{1}$ is the spin relaxation time \cite{breuer2002theory, etxezarreta2021time}. This is the case e.g. for the qubit amplitude-damping channel, which we consider both in Section~\ref{sec:single_cycle} and Section~\ref{sec:multi_cycle}. Then, by expressing $t/T_{1}$ in terms of $\theta$, we arrive at
\begin{equation}
    f_{\gamma}(\theta)=1-(1-\theta)^{\gamma} \; , \label{eqn:f_spec}
\end{equation}
where $\gamma = T_{1}^{\text{memo}}/T_{1}^{\text{spec}}$. The requirement that the spectator qubits should exhibit faster dynamics than the memory qubits translates to $\gamma > 1$. %The intuition on how to correlate $T_{2}$-s is much more transparent: co-located in the same magnetic field but with different gyromagnetic ratios. However, what is the intuition for $T_{1}$ physical models? It is not as clear (Ken)
Eq.~\eqref{eqn:memo_to_spec} could also be viewed from the point of view of recent progress in quantum control, e.g. via Hamiltonian amplification in bosonic systems \cite{arenz2017dynamical} or decoherence control in NV centers \cite{lei2017decoherence}, which yields controllable qubit coupling strengths. If such qubits are used to build a quantum memory, then a portion of these qubits could be reserved as spectators, and the environment coupling could be adjusted to maximize the sensitivity of the spectator qubits (see below).

\subsubsection{Quantifying The Physical Choice of Spectator Systems}
Given that a particular physical medium (e.g. a spin lattice with multiple spin species A, B, C, $\cdots$) is populated with memory qubits (say, spin species A), a natural question is whether the spectator qubits should be chosen from the same or different spin species. This question is especially relevant for QEC in hybrid spin registers e.g. in diamond \cite{taminiau2014universal, waldherr2014quantum}. To answer this question, we recall that in quantum estimation theory, the sensitivity of the spectator qubit dynamics $\psi \rightarrow \mathcal{M}^{(1)}_{\theta}(\psi)$ to the noise parameter $\theta$ is characterized by the QFI of the output state $\mathcal{M}^{(1)}_{\theta}(\psi)$. The sensing advantage of using a spectator qubit of a different species than the memory qubits is then determined by the ratio SM$=\textsf{I}_{\operatorname{QF}}(\mathcal{M}^{(1)}_{\theta}(\psi))/\textsf{I}_{\operatorname{QF}}(\mathcal{N}^{(1)}_{\theta}(\psi))$ (which we call the spectator multiplier) following the QCRB (please see Appendix~\ref{apx:QFIM} for a self-contained review of QFI and QCRB)
\begin{align}
    \operatorname{Var}(\hat{\theta}) & \ge \frac{1}{\textsf{I}_{\operatorname{QF}}(\mathcal{M}_{\theta}(\psi^{\otimes m}))} \\ &=\frac{1}{m \textsf{I}_{\operatorname{QF}}(\mathcal{M}^{(1)}_{\theta}(\psi))} \\
    & \equiv \frac{1}{n^{\prime}\textsf{I}_{\operatorname{QF}}(\mathcal{N}^{(1)}_{\theta}(\psi))}\; ,
\end{align}
where $n^{\prime}\equiv \text{SM}\times m$ indicates the equivalent number of physical memory qubit species used for sensing. In the case where our hypothesis in Eq.~\eqref{eqn:memo_to_spec} holds, we can use the property of the QFI for the change of parameters \cite{liu2020quantum}
    \begin{equation}
        \textsf{I}_{\operatorname{QF}}(\mathcal{N}^{(1)}_{\theta})=\textsf{I}_{\operatorname{QF}}(\mathcal{N}^{(1)}_{f(\theta)})\left( \frac{df}{d\theta} \right)^{2} \; ,
    \end{equation} 
which yields SM$=1/(df/d\theta)^{2}$. Therefore, the optimal spectator, in this case, is the one that experiences an effective parameter $\theta_{\text{eff}}\equiv f(\theta)$ with $f^{\prime}(\theta)=0$ at the actual value of the parameter (hence effectively yielding an asymptotic estimation regime $n^{\prime} \rightarrow \infty$ for a finite $m$). Realistically, since we do not have prior knowledge of the noise parameter $\theta$, the ideal $f(\theta)$ will be mostly constant, i.e. $f^{\prime}(\theta)=0$ (at least in the relevant variability range of $\theta$), with $f(0)=0$ and $f(1)=1$.  %(If I do this more generally (including the partial QFI), in the multi-parameter regime, then it might suffice as an answer to the referee questions.)

Finally, we note that the ratio SM quantifies the relative ``speed'' of the dynamics between the spectator and the memory qubits, following the relation between QFI and the Bures distance between two consecutive ``instances'' of a channel \cite{yuan2017fidelity}. This is important for the spectator-based recovery protocol, as the timely feedforward control of the memory qubits based on the classical side information from the spectator is crucial. %(can I quantify this?? No quantity is being used to quantify the speed/success of the feedforward) 

\subsection{Information-Theoretic Bounds}
Here we consider the metrological bounds associated with the spectator-based recovery protocol, following Lemma~\ref{co:examples} of the previous section.

\subsubsection{Fundamental Limitations For All Recovery Protocols}
Assume that we are given a noise channel $\mathcal{N}^{A\rightarrow B}$. By picking $\mathcal{Q}^{A} \equiv \mathcal{R}^{B\rightarrow A} \circ \mathcal{N}^{A \rightarrow B}$ and $\mathcal{S}^{A} \equiv \textsf{id}^{A}$, Lemma~\ref{co:examples} can be reframed in the context of noise recovery to read
\begin{equation}
    \frac{1}{2}\left \Vert \mathcal{R}\circ \mathcal{N}-\textsf{id} \right \Vert_{\diamond} \ge
    1-F_{e}(\mathcal{\mathcal{R}\circ \mathcal{N}})  \; . \label{eqn:QEC_lower}
\end{equation}
%\begin{corollary} \label{co:QEC_lower}
   % Given a noisy channel $\mathcal{N}^{A \rightarrow B}$ acting on an arbitrary system $A$ entangled with a reference system $R$ of the same Hilbert space dimensions, the recovery $\mathcal{R}^{B \rightarrow A}$ of the state of $A$ and $R$ is bounded from below, in the worst case scenario, by the entanglement fidelity of the overall dynamics $(\mathcal{R}\circ\mathcal{N})^{A\rightarrow A}$, namely
%\end{corollary}
We note that this is the lower bound of a two-sided bound on the diamond distance between a quantum channel and the identity channel, recently derived in \cite{ouyang2023approximate}. This lower bound holds for both recovery with perfect and incomplete knowledge scenarios in Figs.~\ref{fig:circuit1} and \ref{fig:circuit2}, respectively. Next, we consider the incomplete knowledge scenario and analyze the contribution of the spectator system to the lower bound.

\subsubsection{Metrological Cost of Spectator-Based Recovery Protocols} \label{sec:spec_sys_contribution}
Consider the scenario described in Fig.~\ref{fig:circuit2}, which is what we expect for real-time quantum memories. The best estimate $\hat{\theta}$ of the unknown $\theta \in \Theta$ is found by the spectator system for each time-interval over which the value of the \textit{stroboscopic} (slowly varying) variable $\theta$ is approximately constant. This characteristic timescale of the stroboscopic noise parameter $\theta$ should be larger than the combined characteristic times of the noisy $\mathcal{N}_{\theta}$ and the best-guess recovery $\mathcal{R}_{\hat{\theta}}$ dynamics. Consequently, the relevant total dynamics of the encoded system is given by $\mathcal{R}_{\hat{\theta}}\circ \mathcal{N}_{\theta}$. Compared to the ideal case where the noise parameter is known perfectly, the metrological cost associated with the adaptation of spectator-based quantum memories is given by Lemma~\ref{co:examples} as
\begin{equation}
    \frac{1}{2}\left\Vert \mathcal{R}_{\theta}\circ \mathcal{N}_{\theta} -\mathcal{R}_{\hat{\theta}}\circ \mathcal{N}_{\theta} \right\Vert_{\diamond} \ge F_{e}(\mathcal{R}_{\theta}\circ \mathcal{N}_{\theta})-F_{e}(\mathcal{R}_{\hat{\theta}}\circ \mathcal{N}_{\theta}) \; , \label{eqn:main_spec_lower}
\end{equation}
where we have substituted for the quantum channels  $\mathcal{Q}^{A}\equiv \mathcal{R}^{B \rightarrow A}_{\theta}\circ \mathcal{N}^{A \rightarrow B}_{\theta}$ and $\mathcal{S}^{A}\equiv \mathcal{R}^{B \rightarrow A}_{\hat{\theta}}\circ \mathcal{N}^{A \rightarrow B}_{\theta}$. Note that an upper bound to the diamond distance is given by 
$\frac{1}{2}\left\Vert \mathcal{R}_{\theta} -\mathcal{R}_{\hat{\theta}} \right\Vert_{\diamond}$, which follows from its definition.

We start by providing a lower bound to the right-hand side in Eq.~\eqref{eqn:main_spec_lower} for \textit{arbitrary finite} estimation errors $\hat{\theta}-\theta$, as follows
\begin{theorem} \label{th:finite_var}
    Consider a parameterized noise channel $\mathcal{N}^{A \rightarrow B}_{\theta}$, and two arbitrary recovery maps $\mathcal{R}^{B \rightarrow A}$ and $\tilde{\mathcal{R}}^{B \rightarrow A}$. Then, the difference between the corresponding entanglement fidelities of recovery is lower bounded by the Choi states of the individual quantum channels $\mathcal{N}_{\theta}$, $\mathcal{R}$, and $\tilde{\mathcal{R}}$, as follows 
    \begin{align}
        &  \vert F_{e}\left( \mathcal{R}\circ \mathcal{N}_{\theta} \right)-F_{e}( \tilde{\mathcal{R}}\circ \mathcal{N}_{\theta}) \vert  \nonumber \\
        & \hspace{0.2cm} \ge \left \Vert \Phi_{AB}^{\mathcal{R}}-\Phi_{AB}^{\tilde{\mathcal{R}}} \right \Vert_{\alpha} \times  \left \Vert \Phi^{\mathcal{N}_{\theta}}_{AB} \right \Vert_{\beta} \; ,
    \end{align}
    where $\alpha \in [0,1)$ and  $1/\alpha+1/\beta=1$ defines the $\beta<0$ H\"older dual to $\alpha$, and $\Vert X \Vert_{\alpha(\beta)}\coloneqq \left(\operatorname{Tr}[\vert X^{\alpha(\beta)} \vert]\right)^{1/\alpha(\beta)}$.
\end{theorem}
\begin{proof}
    Please see Appendix~\ref{axp:finite_est_error}.
\end{proof}

Applying this theorem particularly to the optimal $\mathcal{R}_{\theta}$ and best-guess $\mathcal{R}_{\hat{\theta}}$ recovery maps, defined via the optimization in Eq.~\eqref{eqn:opt_recovery}, yields a lower bound to the right-hand side in Eq.~\eqref{eqn:main_spec_lower} for arbitrary estimation errors $\hat{\theta}-\theta$.

In this article, we will mainly be interested in the small estimation error $\hat{\theta}-\theta$ case. Again, this is motivated by physical considerations. For example, temporal variations of $T_{1}$ and $T_{2}$ times in superconducting qubits have been studied extensively, e.g. in \cite{muller2015interacting, klimov2018fluctuations}. It was observed that such non-negligible temporal variations occur on timescales much longer ($\sim$ 1 second) than the timescale of a single QEC cycle. Therefore, it is sensible to consider the case where the noise parameter does not vary considerably within a single QEC cycle, and hence we expect $\hat{\theta}-\theta$ to also be small for each QEC cycle, and its effect only accumulates on the timescales of multiple QEC cycles, as observed e.g. in \cite{cramer2016repeated}. Hence, in the rest of the manuscript, we provide lower bounds for the spectator-based recovery protocol in this limit, unless stated otherwise.

We now present the main result of the article, which describes a metrological lower bound to the performance of the spectator-based recovery protocol in Fig.~\ref{fig:circuit2}, compared to the perfect knowledge case in Fig.~\ref{fig:circuit1}. For simplicity, this result applies when no nuisance parameters are present. A discussion of how the nuisance parameters will impact the lower bound is given later is Section~\ref{subsec:nuisance}.
\begin{theorem} \label{th:spec_QFI}
    Consider a parameterized noise channel $\mathcal{N}^{A \rightarrow B}_{\theta}$, and the corresponding optimal $\mathcal{R}_{\theta}^{B \rightarrow A}$ and best-guess $\mathcal{R}_{\hat{\theta}}^{B \rightarrow A}$ recovery maps. For small deviations $\hat{\theta} -\theta$ of the locally unbiased estimate $\hat{\theta}$ from the true value $\theta$, the difference between the corresponding entanglement fidelities of recovery is lower bounded by 
    \begin{align}
        &\mathbb{E}\left[F_{e}(\mathcal{R}_{\theta}\circ \mathcal{N}_{\theta})-F_{e}(\mathcal{R}_{\hat{\theta}}\circ \mathcal{N}_{\theta}) \right]_{p(x\vert \theta)} \nonumber \\  
        & \hspace{0.4cm} \ge \frac{g(\theta)}{\textsf{I}_{\operatorname{QF}}(\mathcal{M}_{\theta}(\psi))}-\mathbb{E}\left[R(\hat{\theta}-\theta)\right]_{p(x\vert \theta)} \label{eqn:thm_2_eqn_1}\; ,
    \end{align}
    where $g(\theta)$ denotes
    \begin{align}
        g(\theta) = -\frac{d_{B}}{2d_{A}}\operatorname{Tr}_{AB}\left[\left( \Phi_{AB}^{\mathcal{N}_{\theta}} \right)^{T} \partial_{\theta}^{2}\Phi^{\mathcal{R}_{\theta}}_{BA} \right] \label{eqn:thm_2_eqn_3} \; .
    \end{align}
    %where $d_{B}=2^{n}$ and $d_{A}=2^{k}$ for a generic $[n,k]$ QEC code.
    Further, $\textsf{I}_{\operatorname{QF}}(\mathcal{M}_{\theta}(\psi))$ denotes the $\operatorname{QFI}$ of the spectator dynamics $\mathcal{M}_{\theta}(\psi)$ (initialized in state $\psi$) and $\mathbb{E}(\cdot)_{p(x\vert \theta)}$ denotes the expectation with respect to the spectator's measurement statistics $p_{X}(x\vert \theta)$, corresponding to the measurement outcomes $x \in \mathcal{X}$ of the spectator observable $X=\sum_{x\in \mathcal{X}}x\Pi_{x}$.  
    Finally, $R(\hat{\theta}-\theta)\equiv \frac{1}{3!}\partial^{3}_{\nu}F_{e}(\mathcal{R}_{\theta + \nu_{0}}\circ \mathcal{N}_{\theta})(\hat{\theta}-\theta)^{3}$ is the Lagrange remainder of the Taylor series expansion of $F_{e}(\mathcal{R}_{\theta +\nu}\circ \mathcal{N}_{\theta})$ with respect to $\nu$, where $\nu_{0} \in [0, \hat{\theta}-\theta]$ is a constant. 
\end{theorem}
\begin{proof}
    The first part of the proof follows directly from Taylor expanding the entanglement fidelity $F_{e}(\mathcal{R}_{\theta +\nu}\circ \mathcal{N}_{\theta})$ with respect to the difference $\nu \equiv \hat{\theta}-\theta$ to the second order, and using the Lagrange form for the remainder, as follows
    \begin{align}
        & F_{e}(\mathcal{R}_{\theta}\circ \mathcal{N}_{\theta})-F_{e}(\mathcal{R}_{\hat{\theta}}\circ \mathcal{N}_{\theta}) \\
        & = F_{e}(\mathcal{R}_{\theta}\circ \mathcal{N}_{\theta})-F_{e}(\mathcal{R}_{\theta+\nu}\circ \mathcal{N}_{\theta}) \\ 
        & = \frac{1}{1!}\left. \left( -\frac{d}{d \nu} F_{e}(\mathcal{R}_{\theta +\nu}\circ \mathcal{N}_{\theta}) \right) \right \vert_{\nu = 0} \nu \nonumber \\ 
        & \hspace{0.2cm} + \frac{1}{2!} \left. \left(- \frac{d^{2}}{d\nu^{2}} F_{e}(\mathcal{R}_{\theta +\nu}\circ \mathcal{N}_{\theta}) \right) \right \vert_{\nu = 0} \nu^{2} \nonumber \\ & \hspace{0.4cm} - \frac{1}{3!}\frac{d^{3}}{d\nu^{3}} F_{e}(\mathcal{R}_{\theta +\nu_{0}}\circ \mathcal{N}_{\theta}) \nu^{3} \; ,
    \end{align}
    where $\nu_{0} \in [0, \nu]$ is a constant. Taking the expectation $\mathbb{E}[\cdot]_{p(x\vert \theta)}$ of both sides with respect to the spectator's measurement statistics $p_{X}(x\vert \theta)$ of the observable $X$, and recalling that $\hat{\theta}$ is a locally unbiased estimate of $\theta$ (and hence the QCRB applies), we see that
    \begin{equation}
        \mathbb{E}[\nu]_{p(x \vert \theta)}=0 \hspace{0.5cm} \text{and} \hspace{0.5cm} \mathbb{E}[\nu^{2}]_{p(x \vert \theta)} \ge \frac{1}{\textsf{I}_{\operatorname{QF}}(\mathcal{M}_{\theta}(\psi))} \; . \label{eqn:QCRB}
    \end{equation}
    This yields Eq.~\eqref{eqn:thm_2_eqn_1} of our theorem. To prove Eq.~\eqref{eqn:thm_2_eqn_3} of this theorem, we first show in Lemma~\ref{le:fid_Choi} of Appendix~\ref{apx:Choi_techniques} that the entanglement fidelity of the composite dynamics $\mathcal{R}_{\theta+\nu}\circ \mathcal{N}_{\theta}$ of the memory qubit is given by the individual Choi states of the noise and recovery maps, as follows
    \begin{equation}
        F_{e}(\mathcal{R}_{\theta+\nu}\circ \mathcal{N}_{\theta})=\frac{d_{B}}{d_{A}}\operatorname{Tr}_{AB}\left[\left( \Phi_{AB}^{\mathcal{N}_{\theta}} \right)^{T} \Phi^{\mathcal{R}_{\theta+\nu}}_{BA} \right] \; . \label{eqn:approx_Choi_int_fid}
    \end{equation}
    Then, Eq.~\eqref{eqn:thm_2_eqn_3} of our theorem is a direct consequence of differentiating this entanglement fidelity formula twice with respect to $\nu$ (assuming the Choi state of the optimal recovery map $\mathcal{R}_{\theta}$ is twice differentiable) for a fixed $\theta$, which yields
    \begin{align}
        \partial_{\nu}\Phi^{\mathcal{R}_{\theta+ \nu}} \vert_{\nu = 0} &=\partial_{\theta+\nu}\Phi^{\mathcal{R}_{\theta+\nu}}\vert_{\theta+\nu=\theta} \\ &=\partial_{\theta^{\prime}}\Phi^{\mathcal{R}_{\theta^{\prime}}}\vert_{\theta^{\prime}=\theta} \equiv \partial_{\theta}\Phi^{\mathcal{R}_{\theta}} \; . 
    \end{align}
    %Choi matrix $\Gamma^{\mathcal{R}_{\theta}}$
    %\begin{equation}
        %\Gamma_{BA}^{\mathcal{R}_{\theta + \nu}}=\Gamma_{BA}^{\mathcal{R}_{\theta}}+\partial_{\theta}\Gamma_{BA}^{\mathcal{R}_{\theta}}\nu+\frac{1}{2}\partial_{\theta}^{2}\Gamma_{BA}^{\mathcal{R}_{\theta}}\nu^{2}+\mathcal{O}(\nu^{3}) \; .
    %\end{equation}
\end{proof}
%\begin{lemma}
    %If the parameterized family of Choi states $\{\Phi^{\mathcal{R}_{\theta}}_{BA}\}$ of the optimal recovery $\mathcal{R}_{\theta}^{B \rightarrow A}$ defined in Eq.~\ref{eqn:opt_recovery} is twice differentiable, then we have
    %\begin{align}
        %\left .\frac{ d^{2}F_{e}(\mathcal{R}_{\theta+\nu}\circ \mathcal{N}_{\theta})}{d\nu^{2}} \right \vert_{\nu=0}=\frac{d_{B}}{d_{A}}\operatorname{Tr}_{AB}\left[\left( \Phi_{AB}^{\mathcal{N}_{\theta}} \right)^{T} \partial_{\theta}^{2}\Phi^{\mathcal{R}_{\theta}}_{BA} \right] \; .
    %\end{align}
%\end{lemma}
%\begin{proof}
    %This is a direct consequence of differentiating twice the entanglement fidelity formula in Lemma~\ref{le:fid_Choi} (see Appendix~\ref{apx:Choi_techniques}), along with the valid expansion of the (twice differentiable) Choi matrix $\Gamma^{\mathcal{R}_{\theta}}$
    %\begin{equation}
        %\Gamma_{BA}^{\mathcal{R}_{\theta + \nu}}=\Gamma_{BA}^{\mathcal{R}_{\theta}}+\partial_{\theta}\Gamma_{BA}^{\mathcal{R}_{\theta}}\nu+\frac{1}{2}\partial_{\theta}^{2}\Gamma_{BA}^{\mathcal{R}_{\theta}}\nu^{2}+\mathcal{O}(\nu^{3}) \; .
    %\end{equation}
%\end{proof}

%\textcolor{blue}{
%Let us find a condition on the Choi matrix of $\Phi^{\mathcal{R}_{\theta}}$ if $\mathcal{R}_{\theta}$ corresponds to the optimum recovery map in Eq.~\ref{eqn:opt_recovery}. 
%}

We note that the saturation of this lower bound is based on the saturation of the QCRB, and is discussed in Appendix~\ref{apx:QFIM}. Moreover, although Theorem~\ref{th:spec_QFI} is true for any input state $\psi$ of the spectator system, we would like to use the optimal probe state for the corresponding dynamics $\mathcal{M}_{\theta}$. Furthermore, a sufficient condition for the remainder term in Eq.~\eqref{eqn:thm_2_eqn_1} to be negligible is presented in Appendix~\ref{apx:remainder}. Further, we note that this theorem could also be interpreted from the information geometric perspective, e.g. in \cite{liu2020quantum, sidhu2020geometric}. 

\begin{remark}
    Due to the tensor product property of the QFI, $\textsf{I}_{\operatorname{QF}}(\sigma_{\theta}^{\otimes m})=m\textsf{I}_{\operatorname{QF}}(\sigma_{\theta})$, the QCRB yields zero variance only in the asymptotic limit. However, in realistic spectator-based recovery protocols (described by Fig.~\ref{fig:circuit2}), the asymptotic limit (i.e. implementing $m \rightarrow \infty$ spectator qubits) will necessarily mean that spatial variations of the noise parameter $\theta$ affecting the spectator qubits cannot be neglected. Therefore, we limit ourselves to the non-zero QCRB variance (finite sample case) and instead attempt to saturate this bound via optimal measurements and initial spectator state (see Appendix~\ref{apx:saturation}).
\end{remark}

Finally, note that the expected dependence of the spectator's contribution separates into the product of two functions: the first, $g(\theta)$, depends on the full dynamics of the memory system, and the second, $\operatorname{Var}(\hat{\theta})$,  depends on the full dynamics of the spectator system. It turns out that the function $g(\theta)$ could be computed analytically for simple single-qubit channels, such as for the amplitude-damping channel \cite{zhan2013entanglement}.

\subsubsection{Comparison With The Non-Adaptive Case}
\begin{figure}[t]
    \centering
    \includegraphics[width=0.95\linewidth]{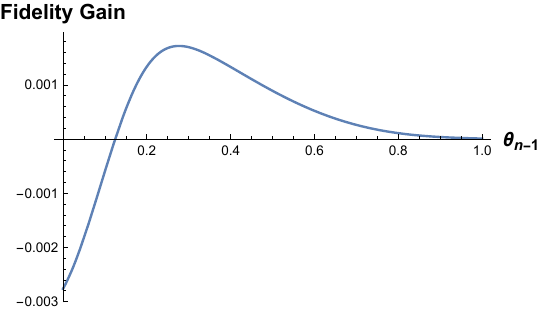}
    \caption{Plot of the fidelity gain Eq.~\eqref{eqn:adapt_adv} due to using a spectator-based recovery protocol for the [4,1] amplitude damping code (see Section~\ref{sec:AD_application_1}). The variation $\theta_{n-1} \rightarrow \theta_{n}$ is modeled to be a truncated Gaussian distribution with a maximum at $\theta_{n-1}$, and 0.1 standard deviation. The region of outperformance under these conditions is $\Theta_{\text{spec}}=[0.125, 1] \subset [0, 1]$.}
    \label{fig:adapt_regime}
\end{figure}

As discussed previously in Section~\ref{sub:validity}, the relative advantage of implementing a spectator-based recovery protocol depends on the characteristics of the noise. For example, if the noise is completely static, then implementing a spectator-based recovery will always be worse than simply characterizing the noise before the experiment (e.g. via process tomography), as there is no advantage to real-time quantum sensing of the noise, where only sparse data is available. Here, we provide a sufficient condition for a spectator-based adaptive protocol to outperform other (non-adaptive) recovery protocols for arbitrary finite estimation error $\hat{\theta}-\theta$.

Consider some parameter variation $\theta_{n-1}\rightarrow \theta_{n}$ from the $(n-1)$-th to the $n$-th QEC cycle. This corresponds to the change in the noise $\mathcal{N}_{\theta_{n-1}} \rightarrow \mathcal{N}_{\theta_{n}}$. In a spectator-based recovery protocol, this change is tracked via the spectator system, and the best-guess recovery is updated accordingly $\mathcal{R}_{\hat{\theta}_{n-1}} \rightarrow \mathcal{R}_{\hat{\theta}_{n}}$. The performance of this protocol will hence be quantified by the entanglement fidelity $F_{e}(\mathcal{R}_{\hat{\theta}_{n}}\circ \mathcal{N}_{\theta_{n}})$. On the other hand, a non-adaptive protocol will include applying a recovery map $\mathcal{R}_{\theta_{n-1}}$ that is (in the ideal case) optimal for the previous noise channel, i.e. $\mathcal{N}_{\theta_{n-1}}$. The performance of an ideal non-adaptive protocol will hence be quantified by the entanglement fidelity $F_{e}(\mathcal{R}_{\theta_{n-1}}\circ \mathcal{N}_{\theta_{n}})$. Therefore, to find a sufficient condition for adaptation to yield an advantage, we need to find a non-negative lower bound to the entanglement fidelity difference
\begin{equation}
     F_{e}(\mathcal{R}_{\hat{\theta}_{n}}\circ \mathcal{N}_{\theta_{n}})- F_{e}(\mathcal{R}_{\theta_{n-1}}\circ \mathcal{N}_{\theta_{n}}) \; . \label{eqn:adapt_adv}
\end{equation}
We accomplish this by using Theorem~\ref{th:finite_var} and Eq.~\ref{eqn:main_spec_lower}, as follows
\begin{align}
    &F_{e}(\mathcal{R}_{\hat{\theta}_{n}}\circ \mathcal{N}_{\theta_{n}})- F_{e}(\mathcal{R}_{\theta_{n-1}}\circ \mathcal{N}_{\theta_{n}}) \nonumber \\
    & = \left[F_{e}(\mathcal{R}_{\theta_{n}}\circ \mathcal{N}_{\theta_{n}})- F_{e}(\mathcal{R}_{\theta_{n-1}}\circ \mathcal{N}_{\theta_{n}})\right] \nonumber \\
    & \hspace{0.2cm} - \left[F_{e}(\mathcal{R}_{\theta_{n}}\circ \mathcal{N}_{\theta_{n}})- F_{e}(\mathcal{R}_{\hat{\theta}_{n}}\circ \mathcal{N}_{\theta_{n}})\right] \\
    & \ge \left \Vert \Phi^{\mathcal{R}_{\theta_{n-1}}}-\Phi^{\mathcal{R}_{\theta_{n}}} \right \Vert_{\alpha} \times  \left \Vert \Phi^{\mathcal{N}_{\theta_{n}}} \right \Vert_{\beta} \nonumber \\
    & \hspace{0.2cm}-\frac{1}{2}\left\Vert \mathcal{R}_{\theta_{n}}\circ \mathcal{N}_{\theta_{n}} -\mathcal{R}_{\hat{\theta}_{n}}\circ \mathcal{N}_{\theta_{n}} \right\Vert_{\diamond} \\
    & \ge \left \Vert \Phi^{\mathcal{R}_{\theta_{n-1}}}-\Phi^{\mathcal{R}_{\theta_{n}}} \right \Vert_{\alpha} \times  \left \Vert \Phi^{\mathcal{N}_{\theta_{n}}} \right \Vert_{\beta} \nonumber \\ & \hspace{0.2cm}-\frac{1}{2}\left\Vert \mathcal{R}_{\theta_{n}}-\mathcal{R}_{\hat{\theta}_{n}}\right\Vert_{\diamond} \; .
\end{align}
Therefore, a \textit{sufficient} (and initial state independent) condition for the spectator-based recovery protocol to be advantageous for an arbitrary estimation error $\hat{\theta}-\theta$, compared to an ideal non-adaptive protocol, is given by
\begin{equation}
    \frac{1}{2}\left\Vert \mathcal{R}_{\theta_{n}}-\mathcal{R}_{\hat{\theta}_{n}}\right\Vert_{\diamond} \leq c \times \left \Vert \Phi^{\mathcal{R}_{\theta_{n-1}}}-\Phi^{\mathcal{R}_{\theta_{n}}} \right \Vert_{\alpha} \; ,
\end{equation}
where $c \equiv \left \Vert \Phi^{\mathcal{N}_{\theta_{n}}} \right \Vert_{\beta}$. It is easy to see that if there is no change, i.e. $\theta_{n}=\theta_{n-1}$, then this condition is not satisfied. On the other hand, for a general (Markovian) stochastic jump model, with a $\theta_{n-1} \rightarrow \theta_{n}$ transition probability of $p(\theta_{n}\vert \theta_{n-1})$, there is a range of values $\Theta_{\text{spec}} \subseteq \Theta$ of the noise parameter $\theta_{n-1} \in \Theta$ such that the spectator based recovery exhibits an advantage. Furthermore, this advantage will accumulate over multiple QEC cycles. As an example, in Fig.~\ref{fig:adapt_regime}, we plot the fidelity gain in Eq.~\eqref{eqn:adapt_adv} for the [4,1] amplitude damping code (see the following section), showcasing the region of outperformance of the spectator-based recovery protocol over standard QEC \cite{fletcher2008channel} for a stroboscopically varying noise $\theta_{n-1} \rightarrow \theta_{n}$. For more details, please see Appendix~\ref{apx:[4,1]_ent_fid}.

\section{Application to The [4, 1] Code of The Amplitude-Damping Channel} \label{sec:AD_application_1}
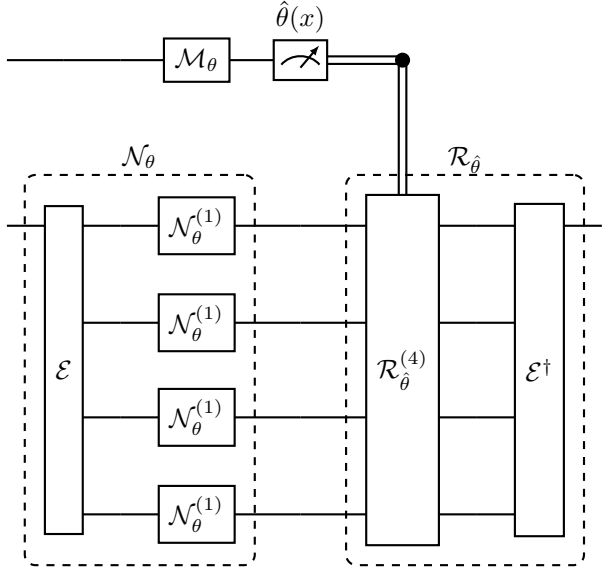
\begin{figure}[t]
    \begin{quantikz}
        & \qw & \qw & \gate{\mathcal{M}_{\theta}} & \meter{$\hat{\theta}(x)$} & \cwbend{3} & \\
        & & & & & & & & \\
        & & & & & & & & \\
        & \gate[4, nwires={2, 3, 4}]{\mathcal{E}}\gategroup[4, steps=3, style={dashed, rounded corners}]{$\mathcal{N}_{\theta}$} & \qw & \gate{\mathcal{N}_{\theta}^{(1)}} & \qw & \gate[4]{\mathcal{R}^{(4)}_{\hat{\theta}}}\gategroup[4, steps=3, style={dashed, rounded corners}]{$\mathcal{R}_{\hat{\theta}}$} & \qw & \gate[4]{\mathcal{E}^{\dagger}} & \qw \\
        & & \qw & \gate{\mathcal{N}_{\theta}^{(1)}} & \qw & & \qw & &  \\
        & & \qw & \gate{\mathcal{N}_{\theta}^{(1)}} & \qw  & & \qw & & \\
        & & \qw & \gate{\mathcal{N}_{\theta}^{(1)}} & \qw & & \qw & &
    \end{quantikz}
    \caption{Spectator-based [4, 1] code of the amplitude-damping channel (time flows from left to right). A single logical qubit (second register) is encoded into four physical qubits using the encoding channel $\mathcal{E}:\mathcal{H}_{2}\rightarrow \mathcal{H}_{2}^{\otimes 4}$ in Eqs.~\eqref{eqn:AD_enc_1} and \eqref{eqn:AD_enc_2}, where $\mathcal{H}_{2}$ denotes the two dimensional Hilbert space of a single qubit system. The spectator system (first register) performs an unbiased estimate $\hat{\theta}(x)$ of the noise parameter $\theta$ using the POVM $\{\Pi_{x}\}$. The estimated value $\hat{\theta}$ is fed into the recovery operation described in detail in \cite{fletcher2008channel} that is adapted for the amplitude-damping channel.}
    \label{fig:[4,1]_circuit}
\end{figure}

In what follows, we derive the entanglement fidelity $F_{e}(\mathcal{R}_{\hat{\theta}}\circ \mathcal{N}_{\theta})$ for the $[4,1]$ code of the amplitude-damping (AD) channel analytically, following the approach developed in \cite{rahn2002exact}, and extending the derivation in \cite{zhan2013entanglement} to the incomplete knowledge recovery scenario. It is worth noting that analytical approaches to the AD channel have also been taken previously e.g. in \cite{cafaro2014approximate, cafaro2014simple}.
 
Since the AD channel is covariant with respect to the group $\{I, Z\}$, the Eastin-Knill theorem \cite{eastin2009restrictions} guarantees that no perfect QEC codes exist. However, approximate codes for the AD channel have been developed in \cite{leung1997approximate} and later on, channel-adapted codes have been developed \cite{fletcher2008channel}, where the recovery depends on the value of the noise parameter. The developed techniques have also been extended beyond the $[4,1]$ code and towards more general $[2k+1, k]$ codes \cite{leung1997approximate, fletcher2008channel} (where $k$ logical qubits are encoded into $n=2k+1$ physical/memory qubits).

\subsection{The Amplitude-Damping Channel}
The single-qubit AD channel is defined as $\mathcal{N}_{\theta}^{(1)}(\cdot)=N_{0}(\cdot)N_{0}^{\dagger}+N_{1}(\cdot)N_{1}^{\dagger}$, where
\begin{equation}
N_{0}= \begin{pmatrix} 1 & 0 \\ 0 & \sqrt{1-\theta} \end{pmatrix}, \hspace{0.4cm} N_{1}= \begin{pmatrix} 0 & \sqrt{\theta} \\ 0 & 0 \end{pmatrix} \; . \label{eqn:AD_channel}
\end{equation}
The Kraus operators $N_{0}$, $N_{1}$ are often called the ``no-damping'' and ``damping'' errors, respectively.
Here, the noise parameter $\theta(t)=1-\exp{(-t/T_{1})}$ depends on time $t$ and the relaxation time $T_{1}$ \cite{etxezarreta2021time, nielsen2002quantum}. We follow the usual notation in quantum information, where the dependence of the noisy channel (and hence also the noise parameter) on time is suppressed.

\subsection{The Approximate [4,1] Code}
Assuming an independent noise model, we recall the encoding $\mathcal{E}:\mathcal{D}(\mathcal{H})\rightarrow \mathcal{D}(\mathcal{C})$ of the [4,1] code \cite{leung1997approximate} from a 1-qubit physical state to a 4-qubit logical state, where $\mathcal{C}=\text{span}\{|0_{L}\rangle, |1_{L}\rangle \} \ \subset \mathcal{H}^{\otimes 4}$, as follows
\begin{align}
    & |0\rangle \rightarrow |0_{L}\rangle \coloneqq \frac{1}{\sqrt{2}}\left(|0000\rangle + |1111\rangle \right) \label{eqn:AD_enc_1} \\
    & |1\rangle \rightarrow |1_{L}\rangle \coloneqq \frac{1}{\sqrt{2}}\left(|1100\rangle + |0011\rangle \right) \; , \label{eqn:AD_enc_2}
\end{align}
and hence $\mathcal{E}(\cdot)\coloneqq C(\cdot)C^{\dagger}$, where $C=|0_{L}\rangle \! \langle 0|+|1_{L}\rangle \! \langle 1| $. The encoded Pauli operators $\sigma_{\text{enc}}=\mathcal{E}(\sigma)$ for $\sigma \in \{ I, X, Y, Z\}$ read
\begin{align}
    & I_{\text{enc}} = |0_{L}\rangle \! \langle 0_{L}|+|1_{L}\rangle \! \langle 1_{L}| \\
    & X_{\text{enc}} = |0_{L}\rangle \! \langle 1_{L}|+|1_{L}\rangle \! \langle 0_{L}| \\
    & Y_{\text{enc}} = -i|0_{L}\rangle \! \langle 1_{L}|+i|1_{L}\rangle \! \langle 0_{L}| \\
    & Z_{\text{enc}} = |0_{L}\rangle \! \langle 0_{L}|-|1_{L}\rangle \! \langle 1_{L}| \; .
\end{align}
By definition, the encoded Pauli operators only act on states in the codespace $\mathcal{C}$. However, in the stabilizer formalism, the logical Pauli operators $I_{L}$, $X_{L}$, $Y_{L}$, and $Z_{L}$ are defined on the full 4-qubit Hilbert space. For example, the generators of the stabilizer set for the $[4,1]$ code is given by $S=\{S_{j}\}_{j=1}^{3}=\{XXXX, ZZII, IIZZ\}$, along with the logical Pauli operators $X_{L}=XXII$, $Y_{L}=YXZI$, and $Z_{L}=ZIZI$. The link between the encoded and logical Pauli operators is found by restricting the action of the latter to the codespace. Namely, $\sigma_{\text{enc}}=\mathcal{E}(\sigma)=\Pi \sigma_{L}$, where $\Pi=\sum_{j=1}^{3}S_{j}/\vert S \vert$ is the projection onto the codespace $\mathcal{C}$ corresponding to the set of stabilizers \cite{rahn2002exact}.

We define the noisy channel $\mathcal{N}_{\theta}$ to be the physical noise experienced by the four physical qubits in the [4,1] code post-encoding, as follows
\begin{equation}
    \mathcal{N}_{\theta}=\left(\mathcal{N}_{\theta}^{(1)}\otimes\mathcal{N}_{\theta}^{(1)}\otimes\mathcal{N}_{\theta}^{(1)}\otimes\mathcal{N}_{\theta}^{(1)}\right) \circ \mathcal{E} \; .
\end{equation}
Furthermore, we define the decoding recovery channel $\mathcal{R}_{\hat{\theta}}$ to be given by
\begin{equation}
    \mathcal{R}_{\hat{\theta}}=\mathcal{E}^{\dagger}\circ \mathcal{R}_{\hat{\theta}}^{(4)} \; ,
\end{equation}
where $\mathcal{R}_{\hat{\theta}}^{(4)}$ is taken from Table~1 of \cite{fletcher2008channel}, which is the channel-adapted recovery of the AD channel (see Fig.~\ref{fig:[4,1]fund}(b) for the performance of this recovery).

\subsection{Entanglement Fidelity}
\begin{figure*}[t]
    \centering
    \includegraphics[width=17cm]{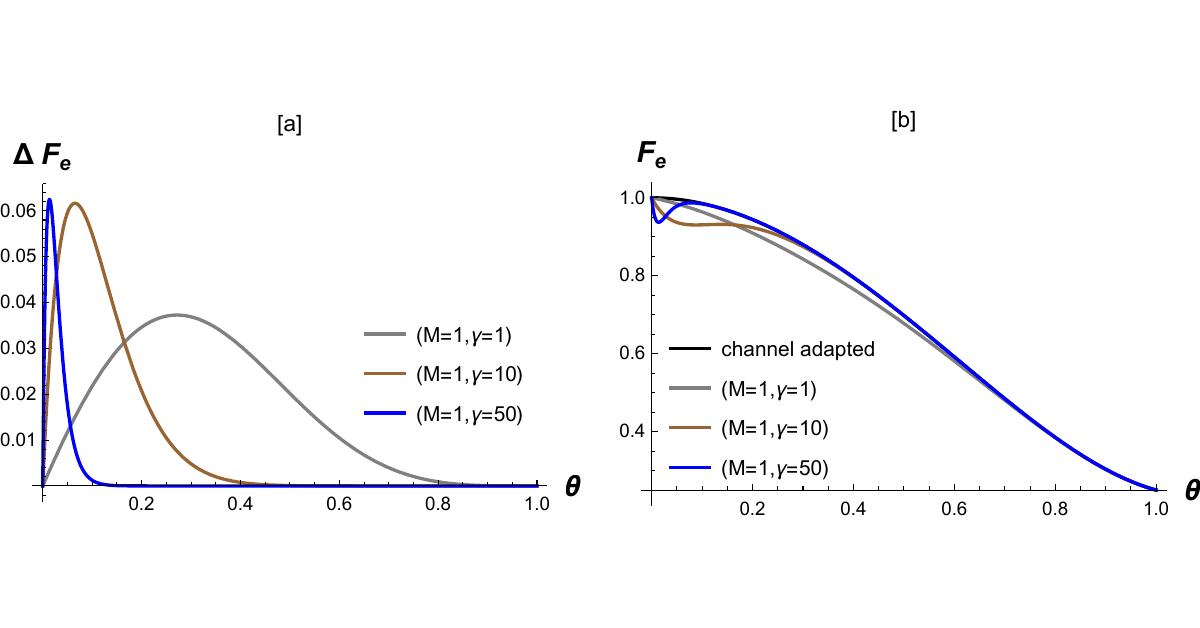}
    \caption{The spectator system is taken to be a single qubit ($m=1$) with varying values of the physical parameter $\gamma$ in Eq.~\eqref{eqn:f_spec}. Both subfigures consider the channel-adapted approximate [4, 1] code of the amplitude damping channel \cite{fletcher2008channel}. (a) Entanglement fidelity difference between the cases of perfect and incomplete knowledge recovery protocols. (b) Comparison between the entanglement fidelities for perfect \cite{fletcher2008channel, zhan2013entanglement} and incomplete knowledge recovery protocols. In both figures, we assume the best-case scenario where the spectator system saturates the QCRB during parameter estimation. %\textcolor{blue}{what is the intuition behind the bumps moving to the left as we increase gamma? (Ken)}
    }
    \label{fig:[4,1]fund}
\end{figure*}

In Appendix~\ref{apx:[4,1]_ent_fid}, we analytically calculate the numerator $g(\theta)$ in Theorem~\ref{th:spec_QFI} for the [4,1] AD code to be
\begin{equation}
    g(\theta)= \frac{(1-\theta)^{3}}{\sqrt{2}(1+(1-\theta)^{4})^{3/2}} \; . \label{eqn:h_spec}
\end{equation}
%As we can see from the theorem since the spectator system is used to conduct parameter estimation, the variance $\operatorname{Var}(\hat{\theta})$ in this estimate is fundamentally bounded by the QCRB determined by the quantum Fisher information, i.e. $\operatorname{Var}(\hat{\theta}) \ge 1/\textsf{I}_{\operatorname{QF}}(\mathcal{M}_{\theta})$.

It has been shown that the RLD QFI for the AD channel diverges (see the example discussed in \cite{katariya2021geometric} for generalized AD channels), however, the SLD QFI is finite and known to be equal to $\textsf{I}_{\operatorname{QF}}(\mathcal{N}^{(1)}_{\theta})=1/(\theta (1-\theta))=\textsf{I}_{\operatorname{QF}}(\mathcal{N}^{(1)}_{\theta}(\psi))$ for $\psi=|1\rangle \! \langle 1|$ \cite{fujiwara2004estimation}. Consequently, a spectator qubit that satisfies the condition Eq.~\eqref{eqn:memo_to_spec} has an SLD QFI of
\begin{equation}
   \textsf{I}_{\operatorname{QF}}(\mathcal{M}^{(1)}_{\theta}(\psi))=\frac{1}{f_{\gamma}(\theta) (1-f_{\gamma}(\theta))} \; , \label{eqn:QFI_AD}
\end{equation}
where $f_{\gamma}(\theta)$ is given by Eq.~\eqref{eqn:f_spec} for the AD code (or more generally, by Eq.~\eqref{eqn:memo_to_spec}). If the spectator system is made out of $m$ qubits, then the QFI scales linearly with $m$ due to the QFI property $\textsf{I}_{\operatorname{QF}}(\mathcal{M}_{\theta}^{(1)}(\psi)^{\otimes m})=m\textsf{I}_{\operatorname{QF}}(\mathcal{M}_{\theta}^{(1)}(\psi))$, assuming an independent noise model (see Eq.~\eqref{eqn:M_spec}). Note that we can only realistically improve the QCRB to a certain degree by increasing $m$, without dropping the negligible spatial variability assumption of the noise parameter $\theta$ \cite{gupta2020integration}.

Combining Eqs.~\eqref{eqn:h_spec} and \eqref{eqn:QFI_AD} for the [4,1] AD code with Theorem~\ref{th:spec_QFI} yields (for small $\hat{\theta}-\theta$)
\begin{equation}
    \mathbb{E}[ \Delta F_{e}] \ge \frac{ g(\theta)}{mf_{\gamma}(\theta)(1-f_{\gamma}(\theta))} \; . \label{eqn:AD_lower}
\end{equation}
Therefore, the contribution of the spectator system to the entanglement fidelity of the $[4, 1]$ code of the amplitude-damping channel is determined by two parameters: the number of spectator qubits used ($m$) and their physical nature ($\gamma=T_{1}^{\text{memo}}/T_{1}^{\text{spec}}$). When the QCRB is saturated, the resulting entanglement fidelity is illustrated in Fig.~\ref{fig:[4,1]fund} for various values of the spectator parameter $\gamma$.

\subsection{The Effect of Nuisance Parameters}
\label{subsec:nuisance}
In this section, we discuss the effects of nuisance parameters on the lower bound in Theorem~\ref{th:spec_QFI} for the AD code (i.e. Eq.~\eqref{eqn:AD_lower}). We consider three different physical choices of a spectator qubit, which yield one of the following:
\begin{enumerate}
    \item The presence of an additional constant magnetic field $B$
    \begin{equation}
        \mathcal{U}_{\phi}(\cdot)=U^{-i\phi Z}(\cdot)U^{i\phi Z} \; ,
    \end{equation}
    with noise parameter $\phi=\gamma_{\text{spec}} B t$, where $\gamma_{\text{spec}}$ here is the gyromagnetic ratio of the spectator qubit.
    \item The presence of an additional pure dephasing noise
    \begin{equation}
        \mathcal{P}(\cdot)=P_{1}(\cdot)P_{1}^{\dagger}+P_{2}(\cdot)P_{2}^{\dagger} \; ,
    \end{equation}
    where the Kraus operators $P_{1}$ and $P_{2}$ are given by
    \begin{equation}
        P_{1}=
        \begin{pmatrix}
            1 & 0 \\
            0 & \sqrt{1-\lambda} 
        \end{pmatrix}
        \hspace{0.1cm} \text{;} \hspace{0.1cm} 
        P_{2}=
        \begin{pmatrix}
            0 & 0 \\
            0 & \sqrt{\lambda} 
        \end{pmatrix}
        \; ,
    \end{equation}
    with noise parameter $\lambda=1-\exp{(-t/T_{\varphi})}$, which yields the depahsing time $T_{2}$ with an off-diagonal decay rate of $1/T_{2}=1/2T_{1}+1/T_{\varphi}$ \cite{allen1987optical}.
    \item The presence of an additional depolarizing noise
    \begin{equation}
        \mathcal{D}_{q}(\cdot)=(1-q)(\cdot)+q\frac{I}{2} \; ,
    \end{equation}
    with noise parameter $q \in [0,1]$.
\end{enumerate}
When no noise parameters are present, the quantum state of the spectator system
\begin{equation}
    \psi = 
    \begin{pmatrix}
        \psi_{00} & \psi_{01} \\
        \psi_{10} & \psi_{11} 
    \end{pmatrix}
    \; , 
\end{equation}
is transformed to $\mathcal{M}^{(1)}_{\theta}(\psi)=\mathcal{N}^{(1)}_{f(\theta)}(\psi)$, where
\begin{equation}
    \mathcal{N}^{(1)}_{f(\theta)}(\psi)=
    \begin{pmatrix}
    \psi_{00}+f(\theta) \psi_{11} & \psi_{01}\sqrt{1-f(\theta)} \\
    \psi_{10}\sqrt{1-f(\theta)} & (1-f(\theta))\psi_{11} 
    \end{pmatrix} 
    \; .
\end{equation}
However, when a nuisance parameter is present, the output state of the spectator used for quantum estimation limit of $\theta$ will be modified. In the above three cases, the spectator states are given by, respectively,
\begin{equation}
    \mathcal{M}^{(1)}_{\theta, \phi}(\psi)=
    \begin{pmatrix}
    \psi_{00}+f(\theta) \psi_{11} & \psi_{01}e^{-i\phi}\sqrt{1-f(\theta)} \\
    \psi_{10}e^{i\phi}\sqrt{1-f(\theta)} & (1-f(\theta))\psi_{11} 
    \end{pmatrix} 
    \; ,
\end{equation}
\begin{align}
    &\mathcal{M}^{(1)}_{\theta, \lambda}(\psi)= \nonumber \\ 
    & =
    \begin{pmatrix}
    \psi_{00}+f(\theta) \psi_{11} & \psi_{01}(1-\lambda)\sqrt{1-f(\theta)} \\
    \psi_{10}(1-\lambda)\sqrt{1-f(\theta)} & (1-f(\theta))\psi_{11} 
    \end{pmatrix} 
    \; ,
\end{align}
and
\begin{equation}
    \mathcal{M}^{(1)}_{\theta, q}(\psi)=
    q\mathcal{M}^{(1)}_{\theta}(\psi)+(1-q)\frac{I}{2}
    \; .
\end{equation}
The quantum estimation limit of the parameter of interest $\theta$, in the presence of one of the nuisance parameters $\zeta=\{\phi, \lambda, q\}$ above, is given by the partial QFIM $ \textsf{I}_{\theta \vert \zeta}$ via $\operatorname{Var}(\hat{\theta})\ge 1/\textsf{I}_{\theta \vert \zeta}$, where
\begin{equation}
    \textsf{I}_{\theta \vert \zeta} = \textsf{I}_{\theta, \theta}-\textsf{I}_{\theta, \zeta} \left( \textsf{I}_{\zeta, \zeta}\right)^{-1}\textsf{I}_{\zeta, \theta} \leq \textsf{I}_{\theta, \theta} \; ,
\end{equation}
and the right hand side are the block matrices of the QFIM (see Appendix~\ref{apx:QFIM} for more details)
\begin{equation}
    \textsf{I}_{\operatorname{QF}}\left( \mathcal{M}_{\theta,\zeta}^{(1)} (\psi)\right)=
    \begin{pmatrix}
        \textsf{I}_{\theta, \theta} & \textsf{I}_{\theta, \zeta} \\
        \textsf{I}_{\zeta, \theta} & \textsf{I}_{\zeta, \zeta}
    \end{pmatrix} \; .
\end{equation}
It is important to emphasize that the matrix element $\textsf{I}_{\theta, \theta}$ refers to the quantum estimation limit of $\theta$, when the noise parameter $\zeta$ \textit{is known}. Therefore, we expect to have $\lim_{\zeta \rightarrow 0}\textsf{I}_{\theta, \theta}=\textsf{I}_{\operatorname{QF}}(\mathcal{M}_{\theta}^{(1)}(\psi))$, which is easily verified numerically for the above three examples. However, a nuisance parameter $\zeta$, similar to the parameter of interest $\theta$, is, by nature, \textit{unknown}. Hence, the quantum estimation limit would be reduced by $ \textsf{I}_{\theta, \theta}- \textsf{I}_{\theta \vert \zeta}$ due to the presence of the unknown noise parameter $\zeta$, compared to when it is known.

We now present the impact of the nuisance parameter for the spectator qubit initial state $\psi=|1\rangle \! \langle 1|$ that maximizes the QFI $\textsf{I}_{\operatorname{QF}}(\mathcal{M}^{(1)}_{\theta}(\psi))$ for the AD channel $\mathcal{M}_{\theta}^{(1)}$ \cite{fujiwara2004estimation}. Simple computation shows that
\begin{enumerate}
    \item For $\zeta=\phi$, we have 
    \begin{equation}
        \textsf{I}_{\theta \vert \phi}= \textsf{I}_{\theta, \theta}=\frac{1}{f(\theta)(1-f(\theta))}=\textsf{I}_{\operatorname{QF}}(\mathcal{M}_{\theta}^{(1)}(\psi)) \; .
    \end{equation}
    \item For $\zeta=\lambda$, it still holds that 
    \begin{equation}
        \textsf{I}_{\theta \vert \lambda}= \textsf{I}_{\theta, \theta}=\frac{1}{f(\theta)(1-f(\theta))}=\textsf{I}_{\operatorname{QF}}(\mathcal{M}_{\theta}^{(1)}(\psi)) \; .
    \end{equation}
    This is a direct consequence of the fact that the optimal spectator input state $\psi=|1\rangle \! \langle 1|$ has no off-diagonal elements subject to dephasing.
    \item For $\zeta=q$, we have 
    \begin{equation}
        \textsf{I}_{\theta, \theta}=\frac{(1-q)^{2}}{(1-q)^{2}(f(\theta)(1-f(\theta))-\frac{1}{4})+\frac{1}{4}} \; ,
    \end{equation}
    which also yields $\lim_{q \rightarrow 0} \textsf{I}_{\theta, \theta}=\textsf{I}_{\operatorname{QF}}(\mathcal{M}_{\theta}^{(1)}(\psi))=1/(f(\theta)(1-f(\theta)))$. However, the partial QFI is computed to be $ \textsf{I}_{\theta \vert q}=0$.
\end{enumerate}

%\begin{figure}
    %\centering
    %\includegraphics[width=0.95\linewidth]{}
    %\caption{\textcolor{blue}{Relative impact of the nuisance parameter $q$ of a depolarizing channel on the quantum parameter estimation of the amplitude damping channel. At most values of $\theta$, the nuisance parameter has a large impact on the quantum estimation limit, as $ \textsf{I}_{\theta \vert q} << \textsf{I}_{\theta, \theta}$ when $q<1$ is present.}}
    %\label{fig:nuisance}
%\end{figure}

\section{Recovery Bounds in The Multi-Cycle Scenario} \label{sec:multi_cycle}

\subsection{The Multi-Cycle Case}

\begin{figure*}[!t]
	\centering
	\includegraphics[width=0.9\textwidth]{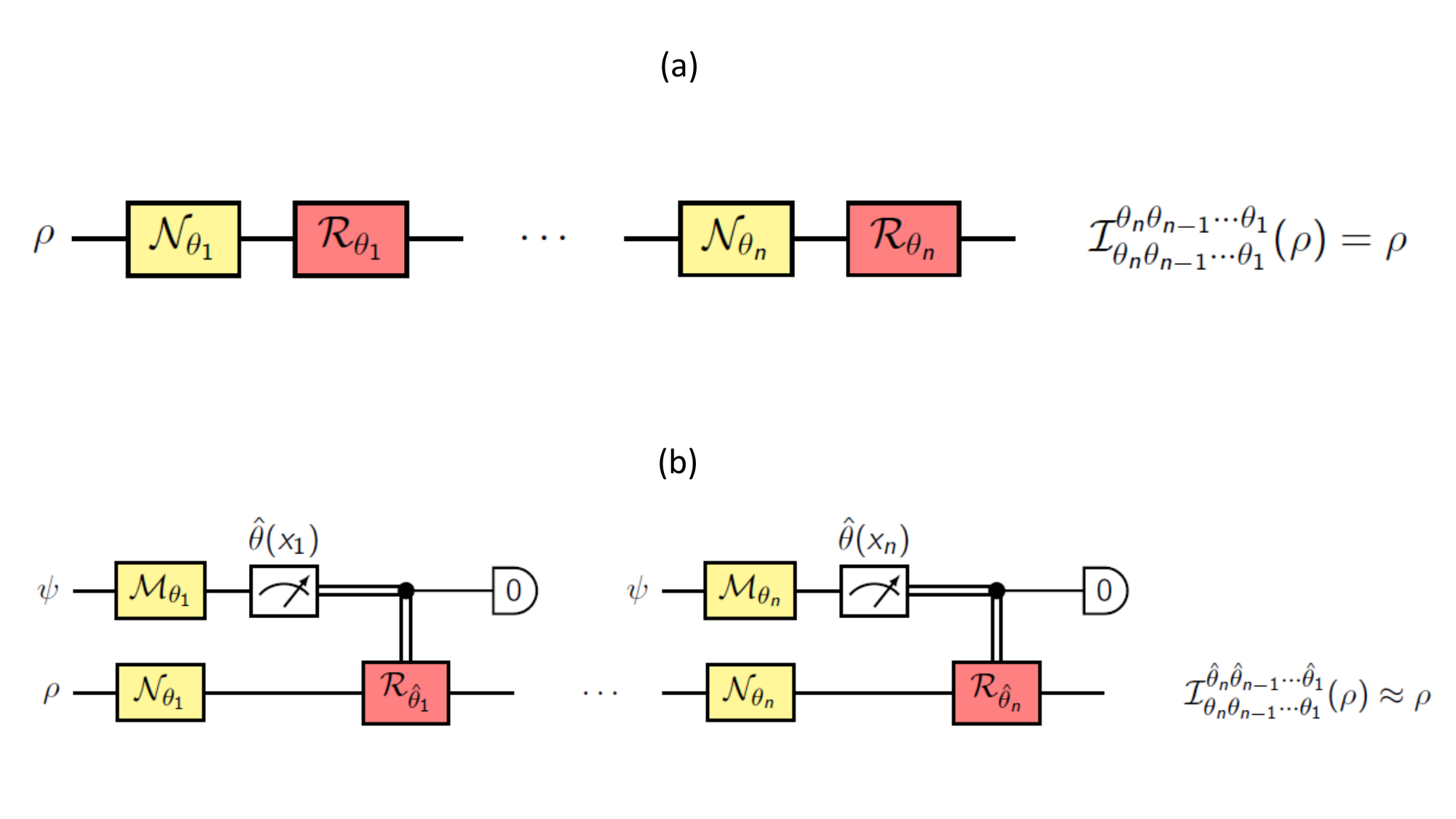}
    \caption{Time flows from left to right. (a) Multi-cycle recovery protocol with perfect knowledge. (b) Multi-cycle recovery protocol with incomplete knowledge. The input state of the quantum memory in both subfigures is given by $\rho$, whereas the input state of the spectator system in subfigure (b) is $\psi$. In the latter case, the state of the spectator is recycled back to $\psi$ after every recovery cycle, via a discarding and preparation channel. The final output states $\mathcal{I}^{\theta_{n}\theta_{n-1}\cdots \theta_{1}}_{\theta_{n}\theta_{n-1}\cdots \theta_{1}}(\rho)$ and $\mathcal{I}^{\hat{\theta}_{n}\hat{\theta}_{n-1}\cdots \hat{\theta}_{1}}_{\theta_{n}\theta_{n-1}\cdots \theta_{1}}(\rho)$ of the quantum memory follow the multi-cycle notation in Eq.~\eqref{eqn:multi_cycle_notation}. }
\end{figure*}

In the article, we have considered a single-cycle recovery, i.e. when the noisy channel $\mathcal{N}_{\theta}$ is applied only once. However, extensions to the multi-cycle regime are also important for real-time applications. A thorough study of the multi-cycle case is beyond the scope of the current article. However, here we present some useful bounds to stimulate future discussions. 

To start, consider a stroboscopically varying noise parameter $\theta$ 
\begin{equation}
    \theta_{1}\rightarrow \theta_{2} \rightarrow \cdots \rightarrow \theta_{n} \; ,
\end{equation}
where $n$ enumerates the recovery cycle in the multi-cycle protocol, executed in the time interval $[(n-1)\Delta t_{\mathcal{R}}, n\Delta t_{\mathcal{R}}]$, where $\Delta t_{\mathcal{R}} << \tau_{\theta}$ is the duration of a single recovery cycle, and $\tau_{\theta}$ is the characteristic time of the noise parameter $\theta$ (i.e. the expected time in which the value of $\theta$ will change appreciably, see Section~\ref{sub:validity}). The corresponding set of real-time spectator estimates of $\theta$ for these $n$ cycles is given by
\begin{equation}
    \hat{\theta}_{1}\rightarrow \hat{\theta}_{2} \rightarrow \cdots \rightarrow \hat{\theta}_{n} \; .
\end{equation}

We introduce the following shorthand notation for multi-cycle recovery protocols %to describe real-time recovery protocols. \textcolor{red}{We denote the distinguishability measure between $ \rho \in \mathcal{D}(\mathcal{H})$ and any $n$-cycle recovery process by 
%\begin{equation}
    %\mathbf{D}^{\hat{\theta}_{n}\hat{\theta}_{n-1}\cdots \hat{\theta}_{1}}_{\theta_{n}\theta_{n-1}\cdots \theta_{1}}(\rho) \coloneqq \mathbf{D}(\mathcal{I}^{\hat{\theta}_{n}\hat{\theta}_{n-1}\cdots \hat{\theta}_{1}}_{\theta_{n}\theta_{n-1}\cdots \theta_{1}}(\rho), \rho) \; ,
%\end{equation}
%}
\begin{equation}
    \mathcal{I}^{\hat{\theta}_{n}\hat{\theta}_{n-1}\cdots \hat{\theta}_{1}}_{\theta_{n}\theta_{n-1}\cdots \theta_{1}}\coloneqq \mathcal{R}_{\hat{\theta}_{n}} \circ \mathcal{N}_{\theta_{n}} \circ \mathcal{R}_{\hat{\theta}_{n-1}} \circ \mathcal{N}_{\theta_{n-1}} \cdots \circ \mathcal{R}_{\hat{\theta}_{1}} \circ \mathcal{N}_{\theta_{1}} \; , \label{eqn:multi_cycle_notation}
\end{equation}
which is an $n$-cycle concatenation between the noisy channel (with changing noise parameter values in each timestep) and the corresponding best-guess recovery.

\subsection{Recurrence Inequalities for Composite Average Channel Fidelity} \label{sec:rec_ineq}
So far, we have found a lower bound on the desired distinguishability measure in terms of the composite channel entanglement fidelity for the single-cycle case. To extend to the multi-cycle scenario, one option is to consider the entanglement fidelity of $\mathcal{I}^{\hat{\theta}_{n}\hat{\theta}_{n-1}\cdots \hat{\theta}_{1}}_{\theta_{n}\theta_{n-1}\cdots \theta_{1}}$. However, a more insightful approach is to express this entanglement fidelity in terms of individual cycle fidelities. Specifically, this is accomplished by the use of the entanglement fidelities of $\mathcal{I}^{\hat{\theta}_{n-1}\cdots \hat{\theta}_{1}}_{\theta_{n-1}\cdots \theta_{1}}$ and $\mathcal{I}^{\hat{\theta}_{n}}_{\theta_{n}}$.  Bounding composite channel fidelities using individual channel fidelities has been studied previously in e.g. \cite{carignan2019bounding}. The following lemma is largely taken from \cite{carignan2019bounding}, using the $\chi$-matrix representation of quantum dynamics (see Appendix~\ref{sec:chi_matrix} for a self-contained review).

\begin{lemma} \label{le:err_angle}
    (\cite{carignan2019bounding}) Given the $\chi$-matrix elements $\chi_{00}^{\mathcal{Q}}$, $\chi_{00}^{\mathcal{S}}$ of the channels $\mathcal{Q}$, $\mathcal{S}$, respectively, the composite channel $\mathcal{S}\circ\mathcal{Q}$ $\chi$-matrix element $\chi^{\mathcal{S}\circ\mathcal{Q}}_{00}$ is bounded from above (and hence the corresponding error angle $\delta^{\mathcal{S}\circ \mathcal{Q}}$ is bounded from below), as follows
    \begin{equation}
        \frac{\chi^{\mathcal{S}\circ \mathcal{Q}}_{00}}{d} \leq \cos^{2} \left( \arccos{\sqrt{\frac{\chi^{\mathcal{S}}_{00}}{d}}}-\arccos{\sqrt{\frac{\chi^{\mathcal{Q}}_{00}}{d}}} \right) \; \label{eqn:chi_upper} ,
    \end{equation}
    or more simply
    \begin{equation}
        \delta^{\mathcal{S}\circ \mathcal{Q}} \ge \vert \delta^{\mathcal{S}}-\delta^{\mathcal{Q}} \vert \; .
    \end{equation}
    The inequality is saturated iff $v_{ij}=1$, $\phi_{i}^{\mathcal{Q}}=\phi_{i^{\prime}}^{\mathcal{Q}}$, and $\phi_{j}^{\mathcal{S}}=\phi_{j^{\prime}}^{\mathcal{S}}$ for all $ i, i^{\prime}=1, \cdots, K(\mathcal{Q})$ and $j, j^{\prime}=1, \cdots, K(\mathcal{S})$. The quantities $v_{ij}$, $\phi_{i}^{\mathcal{Q}}$, and $\phi_{j}^{\mathcal{S}}$ are defined in Appendix~\ref{sec:Chi_upper} in terms of the Kraus operators of $\mathcal{Q}$, $\mathcal{S}$ and the $d^{2}$ matrix basis elements of $\mathcal{L}(\mathcal{H})$. 
\end{lemma}

For completeness, the proof of this lemma is found in Appendix~\ref{sec:Chi_upper}. 

Let us denote by $\mathcal{Q}\equiv \mathcal{I}^{\hat{\theta}_{n-1}\cdots \hat{\theta}_{1}}_{\theta_{n-1}\cdots \theta_{1}}$ and $\mathcal{S}\equiv \mathcal{I}^{\hat{\theta}_{n}}_{\theta_{n}}$, such that $\mathcal{S}\circ \mathcal{Q} = \mathcal{I}^{\hat{\theta}_{n}\hat{\theta}_{n-1}\cdots \hat{\theta}_{1}}_{\theta_{n}\theta_{n-1}\cdots \theta_{1}}$. We further use the notation $\chi_{00}^{1\rightarrow n}$, $\chi_{00}^{1 \rightarrow (n-1)}$, $\chi_{00}^{n}$, $\delta^{1 \rightarrow n}$, $\delta^{1 \rightarrow (n-1)}$, and $\delta^{n}$ to replace $\chi_{00}^{\mathcal{S}\circ \mathcal{Q}}$, $\chi_{00}^{\mathcal{Q}}$, $\chi_{00}^{\mathcal{S}}$, $\delta^{\mathcal{S}\circ \mathcal{Q}}$, $\delta^{\mathcal{Q}}$, and $\delta^{\mathcal{S}}$, respectively. We also use the definition in Eq.~\eqref{eqn:err_angle} to write similar notations for the entanglement fidelities $F_{e}^{1 \rightarrow n}$, $F_{e}^{1 \rightarrow (n-1)}$, and $F_{e}^{n}$, in terms of $\delta^{1 \rightarrow n}$, $\delta^{1 \rightarrow (n-1)}$, and $\delta^{n}$. Therefore, we can reframe Lemma~\ref{le:err_angle} by the authors of \cite{carignan2019bounding} as the following set of \textit{recurrence inequalities} in the context of spectator-based recovery:
\begin{lemma} \label{th:rec_ent_fid}
    Given the entanglement fidelities $F_{e}^{i}$ of the single-cycle recovery protocols at each time step $\Delta t_{\mathcal{R}} = t_{i+1}-t_{i}$, the $n$-cycle entanglement fidelity $F_{e}^{1 \rightarrow n}$ of the multi-cycle recovery protocol is bounded from above by the $(n-1)$-cycle entanglement fidelity $F_{e}^{1 \rightarrow (n-1)}$, as
    \begin{equation}
        F_{e}^{1 \rightarrow n} \leq \cos^{2} \left( \arccos{\sqrt{F_{e}^{1 \rightarrow (n-1)}}}-\arccos{\sqrt{F_{e}^{n}}} \right) \; ,
    \end{equation}
    or equivalently,
    \begin{equation}
        \delta^{1\rightarrow n} \ge \vert \delta^{1 \rightarrow (n-1)}-\delta^{n} \vert \; . \label{eqn:rec_ineq}
    \end{equation}
    The necessary and sufficient conditions for the saturation of this inequality are identical to that of Lemma~\ref{le:err_angle}.
\end{lemma}

\begin{remark} \label{re:coherence}
    As noted in \cite{carignan2019bounding}, the entanglement fidelity of a composite channel exhibits ``constructive'' and ``destructive interference'' with respect to the individual channel entanglement fidelities. In our case, we view the $n$-cycle recovery as a composite channel, where the individual channels are the $(n-1)$-cycle recovery and the $n$-th timestep recovery. Therefore, the same phenomenon of constructive and destructive interference applies here. This is purely a multi-cycle recovery phenomenon that is not present in single-cycle recovery case, which has been the main focus of modern literature in QEC.
\end{remark}

\subsection{Contribution of The Spectator System}
To identify the contribution of the lack of complete knowledge of $\theta$ to the recurrence inequalities, let us consider the error angle 
\begin{equation}
    \hat{\delta} \equiv \delta^{\mathcal{R}_{\hat{\theta}}\circ \mathcal{N}_{\theta}} \coloneqq \arccos{\sqrt{F_{e}(\mathcal{R}_{\hat{\theta}}\circ \mathcal{N}_{\theta})}} \; . 
\end{equation}

\begin{theorem} \label{th:spec_multi}
    Given the multi-cycle entanglement fidelity $F_{e}^{1\rightarrow (n-1)}$ from the previous $n-1$ cycles, the contribution of the spectator system to the upper bound of the total $n$-cycle entanglement fidelity $F_{e}^{1\rightarrow n}$ is given by
    \begin{equation}
        h(\theta_{n}, F_{e}^{1\rightarrow (n-1)})\operatorname{Var}(\hat{\theta}_{n}) \; ,
    \end{equation}
    where
    \begin{equation}
    h \coloneqq \frac{g(\theta_{n})\sin\left( 2\delta^{1\rightarrow (n-1)}-2\arccos{\sqrt{F^{n}_{e}}} \right)}{2\sqrt{F^{n}_{e}(1-F^{n}_{e})}} \; ,
    \end{equation}
    with
    \begin{equation}
        g(\theta_{n}) \coloneqq -\frac{1}{2} \left. \left( \frac{d^{2}}{d\nu^{2}} F_{e}(\mathcal{R}_{\theta_{n} +\nu}\circ \mathcal{N}_{\theta_{n}}) \right) \right \vert_{\nu = 0} \; .
    \end{equation}
\end{theorem}
The proof of this theorem is found in Appendix~\ref{apx:multi_cycle_proof}.

\subsection{Application to [4,1] Code of The Amplitude-Damping Channel} \label{sec:AD_application_2}
The contribution of the spectator system in multi-cycle bounds can also be computed explicitly for the [4,1] code of the AD channel. For a fixed value of the entanglement fidelity $F_{e}^{1\rightarrow (n-1)}$ (or equivalently, $\delta^{1\rightarrow (n-1)}$) at the $(n-1)$-th step, we can plot the total upper bound in the case of both complete and incomplete knowledge. The simplest case where the spectator system's parameters are $\gamma=1$ and $m=1$ is shown in Fig.~\ref{fig:multicycle}. 

Note that, although we expect the incomplete knowledge about the noise parameter to deteriorate the allowed values of the entanglement fidelity (as we have shown for single-cycle QEC of the AD channel in Fig.~\ref{fig:[4,1]fund}(b)), in the multi-cycle scenario, this can play to our advantage due to the coherence between the accumulated error during the prior $(n-1)$ cycles and the error due to the limited knowledge about the noise parameter at the $n$-th cycle (see Remark~\ref{re:coherence}). This observation further supports the potential superiority of spectator-based recovery techniques in maintaining real-time quantum memories.

\begin{figure}[t!] 
    \centering
    \includegraphics[width=9cm]{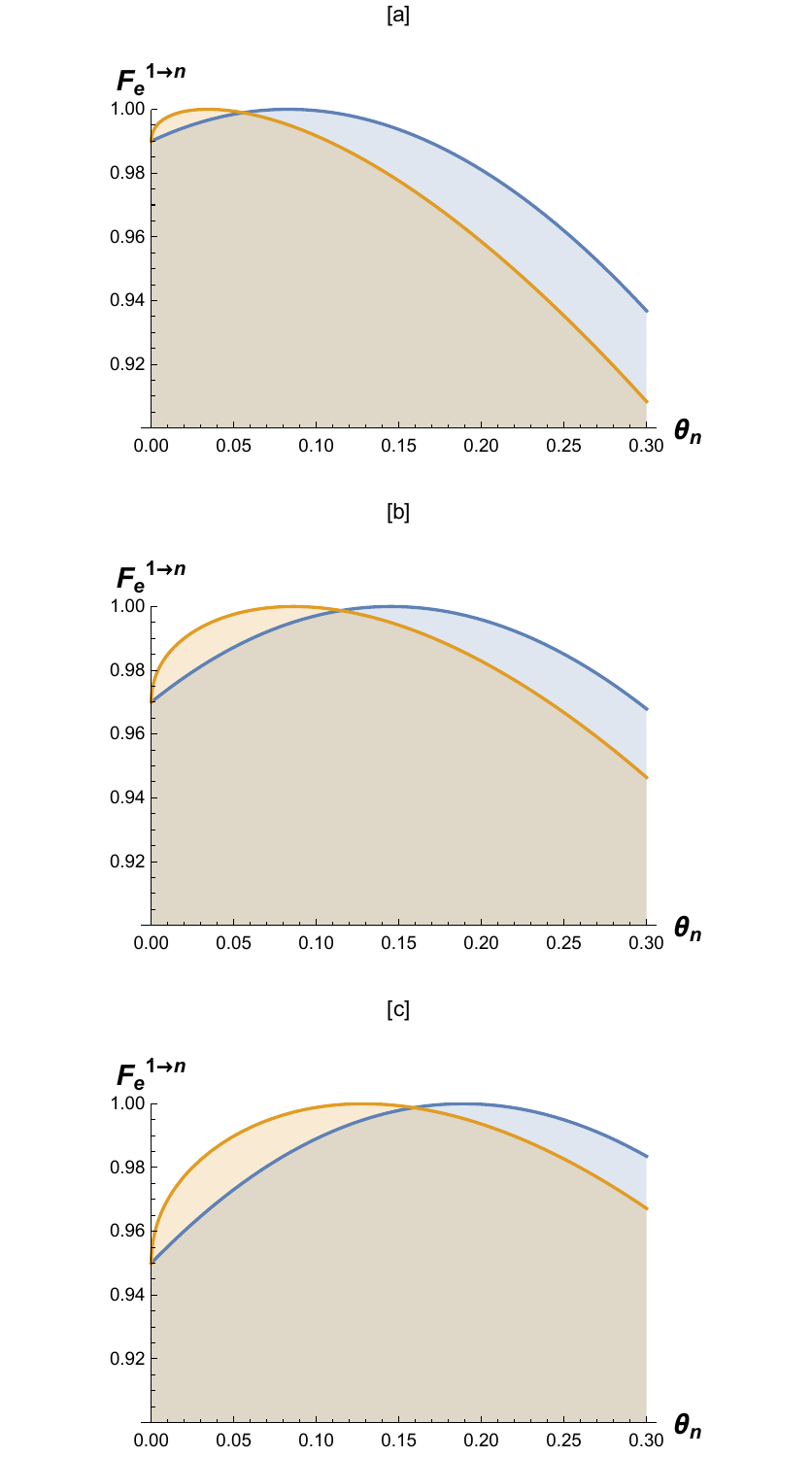}
    \caption{Subplots show the dependence of the accumulated $n$-cycle entanglement fidelity $F_{e}^{1\rightarrow n}$ on the value of the noise parameter $\theta_{n}$ at the $n$-th cycle for the [4,1] code of the amplitude-damping channel. The spectator system is taken to have the simplest characteristic parameters $(\gamma=1, m=1)$. The colored regions indicate allowed values for the entanglement fidelity. The blue color refers to the case of perfect knowledge of $\theta_{n}$ and the orange color to the lack of that knowledge. From top to bottom, the value of the accumulated $(n-1)$-cycle entanglement fidelity $F_{e}^{1\rightarrow (n-1)}$ is picked to be (a) 0.99, (b) 0.97, and (c) 0.95, respectively.}
    \label{fig:multicycle}
\end{figure}

\section{Comparison With Previous Literature} \label{sec:Discussion}

\subsection{Relation to Quantum Information-Theoretic Protocols}
In this article, we focused on the diamond distance due to its operational meaning in terms of a quantum channel discrimination task. In this context, Lemma~\ref{co:examples} could be interpreted as a fundamental bound on the success probability of such a task. A similar bound has already been derived by Pirandola \textit{et al.} in \cite{pirandola2019fundamental} using port-based teleportation \cite{ishizaka2008asymptotic}. In fact, the bound in \cite{pirandola2019fundamental} is valid for general adaptive protocols.

Furthermore, as current techniques of quantum control have matured, the applicability of Lemma~\ref{th:rec_ent_fid} is not only confined to multiple recovery rounds, as demonstrated experimentally in e.g. \cite{cramer2016repeated}. It can also be applied in various quantum information-theoretic tasks where multiple calls to the noisy channel and adaptive feedback are allowed, such as quantum channel discrimination with adaptive feedback \cite{chiribella2008memory, hayashi2009discrimination, pirandola2019fundamental}.
%, e.g. in performing QEC NV centers \cite{cramer2016repeated} for the $[3,1]$ code that protects one of the $^{13}$C nuclear spins from dephasing noise.  
%Finally, as various feedback-driven (adaptive) quantum information-theoretic protocols outperform non-adaptive ones, the former requires a quantum memory that can process intermediate quantum states to be implemented for future feedback. With an increasing number of calls to the quantum channel (asymptotic limit), it becomes more important to have real-time quantum memories that track and adapt to the change in the environment noise parameters.
\subsection{Relation to Robustness of Channel-adapted QEC}
Our approach to recovery with incomplete knowledge is closely related to the robustness of channel-adapted QEC codes studied previously in literature \cite{kosut2008robust, ballo2009robustness, huang2019robustness, layden2020robustness}. To elaborate, since QEC codes are designed to correct the most likely errors, an important question to ask is: how resilient (robust) is the designed QEC code with respect to some arbitrary mixing with the next-most likely errors? The authors of \cite{ballo2009robustness} have framed the robustness problem such that it applies both for Pauli and non-Pauli channels, as follows: One first finds the optimum recovery $\mathcal{R}$ of a channel $\mathcal{N}$ (the most likely noise) by maximizing the entanglement fidelity of $\mathcal{R}\circ \mathcal{N}$, and then one mixes the original channel $\mathcal{N}$ with some other channel $\mathcal{N}^{\prime}$ (the next-most likely noise) by taking their convex combination, i.e. $\mathcal{N}_{\mu}\coloneqq (1-\mu)\mathcal{N}+\mu \mathcal{N}^{\prime}$ for some mixing parameter $\mu \in [0, 1]$. Then, the robustness of the recovery $\mathcal{R}$ with respect to $\mu$ is found by considering the entanglement fidelity of $\mathcal{R}\circ \mathcal{N}_{\mu}$ and observing if it has major variations as a function of the mixing parameter $\mu$. This setup shares some similarities with our approach, however, it has a different quantity of interest, namely the \textit{sensitivity} of entanglement fidelity with respect to changes in the mixing parameter, quantified as the first derivative with respect to $\mu$ of
\begin{equation}
    F_{e}(\mathcal{R}_{\mu}\circ \mathcal{N}_{\mu})-F_{e}(\mathcal{R}\circ \mathcal{N}_{\mu}) \; ,
\end{equation}
where $\mathcal{R}_{\mu}$ is the optimum recovery of the mixing $\mathcal{N}_{\mu}$ (Also see Appendix~\ref{subsec:robustness_bound} for bounds on a similar quantity). This is to be contrasted with the quantity of interest in this article (using the parameter notation $\mu$)
\begin{equation}
    F_{e}(\mathcal{R} \circ \mathcal{N})-F_{e}(\mathcal{R}_{\mu}\circ \mathcal{N}) \; .
\end{equation}
Here, $\mu$ plays the role of the uncertainty $\nu \equiv \hat{\theta}-\theta$ in the noise parameter $\theta$, and therefore it has a different interpretation. Namely, there is no next-most likely noise in this description! Instead, $\mu$ is the random variable describing the uncertainty in the environment noise parameter and has a finite variance, by the QCRB.

The possibility of including the channel uncertainty as a probability distribution $p(\mu)$ in the optimization problem of entanglement fidelity has been discussed by Fletcher in \cite{fletcher2007channel}. The question then, as mentioned in \cite{ballo2009robustness}, is: how to pick a physical probability distribution $p(\mu)$? In our picture (spectator-based QEC), this question has a relatively simple answer, as one should always pick the probability distribution that maximizes the Shannon entropy with a fixed expectation and variance (larger or equal to the inverse of the quantum Fisher information). Such a probability distribution is called `` the truncated normal distribution''.

\subsection{Relation to [4,1] AD Code Literature}

As amplitude-damping (qubit decoherence) is one of the most common noises in quantum systems, developing QEC codes for this particular noise has been a major focus of QEC literature since its inception in 1995. The simplest of such QEC codes is the approximate $[4,1]$ code \cite{leung1997approximate}. Since then, QEC methods for the AD noise have been developing in sophistication by using various new techniques, such as channel adaptation \cite{fletcher2008channel}, stabilizer formalism \cite{cafaro2014approximate}, and semi-definite programming \cite{fletcher2007optimum}. These techniques have been steadily improving upon the entanglement fidelity of the original $[4,1]$ code in \cite{leung1997approximate}. However, If we want to implement these techniques for real-time quantum memories (where the decoherence parameter is slowly varying in time), how much of the improvements upon \cite{leung1997approximate} obtained in the last two decades are we likely to retain? The answer to this question, we compare the performance of the $[4,1]$ code in the incomplete knowledge scenario with previous literature. As spectator systems are characterized by their physical nature $\gamma \ge 1$ and the number of independent subsystems $m \in \mathbb{N}_{+}$ (see Eq.~\eqref{eqn:M_spec}), the answer will vary from one physical implementation to another. However, we consider the above question in the case $(\gamma=1, m=1)$. The results of this comparison are summarized in Fig.~\ref{fig:[4,1]ent_fid} and the table below
\vspace{0.5cm}
\begin{figure}
    \centering
    \includegraphics[width=\linewidth]{Full_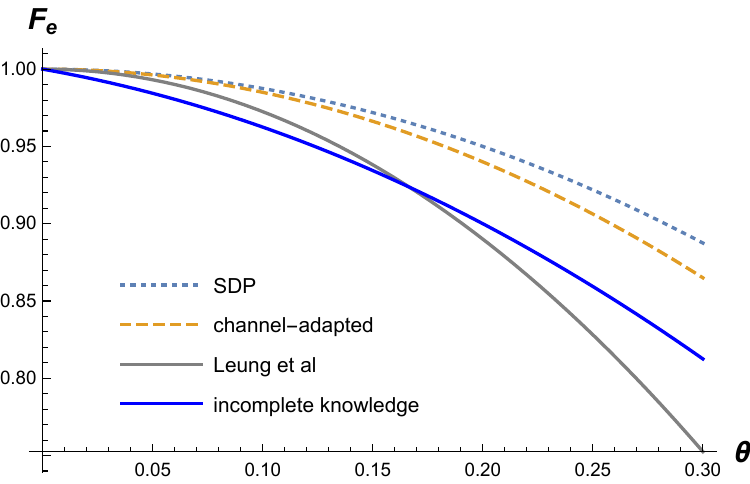}
    \caption{Comparison of the maximum performance of spectator-based recovery (subject to varying noise parameter $\theta$) with various recovery optimization approaches to the [4, 1] code of the amplitude-damping channel (with fixed noise parameter $\theta$). We consider the ``worst case'' spectator parameters $(\gamma=1, m=1)$. Shown are the performances of the well-known approximate QEC code in Leung \textit{et al.} \cite{leung1997approximate}, its channel-adapted version by Fletcher \textit{et al.} \cite{fletcher2008channel}, its SDP optimized version by Fletcher \textit{et al.} \cite{fletcher2007optimum}, its stabilizer-based version \cite{cafaro2014approximate, fletcher2008channel}, and the incomplete knowledge extension of the channel-adapted QEC in \cite{fletcher2008channel}. Here, the difference between the ``channel-adapted'' and ``incomplete knowledge'' entanglement fidelities showcases the fundamental metrological cost of operating a real-time quantum memory. All other $(\gamma \ge 1, m \ge 1)$ spectator-based recoveries lie above the ``incomplete knowledge'' graph.}
    \label{fig:[4,1]ent_fid}
\end{figure}

\begin{table}[ht]
\centering
\begin{tabular}{|c|c|} \hline
     Previous literature & $F_{e}$ to $\mathcal{O}(\theta^{3})$ order \\
    \hline
     Leung \textit{et al.}
     \cite{leung1997approximate} &  $1-2.75\theta^{2}$ \\
     Stabilizer-Based \cite{fletcher2007channel, cafaro2014approximate} & $1-2\theta^{2}$ \\
     Channel-Adapted \cite{fletcher2008channel, zhan2013entanglement} & $1-1.5\theta^{2}$  \\
     SDP \cite{fletcher2007optimum} & $1-1.25\theta^{2}$  \\
     \hline
     Incomplete Knowledge & $1-0.25\theta-1.25\theta^{2}$ \\
     \hline
\end{tabular}
\caption{Comparison between the entanglement fidelities of the [4,1] code for small noise parameter value $\theta$, for different recovery protocols (here SDP stands for ``semi-definite programming''). Note that in the incomplete knowledge scenario, the leading error term in the entanglement fidelity of recovery is linear in $\theta$, as opposed to quadratic, which is the optimal result when the noise parameter $\theta$ is known \textit{apriori}.}
\end{table}

We observe that, due to incomplete knowledge of the noise parameter, the $[4,1]$ code of the AD channel performs suboptimally to \cite{leung1997approximate} for noise parameter values below a certain threshold $\theta \leq 0.17$. However, beyond that point, the improvements introduced by channel-adapted recoveries and semi-definite programming techniques are preserved, as they still outperform \cite{leung1997approximate}, even in the presence of incomplete knowledge about $\theta$. Furthermore, the range of the values of $\theta \in [0, 1]$ for which this outperformance is preserved gets larger the larger we pick $\gamma$ and/or $m$ (see Fig.~\ref{fig:[4,1]fund}(a)).

Let us consider one final observation. We noted in Fig.~\ref{fig:[4,1]fund} that different values of the spectator parameter $\gamma$ yield different regions of $\theta$ where channel-adapted and semi-definite programming techniques in QEC maintain their improvements upon the approximate $[4,1]$ code \cite{leung1997approximate}, provided that a spectator system is implemented in the incomplete-knowledge recovery protocol. One might observe that, since the value of $\gamma$ in Eq.~\eqref{eqn:f_spec} generally depends on the couplings of the spectator and memory systems with the environment, the only way to change $\gamma$ is to change the physical implementation of at least one of these systems. However, recent quantum control techniques, such as Hamiltonian amplification \cite{arenz2020amplification}, allow for the tuning of the coupling strengths between an environment and any continuous variable quantum system, with a quadratic coupling Hamiltonian. Therefore, provided that the implementation of either the spectator or memory system has continuous degrees of freedom \cite{gottesman2001encoding}, the Hamiltonian amplification technique yields a practical advantage for spectator-based recovery architectures, as the resulting entanglement fidelities can be manipulated in experiments for any desired region of the noise parameter $\theta$, as seen in Fig.~\ref{fig:[4,1]fund}(a).

\subsection{Relation to Time-Dependent QEC}
In \cite{fujiwara2014instantaneous}, the author suggests that knowledge of the error rates for Pauli channels is not the most useful side information in QEC. Indeed, as mentioned previously (see Remark~\ref{rem:non_pauli}), the assumption that the \textit{optimal} recovery channel (defined by Eq.~\eqref{eqn:opt_recovery}) depends on the environment noise parameter does not hold for Pauli channels. Nevertheless, it is important to note that optimization-based techniques of QEC for Pauli channels do generally benefit from the knowledge of the noise parameter. This is especially relevant when error identification from syndrome measurements is not unique (e.g. in surface codes \cite{fowler2012surface}). Hence, one can only construct (suboptimal) decoders, rather than the optimal recovery map in Eq.~\eqref{eqn:opt_recovery}. This generally yields decoders that depend on the noise parameters, even for Pauli channels. For example, various types of decoders exist for both repetition codes \cite{spitz2018adaptive, kelly2016scalable} and surface codes \cite{huo2017learning}, where under the presence of a drifting noise parameter, one can design an adaptive decoder that can track this drift while not interrupting the QEC protocol. Therefore, the results of this article could be expanded to include adaptive decoders for repetition and surface codes, rather than the optimal recovery map defined in Eq.~\eqref{eqn:opt_recovery}. Finally, it is worth mentioning that other approaches to adaptation in QEC literature have been pursued, e.g. in Refs.~\cite{taghavi2010channel, florjanczyk2016situ}.

\section{Conclusion and Open Questions}
In this article, I consider the problem of building a real-time (drift-adapting) quantum memory and present it as a spectator-based recovery protocol. To counter noise drift, the spectator system performs a real-time parameter estimation (generally in the presence of nuisance parameters) and feeds forward this classical side information to the ``best-guess'' recovery map. To quantify the single-cycle information-theoretic cost of adaptation in real-time quantum memories, I compute a lower bound for the diamond distance between the optimal (inaccessible) and best-guess (accessible) recovery protocols. This approach is generalized in Appendix~\ref{apx:gen_dist_meas} for other relevant distinguishability measures between arbitrary two quantum channels. For slowly drifting noise parameters, I show that a metrological bound exists, determined by the quantum Fisher information of the spectator dynamics. This bound is demonstrated for the $[4,1]$ code of the amplitude-damping channel, and the effects of various physical choices of spectator qubits and nuisance parameters are discussed. Finally, for multi-cycle recovery, I recall a theorem in \cite{carignan2019bounding} and use it to derive an upper bound to the fidelity of multi-cycle recovery in terms of recurrence inequalities. The contribution of the lack of knowledge of the noise parameters (i.e. noise-drift adaptation) is also derived. This is also showcased for the $[4,1]$ code of the amplitude-damping channel, and regions of outperformance in the spectator-based recovery protocols are highlighted. The advantages of spectator-based recovery compared to non-adaptive recovery protocols, even in the perfect knowledge scenario, is due to the coherence of errors from different cycle numbers as well as the imperfect knowledge (noise estimation) error. This phenomenon is exclusive to multi-cycle QEC.

The results mentioned above are relevant for various research communities, such as quantum error correction, quantum communication, quantum information, quantum control, and quantum computing. To elaborate, the existence of lower bounds on any channel recovery (Eqs.~\eqref{eqn:QEC_lower} and \eqref{eqn:main_spec_lower}, or more generally in Theorem~\ref{th:main_th} in Appendix~\ref{apx:gen_dist_meas}) could be valuable in testing the performance of various optimization-based techniques in QEC to determine if optimal performance is reached. As discussed in Section~\ref{sec:Discussion}, these bounds may also have a broader interest in various domains of quantum information as they hold for any generalized distinguishability measure and between any two quantum channels (see Appendix~\ref{apx:gen_dist_meas}). Multi-cycle bounds (Lemma~\ref{th:rec_ent_fid}) might also be useful in adaptive quantum information-theoretic protocols, where many calls to the noisy channel are made. The analysis made for the $[4,1]$ code of the amplitude-damping channel sheds light on what to expect when implementing such QEC codes in real-time quantum memories \cite{proctor2020detecting, majumder2020real}, while also providing an excited avenue in terms of outperformance in the incomplete knowledge scenario for multi-cycle recovery, which is quickly starting to become a reality \cite{cramer2016repeated}. Finally, implementing novel quantum control techniques, such as Hamiltonian amplification for continuous quantum systems \cite{arenz2020amplification}, might prove useful in controlling the coupling strength of the spectator system. Therefore, one can experimentally optimize over the selection of all possible spectator system parameters without physically changing the spectator system. %Finally, the approach to recovery of quantum states and channels in the incomplete knowledge scenario might provide fertile ground for the development of links between quantum information-theoretic tasks and the rich mathematical questions of uncertainty, subjective beliefs, decision theory, etc.

Many questions are left open:
\begin{itemize}
    \item Extension of the information-theoretic and metrological lower bounds to Pauli channels with suboptimal decoders. This is relevant for current surface codes, as the optimal recovery map Eq.~\eqref{eqn:opt_recovery} is inaccessible, due to the probabilistic nature of error identification from syndrome measurements for high error rates \cite{fowler2012surface}.
    \item Incorporation of important theoretical techniques of the maximum overlap problem in quantum information theory, e.g. the two-sided bounds by Tyson \cite{tyson2009two, tyson2010two} using directional iterates. This approach has led to various important results previously \cite{barnum2002reversing, beny2010general}. An interesting proposal (potentially also in quantum metrology) would be to apply the directional iterate technique to the semi-inner product used to define the QFIM, given in Eq.~\eqref{eqn:semi_inner_prod} of Appendix~\ref{apx:QFIM}.
    \item A deeper analysis of the multi-cycle regime is required. This includes, but is not limited to, the study of optimal conditions for achieving the coherent error cancellation, as well as incorporating techniques from asymptotic quantum information theory to gain further insight into the multi-cycle case.
    \item Continuous (dynamical) recovery using Petz recovery maps \cite{kwon2022reversing} could also be considered, as well as other adaptive approaches implementing Petz recovery maps \cite{jayashankar2022quantum, biswas2023noise}. This can potentially extend the temporal range of applicability of spectator-based recovery protocols to faster varying noise. Another interesting dynamical model of real-time quantum memory could be constructed from the open system theory of two subsystems (memory and spectator) with slow and fast dynamics, relative to the environment noise. This has been studied previously in the context of adiabatic elimination in bipartite open quantum systems \cite{azouit2017towards}.
    \item Considerations of spatial variability of the noise parameter are also needed for scalability of the spectator-based recovery protocols \cite{klesse2005quantum, gupta2020integration}. Generalization beyond the independent noise model in Eq.~\eqref{eqn:M_spec} is also of relevance.
    \item Finally, one would be remiss by not considering the large literature on approximate recoverability of quantum states, see e.g. \cite{li2018squashed, wilde2015recoverability, junge2018universal, fawzi2015quantum, buscemi2016approximate} and references within. It is interesting to see whether incomplete knowledge recovery protocols would benefit from a similar approach.
\end{itemize}

%For example, although we showcased that the saturation condition of the lower bound in Lemma~\ref{le:gen_fund_lower} and Theorem~\ref{th:main_th} are equivalent to the saturation condition of the joint convexity property of generalized distinguishability measures under certain conditions, determining necessary and sufficient conditions for the latter becomes important for the achievability of the lower bound.

%Although the lower bound in Lemma~\ref{le:gen_fund_lower} could be regarded as a special case of applying the data-processing inequality to generalized distinguishability measures between channels \cite{gour2019comparison} with respect to the twirling supermap, it might be interesting to see if there are other useful types of supermaps for which the data-processing inequality could lead to a useful bound, such as Theorem~\ref{th:main_th} with the entanglement fidelities.

%It is interesting to see what the incomplete knowledge scenario might yield for other recovery techniques such as quantum error mitigation \cite{cao2021nisq}, dynamical decoupling, and quantum gate teleportation. Furthermore, it might be implemented in continuous (dynamical) recovery using Petz maps \cite{kwon2022reversing}, \textcolor{purple}{or other adaptive approaches implementing Petz recovery maps \cite{jayashankar2022quantum, biswas2023noise}.} 

\section{Acknowledgements}
I am grateful to Mark M. Wilde, Hwang Lee, Stav Haldar, Kenneth Brown, Milad Marvian, Todd Brun, and the anonymous referees for helpful comments and stimulating discussions. I am also grateful for the early discussions with Lorenza Viola regarding stroboscopically changing noise parameters, which motivated certain parts of this work. I would like to acknowledge the support of Kenneth R. Brown, Mark M. Wilde, Hwang Lee, and the Coates Research Award by the Department of Physics and Astronomy at Louisiana State University. This work was supported by the U.S. Army Research Office through the U.S. MURI Grant No. W911NF-18-1-0218.   

\bibliography{references.bib}
\bibliographystyle{quantum}

\appendix

\section{Lower-Bounding Generalized Distinguishability Measures Using Entanglement Fidelity} \label{apx:gen_dist_meas}

\subsection{Generalized Distinguishability and Distance Measures}
To quantify the success of a recovery protocol (such as QEC), we need to introduce the concepts of generalized distinguishability and distance measures between two states as well as between two channels \cite{kueng2016comparing, aliferis2007level, gilchrist2005distance, smirne2022holevo, leditzky2018approaches, khatri2020principles}. 

We say that $\mathbf{D}:\mathcal{D}(\mathcal{H}) \times \mathcal{L}_{+}(\mathcal{H}) \rightarrow \mathbb{R}^{1}$ is a generalized distinguishability measure between two states if it satisfies the \textit{data-processing inequality} (DPI), i.e. for arbitrary $\mathcal{Q}$ CPTP map and all $ \rho, \sigma \in \mathcal{D}(\mathcal{H})$, we have \footnote{Some previous papers have used the notation $\mathbf{D}(\rho \Vert \sigma)$, rather than $\mathbf{D}(\rho, \sigma)$, to indicate the generalized distinguishability measure between $\rho$ and $\sigma$. Here, we use the latter notation to emphasize the role that $\mathbf{D}$ plays also as a distance measure in deriving standard upper bounds in the context of QEC. Please Appendix~\ref{sec:upper_bound} for more details.}
    \begin{equation}
        \mathbf{D}(\mathcal{Q}(\rho), \mathcal{Q}(\sigma)) \leq \mathbf{D}(\rho, \sigma) \; .
    \end{equation}
An important consequence of DPI is the property of isometric invariance. Namely, for any isometry $V$, the following holds \cite{khatri2020principles}
\begin{equation}
    \mathbf{D}(\mathcal{V}(\rho), \mathcal{V}(\sigma)) = \mathbf{D}(\rho, \sigma) \; , \label{eqn:iso_inv}
\end{equation}
where $\mathcal{V}(\cdot)=V^{\dagger}(\cdot)V$.

Independently, we say that $\mathbf{D}:\mathcal{D}(\mathcal{H}) \times \mathcal{D}(\mathcal{H}) \rightarrow \mathbb{R}^{1}_{+}$ is a generalized distance measure between two states if it satisfies the following three properties for all $\rho, \sigma, \tau \in \mathcal{D}(\mathcal{H})$:
\begin{enumerate}
    \item \textit{Positivity and faithfulness:} 
    \begin{equation}
        \mathbf{D}(\rho, \sigma) \ge 0 \; ,
    \end{equation}
    where the equality holds iff $\rho = \sigma$.
    \item \textit{Symmetry:} $\mathbf{D}(\rho, \sigma)=\mathbf{D}(\sigma, \rho)$.
    \item \textit{Triangle inequality:}
    \begin{equation}
        \mathbf{D}(\rho, \sigma) \leq \mathbf{D}(\rho, \tau) + \mathbf{D}(\tau, \sigma) \; .
    \end{equation}
\end{enumerate}

A common requirement for fault-tolerant QEC and quantum computing is the so-called ``chaining property'' \cite{kueng2016comparing, gilchrist2005distance}. However, this property of generalized distinguishability/distance measures is derivative from more elementary properties, such as DPI and the triangle inequality (see Appendix~\ref{sec:chaining} for a short discussion).

Finally, we say that the map $\mathbf{D}:\mathcal{D}(\mathcal{H}) \times \mathcal{L}_{+}(\mathcal{H}) \rightarrow \mathbb{R}^{1}$ satisfies the \textit{joint convexity} property if for any two ensembles $\{p_{X}(x), \rho^{x}\}_{x \in \mathcal{X}}$ and $\{ p_{X}(x), \sigma^{x} \}_{x \in \mathcal{X}}$, where $p_{X}$ is a probability distribution function of the random variable $X$ over the set $\mathcal{X}$, we have
    \begin{equation}
        \mathbf{D}\left(\sum_{x \in \mathcal{X}}p_{X}(x)\rho^{x}, \sum_{x\in \mathcal{X}}p_{X}(x)\sigma^{x}\right) \leq \sum_{x\in \mathcal{X}}p_{X}(x)\mathbf{D}(\rho^{x}, \sigma^{x}) \; .
    \end{equation}
\begin{table*}[t!]
    \centering
    \begin{tabular}{ |p{5.3cm}||p{3cm}|p{3cm}|p{3cm}|  }
         \hline
         \multicolumn{4}{|c|}{List of different measures} \\
         \hline
         Name &  Data Processing & Distance Measure & joint convexity \\
         \hline
         Quantum Relative Entropy   &  Yes    & No &  Yes \\
         Generalized $\alpha$-Relative Entropies (Petz-Renyi, Sandwiched, etc.) &  Yes  &  No   & Yes \\
         Trace Distance &  Yes & Yes & Yes\\
         Bures Distance  & Yes & Yes & Yes\\
         Sine Distance & Yes & Yes & Yes\\
         Amortized Divergence \cite{wilde2020amortized} & Yes & No & Yes \\
         \hline
    \end{tabular}
    \caption{Summary of properties of various measures used in quantum information theory.}
    \label{tab:measures}
\end{table*}
For fidelity-based distinguishability measures, such as the Bures and Sine distances, this directly follows from the double concavity of the fidelity function (see e.g. \cite{khatri2020principles}). 

Alternatively, it is well-known that the joint convexity property can be derived from the DPI (with respect to the partial trace channel) if we further assume that $\mathbf{D}$ satisfies the direct sum property for classical-quantum states \cite{khatri2020principles}, i.e.
    \begin{align}
        \mathbf{D} &\left(\sum_{x \in \mathcal{X}} p_{X}(x)|x\rangle \! \langle x| \otimes \rho^{x} , \sum_{x\in \mathcal{X}}p_{X}(x)|x\rangle \! \langle x| \otimes \sigma^{x}\right) \nonumber \\ &= \sum_{x\in \mathcal{X}}p_{X}(x)\mathbf{D}(\rho^{x}, \sigma^{x}) \; .
    \end{align}

%The first three conditions make $d$ a distance measure. \textcolor{red}{Note that the triangle inequality immediately implies the reversed triangle inequality.} The data-processing inequality (DPI) says that the distinguishability between any two states decreases after they pass through a noisy channel $\mathcal{E}$. A direct consequence of this is that $d$ that satisfies DPI is unitary invariant, i.e. $\mathbf{D}(\mathcal{U}(\rho)\Vert \mathcal{U}(\sigma))=\mathbf{D}(\rho \Vert \sigma)$, where $\mathcal{U}(\cdot)=U(\cdot)U^{\dagger}$ is a unitary channel. The joint convexity implies that the distinguishability between the two ensembles $\{p_{i}, \rho_{i}\}$ and $\{p_{i}, \sigma_{i}\}$ cannot be larger than the average distinguishability between the individual states $\rho_{i}$ and $\sigma_{i}$. 

For a summary of various distinguishability and/or distance measures, as well as which properties they satisfy, please see Table~\ref{tab:measures}. All the above properties are satisfied by \cite{khatri2020principles, gilchrist2005distance}
\begin{enumerate}
    \item Trace Distance: $\mathbf{D}_{\text{Tr}}(\rho, \sigma)=\frac{1}{2}\Vert \rho -\sigma \Vert_{1}$.
    \item Bures Distance: $\mathbf{D}_{\text{B}}(\rho, \sigma)=\sqrt{2-2\sqrt{F(\rho, \sigma)}}$.
    \item Sine Distance: $\mathbf{D}_{\text{S}}(\rho, \sigma)=\sqrt{1-F(\rho, \sigma)}$,
\end{enumerate}
where $F(\rho, \sigma)=\Vert \sqrt{\rho}\sqrt{\sigma} \Vert^{2}_{1}$ is the fidelity function. 

%We note for later that the saturation of the joint convexity requires not only the saturation of DPI but also the Jensen inequality, as both of these are used to derive the joint convexity property for the above-mentioned distance measures. 

Using the generalized distinguishability (distance) measures between two states,
 we define the generalized distinguishability (distance) measures between two channels $\mathcal{Q}^{A \rightarrow B}$ and $\mathcal{S}^{A \rightarrow B}$, as follows
\begin{equation}
    \mathbf{D}(\mathcal{Q}, \mathcal{S}) \coloneqq \sup_{\rho}\mathbf{D}(\textsf{id}^{R}\otimes \mathcal{Q}^{A\rightarrow B}(\rho), \textsf{id}^{R}\otimes \mathcal{S}^{A\rightarrow B}(\rho)) \; , \label{eqn:channel_dist}
\end{equation}
where $\rho \in \mathcal{D}(\mathcal{H}^{A}\otimes \mathcal{H}^{R})$, for arbitrary Hilbert space dimensions of the reference system $R$. By using joint convexity and the Schmidt decomposition of pure states, it can be shown that the maximization need only be taken over pure states $\psi_{RA}$, with the reference system $R$ having the same Hilbert space dimensions as $A$ \cite{khatri2020principles}, i.e. %We can easily show that the properties of the distinguishability measure between two states transfer to that of two channels.
\begin{equation}
    \mathbf{D}(\mathcal{Q}, \mathcal{S}) \coloneqq \sup_{\psi}\mathbf{D}(\textsf{id}^{R}\otimes \mathcal{Q}^{A\rightarrow B}(\psi), \textsf{id}^{R}\otimes \mathcal{S}^{A\rightarrow B}(\psi)) \; .
\end{equation}
Finally, it is important to note that the joint convexity property of generalized distinguishability measures for states implies the same property for channels. This is seen by considering the two channels $\mathcal{Q}^{A\rightarrow B}=\sum_{x\in \mathcal{X}}p_{X}(x)\mathcal{Q}^{A\rightarrow B}_{x}$ and $\mathcal{S}^{A \rightarrow B}=\sum_{x\in \mathcal{X}}p_{X}(x)\mathcal{S}^{A \rightarrow B}_{x}$, and then applying the joint convexity property for states, as follows
\begin{align}
    & \mathbf{D}(\mathcal{Q}, \mathcal{S}) 
    = \sup_{\rho}\mathbf{D}(\textsf{id}^{R}\otimes \mathcal{Q}^{A\rightarrow B}(\rho), \textsf{id}^{R}\otimes \mathcal{S}^{A\rightarrow B}(\rho)) \\ &= \mathbf{D}\left(\textsf{id}^{R}\otimes \mathcal{Q}^{A\rightarrow B}(\rho^{\star}), \textsf{id}^{R}\otimes \mathcal{S}^{A\rightarrow B}(\rho^{\star})\right)  \\ 
    & \leq  \sum_{x\in \mathcal{X}}p_{X}(x)\mathbf{D}\left( \textsf{id}^{R}\otimes \mathcal{Q}^{A \rightarrow B}_{x}(\rho^{\star}), \textsf{id}^{R}\otimes \mathcal{S}^{A \rightarrow B}_{x}(\rho^{\star})\right) \\
    & \leq \sum_{x\in \mathcal{X}}p_{X}(x) \sup_{\rho}\mathbf{D}\left( \textsf{id}^{R}\otimes \mathcal{Q}^{A \rightarrow B}_{x}(\rho), \textsf{id}^{R}\otimes \mathcal{S}^{A \rightarrow B}_{x}(\rho)\right) \\
    & \equiv \sum_{x\in \mathcal{X}}p_{X}(x)\mathbf{D}(\mathcal{Q}_{x}, \mathcal{S}_{x}) \; .
\end{align}
Consequently, we have the joint convexity property
\begin{align}
    & \mathbf{D}\left(\sum_{x\in \mathcal{X}}p_{X}(x)\mathcal{Q}_{x}, \sum_{x\in \mathcal{X}}p_{X}(x)\mathcal{S}_{x}\right) \nonumber \\ & \hspace{1.5cm} \leq   \sum_{x\in \mathcal{X}}p_{X}(x)\mathbf{D}(\mathcal{Q}_{x}, \mathcal{S}_{x}) \; . \label{eqn:double_con}
\end{align}

\subsection{Unitary \textit{t}-designs} \label{sec:t-design}
We call a function $P:\mathbb{U}(d)\rightarrow \mathbb{C}$ acting on any unitary $U$ in $\mathbb{U}(d)$ to be polynomial of degree $t$ if its dependence on the $2d^{2}$ real entries of $U$ is a polynomial of degree at most $t$ in each of its entries. Given a finite set of unitaries $\{U(x)\}_{x \in \mathcal{X}}$ in $\mathbb{U}(d)$, we say that they form a unitary $t$-design \cite{dankert2009exact, gross2007evenly} if the uniform Haar average over $\mathbb{U}(d)$ of any polynomial $P$ of degree $t$ is computed using the uniform average over the finite set $\{U(x)\}_{x \in \mathcal{X}}$ only, as follows
\begin{equation}
    \int_{\mathbb{U}(d)}dUP(U)=\frac{1}{\vert \mathcal{X} \vert}\sum_{x \in \mathcal{X}}P(U(x)) \; .
\end{equation}
It has been shown that for unitary 1 and 2 designs, the above averaging condition can be rewritten in a different form. We say that $\{U(x)\}_{x \in \mathcal{X}}$ forms a unitary 1-design in $\mathbb{U}(d)$ if 
\begin{equation}
    \frac{1}{\vert \mathcal{X} \vert}\sum_{x \in \mathcal{X}}U(x)\rho U^{\dagger}(x)=\pi \; , \label{eqn:unitary_1}
\end{equation}
for all $\rho \in \mathcal{D}(\mathcal{H})$, where $\pi=I/d$ is the maximally mixed state. An example of unitary 1-designs is given by the Pauli group. Further, we say $\{U(x)\}_{x \in \mathcal{X}}$ forms a unitary 2-design in $\mathbb{U}(d)$ if we have the following conditions for twirling of states or channels \cite{dankert2009exact, dankert2005efficient}
\begin{align}
    \int_{\mathbb{U}(d)}dU&(U\otimes U)\rho (U\otimes U)^{\dagger} \nonumber \\ &=\frac{1}{\vert \mathcal{X} \vert}\sum_{x \in \mathcal{X}}(U(x)\otimes U(x))\rho (U(x)\otimes U(x))^{\dagger} \; ,
\end{align}
for all $\rho \in \mathcal{D}(\mathcal{H}\otimes \mathcal{H})$, or equivalently
\begin{align}
    \int_{\mathbb{U}(d)}dU& U^{\dagger}\mathcal{Q}(U\rho U^{\dagger})U \nonumber \\&=\frac{1}{\vert \mathcal{X} \vert}\sum_{x \in \mathcal{X}}U^{\dagger}(x)\mathcal{Q}(U(x)\rho U^{\dagger}(x))U(x) \; , \label{eqn:unitary_2_des}
\end{align}
for all $\rho \in \mathcal{D}(\mathcal{H})$ and quantum channels $\mathcal{Q}$. An example of unitary 2-designs is given by the Clifford group \cite{dankert2005efficient, emerson2007symmetrized}.

\begin{remark}
    Note that, if $\{U(x)\}_{x \in \mathcal{X}}$ is a unitary $t$-design, then it also holds that $\{U(x)\}_{x \in \mathcal{X}}$ is a unitary $(t-1)$-design. For example, the Clifford group forms a unitary 3-design, and hence also a unitary 2-design.
\end{remark}

\subsection{Channel Twirlings}
Generally, channel twirlings can be defined with respect to both discrete and continuous sets of unitaries. In its most simple form, for a set of unitaries $\{U_{A}(x), V_{B}(x)\}_{x \in \mathcal{X}}$ and a probability distribution function $p_{X}$ defined over a finite set $\mathcal{X}$, the twirling of a quantum channel $\mathcal{Q}^{A\rightarrow B}$ (which we denote by a tilde symbol $\tilde{\mathcal{Q}}^{A\rightarrow B}$) is defined as
\begin{equation}
    \tilde{\mathcal{Q}}^{A \rightarrow B} \coloneqq \sum_{x\in \mathcal{X}} p_{X}(x)\mathcal{V}_{x}^{B\dagger }\circ \mathcal{Q}^{A\rightarrow B}\circ \mathcal{U}_{x}^{A} \; ,
\end{equation}
where we have used the notation for the unitary channels $\mathcal{U}_{x}^{A}(\cdot)\coloneqq U^{\dagger}_{A}(x)(\cdot)U_{A}(x)$ and $\mathcal{V}_{x}^{B}(\cdot)\coloneqq V^{\dagger}_{B}(x)(\cdot)V_{B}(x)$, for all $ x \in \mathcal{X}$. Although most of the results presented in this article are valid for any finite set $\mathcal{X}$, the case where it forms a group and $\{U_{A}(x), V_{B}(x)\}_{x \in \mathcal{X}}$ two unitary representations of it are of great interest \cite{leditzky2018approaches} (see Remark~\ref{le:leditzky}). 

Twirlings with continuous sets of unitaries have also been studied extensively in the literature. If we have some probability distribution (measure) $\mu(U)$ over the set of $d\times d$ unitary matrices $\mathbb{U}(d)$, then the continuous twirling of the channel $\mathcal{Q}^{A}$ is defined to be
\begin{equation}
    \tilde{\mathcal{Q}} \coloneqq \int_{ \mathbb{U}(d) } d\mu(U) \mathcal{U}^{\dagger} \circ \mathcal{Q} \circ \mathcal{U} \; .
\end{equation}

Twirling of quantum channels plays an important role in QEC and fault-tolerant quantum computing \cite{eggeling2001separability, silva2008scalable, magesan2008gaining, martinez2020approximating, emerson2007symmetrized, dankert2009exact, meier2013randomized}. Examples include: $(1)$ similarities between QEC codes for channels and their twirled versions \cite{silva2008scalable}, $(2)$ the simulability of twirled quantum channels on a quantum computer, due to the Gottesman-Knill theorem \cite{ gottesman1998heisenberg}, $(3)$ the fact that channels and their twirled versions share the same average and entanglement fidelities \cite{horodecki1999general}, $(4)$ various twirlings (Pauli, Clifford, and uniform Haar) rendering channels depolarizing \cite{horodecki1999general, nielsen2002simple, emerson2007symmetrized, dankert2009exact}, $(5)$ and finally, their close connection to unitary $t$-designs. Due to its importance, I recall some relevant properties of unitary $t$-designs in Appendix~\ref{sec:t-design} (also see \cite{dankert2005efficient, magesan2008gaining} for a brief review).

\subsection{Lower-Bounding Generalized Distinguishability Measures Using Entanglement Fidelity} \label{sec:lower_bound}

We start this section by showing a simple property that all generalized distinguishability measures satisfy with respect to channel twirling if the joint convexity property (or equivalently, if the direct sum property) is further assumed.
\begin{lemma} \label{le:gen_fund_lower}
    Assume that we are given two CPTP maps $\mathcal{Q}^{A \rightarrow B}$ and $\mathcal{S}^{A \rightarrow B}$, a set of unitaries $\{U_{A}(x), V_{B}(x)\}_{x \in \mathcal{X}}$, and a probability distribution function $p_{X}$ defined over the finite set $\mathcal{X}$. If the generalized distinguishability measure $\mathbf{D}$ satisfies the joint convexity property, then $\mathbf{D}(\mathcal{Q}, \mathcal{S})$ is lower bounded by the generalized distinguishability measure between the corresponding twirled channels $\tilde{\mathcal{Q}}^{A \rightarrow B}$ and $\tilde{\mathcal{S}}^{A \rightarrow B}$ with respect to the given weighted set of unitaries above, as follows
    \begin{equation}
    \mathbf{D}(\mathcal{Q}, \mathcal{S}) \ge \mathbf{D}(\tilde{\mathcal{Q}}, \tilde{\mathcal{S}}) \; , \label{eqn:gen_lower}
    \end{equation}
    where the lower bound is saturated iff the joint convexity of the generalized distinguishability measure between the two quantum channels is saturated with respect to the above set of weighted unitaries.
\end{lemma}

\renewcommand\qedsymbol{$\blacksquare$}

\begin{proof}
     Consider the isometric invariance property of $\mathbf{D}(\mathcal{Q}, \mathcal{S})$, namely for any $\mathcal{U}\coloneqq U(\cdot)U^{\dagger}$, where $U \in \mathbf{U}(d)$, we have
\begin{align}
    \mathbf{D}(\mathcal{Q}, \mathcal{S}) &=\mathbf{D}(\mathcal{U}\circ \mathcal{Q}, \mathcal{U}\circ \mathcal{S}) \\
    &=\mathbf{D}(\mathcal{Q}\circ \mathcal{U}, \mathcal{S}\circ \mathcal{U}) \; ,
\end{align}
where the first equality follows from Eq.~\eqref{eqn:iso_inv} and the second equality follows from the definition in Eq.~\eqref{eqn:channel_dist}. This implies that for all $ U_{A} \in \mathbf{U}(d_{A})$ and for all $ V_{B} \in \mathbf{U}(d_{B})$
\begin{equation}
     \mathbf{D}(\mathcal{Q}, \mathcal{S}) = \mathbf{D}(\mathcal{V}^{\dagger} \circ \mathcal{Q} \circ \mathcal{U}, \mathcal{V}^{\dagger}\circ \mathcal{S} \circ \mathcal{U}) \; .
\end{equation}
Consequently, by considering the generalized distinguishability measure $\mathbf{D}(\tilde{\mathcal{Q}}, \tilde{\mathcal{S}}) $ between the twirled channels, we arrive at
\begin{align}
    & \mathbf{D}\left(\sum_{x \in \mathcal{X}}p_{X}(x)\mathcal{V}^{x\dagger}\circ \mathcal{Q} \circ \mathcal{U}^{x}, \sum_{x \in \mathcal{X}}p_{X}(x)\mathcal{V}^{x\dagger}\circ \mathcal{S} \circ \mathcal{U}^{x}\right) \\
    & \leq \sum_{x \in \mathcal{X}}p_{X}(x) \mathbf{D}\left(\mathcal{V}^{x\dagger}\circ \mathcal{Q} \circ \mathcal{U}^{x}, \mathcal{V}^{x\dagger}\circ \mathcal{S} \circ \mathcal{U}^{x}\right) \\
    & = \sum_{x \in \mathcal{X}}p_{X}(x) \mathbf{D}\left(\mathcal{Q}, \mathcal{S} \right) = \mathbf{D}\left(\mathcal{Q}, \mathcal{S} \right) \; ,
\end{align}
where the inequality follows from Eq.~\eqref{eqn:double_con}.
\end{proof}

\begin{remark}
    This lemma can be viewed as a special case of a more general result for quantum supermaps. To elaborate, we recall that a supermap (a linear map from one quantum channel to another) can always be expressed as a pre and post-processing maps concatenated with the input quantum channel, and assisted by a memory \cite{chiribella2008transforming}. Then, Lemma~\ref{le:gen_fund_lower} follows from applying the data-processing inequality for generalized distinguishability measures between two quantum channels \cite{gour2019comparison} with respect to channel twirling, which is a valid quantum supermap.
\end{remark}

\begin{remark} \label{le:leditzky}
    In \cite{leditzky2018approaches}, the authors have shown that for any two covariant channels $\mathcal{F}^{A \rightarrow B}$ and $\mathcal{G}^{A \rightarrow B}$ with respect to $\{U_{A}(x), V_{B}(x)\}_{x \in \mathcal{X}}$ (namely that $\mathcal{V}_{x}\circ \mathcal{F}=\mathcal{F}\circ \mathcal{U}_{x}$ for all $x \in \mathcal{X}$, and similarly for $\mathcal{G}$), the generalized distinguishability measure
    \begin{equation}
        \mathbf{D}(\mathcal{F}, \mathcal{G})= \sup_{\phi}\mathbf{D}\left((\textsf{id}\otimes \mathcal{F})(\phi_{RA}), (\textsf{id}\otimes \mathcal{G})(\phi_{RA})\right) \; ,
    \end{equation}
    can be found by maximizing only over symmetric states $\phi_{RA}$, defined as
    \begin{equation}
        \frac{1}{\vert \mathcal{X} \vert}\sum_{x \in \mathcal{X}}U^{\dagger}_{A}(x)\phi_{RA}U_{A}(x)=\phi_{RA} \; .
    \end{equation}
    However, since the twirlings $\mathcal{F}\equiv \tilde{\mathcal{Q}}$ and $\mathcal{G}\equiv \tilde{\mathcal{S}}$ in Lemma~\ref{le:gen_fund_lower} are trivially covariant with respect to the unitary representations $\{U_{A}(x), V_{B}(x)\}_{x \in \mathcal{X}}$ of the finite group $\mathcal{X}$, this implies that the lower bound in Eq.~\eqref{eqn:gen_lower} need only be computed for such symmetric states. Furthermore, if $\{U_{A}(x)\}_{x\in \mathcal{X}}$ is a unitary 1-design (i.e. it is an irreducible representation of the group $\mathcal{X}$ of degree-$d_{A}$), then, using the property Eq.~\eqref{eqn:unitary_1} of unitary 1-designs, the maximization is found by computing the generalized distinguishability measure exactly for the maximally entangled state
    \begin{equation}
        \mathbf{D}(\tilde{\mathcal{Q}}, \tilde{\mathcal{S}})= \mathbf{D}((\textsf{id}^{R}\otimes \tilde{\mathcal{Q}}^{A})(\Phi_{RA}), (\textsf{id}^{R}\otimes \tilde{\mathcal{S}}^{A})(\Phi_{RA})) \; .
    \end{equation}
\end{remark}

So far, we have shown that the generalized distinguishability measure between $\mathcal{Q}^{A \rightarrow B}$ and $\mathcal{S}^{A \rightarrow B}$ is lower bounded by the corresponding distinguishability measure for arbitrary discrete twirlings of these channels. We now show that a similar lower bound can be derived for the uniform Haar twirling.  But first, we recall the following important result
\begin{lemma} \label{le:twirling}
(\cite{horodecki1999general}) Given a CPTP map $\mathcal{Q}^{A \rightarrow A}$ and for all $ \rho \in \mathcal{D}(\mathcal{H}^{A})$, the continuous twirling $\tilde{\mathcal{Q}}=\int_{\mathbf{U}(d)}dU\mathcal{U}^{\dagger}\circ \mathcal{Q}\circ \mathcal{U}$ over the uniform Haar measure on the set of $d \times d$ unitary matrices $\mathbf{U}(d)$ is given by the depolarizing channel
\begin{equation}
    \tilde{\mathcal{Q}}(\rho)=(1-p^{\mathcal{Q}})\rho+p^{\mathcal{Q}}\frac{I}{d} \; , 
    \label{eqn:depolarizing_ch}
\end{equation}
where the depolarizing parameter $p^{\mathcal{Q}}$ is given by the average fidelity of $\mathcal{Q}$, as follows
\begin{equation}
    p^{\mathcal{Q}}=\frac{d}{d-1}\left(1-F_{\text{avg}}(\mathcal{Q})\right) \; . \label{eqn:dep_parameter}
\end{equation}
\end{lemma}
The proof of Eq.~\eqref{eqn:depolarizing_ch} is shown in \cite{horodecki1999general, nielsen2002simple} for some parameter value $p^{\mathcal{Q}}$. Eq.~\eqref{eqn:dep_parameter} is a direct consequence of the fact that the uniform Haar twirled channel $\tilde{\mathcal{Q}}=\int_{\mathbf{U}(d)}dU\mathcal{U}^{\dagger}\circ \mathcal{Q}\circ \mathcal{U}$ has the same average fidelity as the original channel $\mathcal{Q}$ \cite{horodecki1999general}, along with the fact that the average fidelity of the depolarizing channel is given by 
\begin{equation}
    F_{\text{avg}}(\tilde{\mathcal{Q}})=1-\left(\frac{d-1}{d} \right)p \; ,
\end{equation}
where we have used the normalization $\int d\psi=1$ and the notation $p$ for the depolarizing parameter.

Using the above Lemmas~\ref{le:gen_fund_lower} and \ref{le:twirling}, we now establish a similar lower bound to that in Lemma~\ref{le:gen_fund_lower} for the uniform Haar twirl.

\begin{theorem}\label{th:main_th}
    Assume that we are given two CPTP maps $\mathcal{Q}^{A \rightarrow A}$ and $\mathcal{S}^{A \rightarrow A}$. If the generalized distinguishability measure $\mathbf{D}$ satisfies the joint convexity property, then $\mathbf{D}(\mathcal{Q}, \mathcal{S})$ is lower bounded by some function $l_{\mathbf{D}}$ of the channel entanglement fidelities $F_{e}(\mathcal{Q})$ and $F_{e}(\mathcal{S})$, as follows
    \begin{equation}
        \mathbf{D}(\mathcal{Q}, \mathcal{S}) \geq l_{\mathbf{D}}(F_{e}(\mathcal{Q}), F_{e}(\mathcal{S})) \; ,
    \end{equation}
    where the specific form of the function $l_{\mathbf{D}}$ depends on the choice of the generalized distinguishability measure and is determined by the uniform Haar twirls, as follows
    \begin{equation}
        l_{\mathbf{D}}(F_{e}(\mathcal{Q}), F_{e}(\mathcal{S})) \equiv \mathbf{D}(\tilde{\mathcal{Q}}, \tilde{\mathcal{S}})  \; ,
    \end{equation}
    where $\tilde{\mathcal{Q}}=\int_{\mathbf{U}(d)}dU\mathcal{U}^{\dagger}\circ \mathcal{Q}\circ \mathcal{U}$ and $\tilde{\mathcal{S}}=\int_{\mathbf{U}(d)}dU\mathcal{U}^{\dagger}\circ \mathcal{S}\circ \mathcal{U}$ yield two depolarizing channels. The inequality is saturated if the joint convexity property is saturated for a set of unitary 2-designs and a uniform probability distribution over this set.
\end{theorem}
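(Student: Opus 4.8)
The plan is to deduce Theorem~\ref{th:main_th} from the finite-twirling bound of Lemma~\ref{le:gen_fund_lower} by taking the discrete set of unitaries to be a unitary 2-design. The point is that, for the purpose of twirling a channel, averaging over a (suitably weighted) unitary 2-design is indistinguishable from averaging over the full Haar measure, so the statement is really a corollary of the already-established discrete case rather than a genuinely new, continuous result.

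First I would fix a finite unitary 2-design $\{U_A(x)\}_{x\in\mathcal{X}}$ on $\mathcal{H}^A$ together with its weights $p_X$ (such weighted designs exist in every dimension $d$; for $d$ a prime power one may take the uniform distribution on the Clifford group), and take $V_B(x)=U_A(x)$, which is admissible since $B=A$ here, so that the twirl has the form appearing in Lemma~\ref{le:twirling}. The channel-twirling operation $\mathcal{Q}\mapsto\mathcal{U}_x^{\dagger}\circ\mathcal{Q}\circ\mathcal{U}_x$, written in the Liouville representation, depends on $U_A(x)$ only through products of two of its entries with two of their complex conjugates, i.e.\ only through the second Haar moments; hence $\tilde{\mathcal{Q}}\coloneqq\sum_{x}p_X(x)\,\mathcal{U}_x^{\dagger}\circ\mathcal{Q}\circ\mathcal{U}_x$ equals the Haar twirl $\int_{\mathbf{U}(d)}dU\,\mathcal{U}^{\dagger}\circ\mathcal{Q}\circ\mathcal{U}$ exactly, and likewise for $\mathcal{S}$. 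By Lemma~\ref{le:twirling} these are depolarizing channels with parameters $p^{\mathcal{Q}}=\frac{d}{d-1}(1-F_{\text{avg}}(\mathcal{Q}))$ and $p^{\mathcal{S}}=\frac{d}{d-1}(1-F_{\text{avg}}(\mathcal{S}))$. Applying Lemma~\ref{le:gen_fund_lower} to this weighted set yields directly $\mathbf{D}(\mathcal{Q},\mathcal{S})\ge\mathbf{D}(\tilde{\mathcal{Q}},\tilde{\mathcal{S}})$. Finally, by the Horodecki relation \eqref{eqn:HHH} one has $F_{\text{avg}}(\mathcal{Q})=\frac{dF_e(\mathcal{Q})+1}{d+1}$, so $p^{\mathcal{Q}},p^{\mathcal{S}}$—and therefore the channels $\tilde{\mathcal{Q}},\tilde{\mathcal{S}}$ themselves—depend on $\mathcal{Q},\mathcal{S}$ only through $F_e(\mathcal{Q})$ and $F_e(\mathcal{S})$ at fixed $d$; hence setting $l_{\mathbf{D}}(F_e(\mathcal{Q}),F_e(\mathcal{S}))\coloneqq\mathbf{D}(\tilde{\mathcal{Q}},\tilde{\mathcal{S}})$ gives a well-defined function and the claimed inequality. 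For the saturation clause I would invoke the equality condition of Lemma~\ref{le:gen_fund_lower}: the bound is attained precisely when the joint-convexity inequality \eqref{eqn:double_con} holds with equality for the chosen weighted unitaries, which for a unitary 2-design with uniform weights is exactly the hypothesis stated in the theorem.

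The step deserving the most care is the claim that a uniform (or suitably weighted) average over a finite unitary 2-design reproduces the Haar channel twirl exactly. This rests on two ingredients: the existence of exact finite unitary 2-designs in arbitrary dimension, and the moment-matching observation that the channel-twirling superoperator is a fixed quadratic expression in the entries of the twirling unitary and their complex conjugates—both standard, but worth stating precisely, with the second justified in Appendix~\ref{sec:t-design}. An alternative that avoids appealing to 2-designs is to redo the joint-convexity estimate from the proof of Lemma~\ref{le:gen_fund_lower} directly for the continuous Haar average, using a limiting argument from finite mixtures together with continuity of $\mathbf{D}$; this is more technical and less transparent about saturation, so I would present the 2-design argument as the primary one.
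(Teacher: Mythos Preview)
Your proposal is correct and follows essentially the same route as the paper: apply Lemma~\ref{le:gen_fund_lower} with a unitary 2-design (the paper cites the Clifford group with the uniform distribution), invoke the 2-design property \eqref{eqn:unitary_2_des} to replace the discrete twirl by the Haar twirl, and then use Lemma~\ref{le:twirling} to identify the twirled channels as depolarizing with parameters fixed by $F_e$. Your write-up is slightly more explicit than the paper's in justifying why the twirl is a degree-two moment and in spelling out via \eqref{eqn:HHH} that $l_{\mathbf{D}}$ depends only on $F_e(\mathcal{Q}),F_e(\mathcal{S})$, but the argument is the same.
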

\begin{proof}
    This is a direct consequence of applying Lemma~\ref{le:gen_fund_lower} to any unitary 2-design $\{ U_{A}(x) \}_{x \in \mathcal{X}}$, e.g. the unitary representation of the Clifford group (see Appendix~\ref{sec:t-design}), along with a uniform distribution on $\mathcal{X}$, and then using the property of unitary 2-designs in Eq.~\eqref{eqn:unitary_2_des}, which finally yields
    \begin{equation}
    \mathbf{D}(\mathcal{Q}, \mathcal{S}) \ge \mathbf{D}(\tilde{\mathcal{Q}}, \tilde{\mathcal{S}}) \; , \label{eqn:uniform_lower}
    \end{equation}
        where $\tilde{\mathcal{Q}}=\int_{\mathbf{U}(d)}dU\mathcal{U}^{\dagger}\circ \mathcal{Q}\circ \mathcal{U}$ and $\tilde{\mathcal{S}}=\int_{\mathbf{U}(d)}dU\mathcal{U}^{\dagger}\circ \mathcal{S}\circ \mathcal{U}$. The proof is completed by applying Lemma~\ref{le:twirling} and plugging in the depolarizing channels into the lower bound in Eq.~\eqref{eqn:uniform_lower}.
\end{proof}

It directly follows from this proof that the image of the function $l_{\mathbf{D}}$ coincides with the image of the corresponding generalized distinguishability measure $\mathbf{D}$.

\begin{remark}
    The lower bound proof does not require faithfulness, symmetry, nor the triangle inequality, which would also make $\mathbf{D}$ a generalized distance measure. However, the triangle inequality becomes necessary when deriving an upper bound for the generalized distinguishability measure for concatenated noisy channels (or gates), as it is relevant to fault-tolerant quantum computing (see Appendix~\ref{sec:upper_bound} for more details). 
\end{remark}

\section{Comment on The Chaining Property} \label{sec:chaining}
In quantum computing literature, one encounters the chaining property for distance measures \cite{kueng2016comparing}, which is useful for computing upper bounds on error propagation in fault-tolerant quantum computing. This property is framed as follows: Assume we want to apply two maps $\mathcal{Q}$ and $\mathcal{S}$ in series, however, we only have access to their noisy versions, which we denote by $\mathcal{Q}^{\prime}$ and $\mathcal{S}^{\prime}$, respectively. If the generalized distance measure $\mathbf{D}$ also satisfies the DPI (i.e. $\mathbf{D}$ is also a distinguishability measure), then the chaining property reads
\begin{align}
    \mathbf{D}(\mathcal{S}\circ \mathcal{Q}(\rho), \mathcal{S}^{\prime}\circ \mathcal{Q}^{\prime}(\rho)) \leq \mathbf{D}(\mathcal{Q}(\rho), \mathcal{Q}^{\prime}(\rho)) \nonumber \\ + \mathbf{D}(\mathcal{S}(\rho), \mathcal{S}^{\prime}(\rho)) \; ,
\end{align}
for all $\rho \in \mathcal{D}(\mathcal{H})$. This is interpreted by saying that the error due to a consecutive application of two faulty channels is no larger than the sum of the errors of applying each of the faulty channels separately. The proof follows by first applying the triangle inequality to the left-hand side of the above inequality, followed up by the date-processing inequality. Therefore, the desirable chaining property is derivative from other, more fundamental, properties of $\mathbf{D}$.

\section{Upper-Bounding Generalized Distance Measures for State Recovery} \label{sec:upper_bound}
Here we present upper bounds on generalized distance and distinguishability measures, showing how they get modified when limited knowledge about the noise parameter $\theta \in \Theta$ is available, both for the single-cycle and multi-cycle cases. Similar to the chaining property, the derivation of upper bounds on generalized distance and distinguishability measures is important for the analysis of error propagation in noisy quantum processes.

\subsection{Single-Cycle Case}
Consider the distance measure $\mathbf{D}$ and assume that for all $ \rho \in \mathcal{D}(\mathcal{C}) \subseteq \mathcal{D}(\mathcal{H})$, approximate recovery from the noise $\mathcal{N}_{\theta}$ is possible in the presence of perfect information about $\theta$, i.e. there exists $\mathcal{R}_{\theta}$ such that
\begin{equation}
    \mathbf{D}(\mathcal{I}^{\theta}_{\theta}(\rho), \rho) \leq \epsilon_{\theta} \hspace{0.2cm} \text{where} \hspace{0.2cm} \mathcal{I}^{\beta}_{\alpha}\equiv \mathcal{R}_{\beta} \circ \mathcal{N}_{\alpha} \; . \label{eqn:upper_1}
\end{equation}
Now consider the distance measure $\mathbf{D}(\mathcal{I}^{\hat{\theta}}_{\theta}(\rho), \rho)$, where $\hat{\theta}$ is the best unbiased estimate of $\theta \in \Theta$. Our goal is to bound this quantity from above by two terms: the first depends on how well we can bound the same distance measure when given perfect knowledge about $\theta$ (see Eq.~\eqref{eqn:upper_1}), and the second should measure our lack of knowledge of the noise parameter $\theta$. This intuition is validated by applying the triangle inequality, as follows
\begin{align}
    \mathbf{D}(\mathcal{I}^{\hat{\theta}}_{\theta}(\rho), \rho) &\leq \mathbf{D}(\mathcal{I}^{\hat{\theta}}_{\theta}(\rho), \mathcal{I}^{\hat{\theta}}_{\hat{\theta}}(\rho)) + \mathbf{D}(\mathcal{I}^{\hat{\theta}}_{\hat{\theta}}(\rho), \rho) \\
    & \leq \mathbf{D}(\mathcal{N}_{\theta}(\rho), \mathcal{N}_{\hat{\theta}}(\rho))+ \mathbf{D}(\mathcal{I}^{\hat{\theta}}_{\hat{\theta}}(\rho), \rho) \\
    & \leq \mathbf{D}(\mathcal{N}_{\theta}, \mathcal{N}_{\hat{\theta}})+ \mathbf{D}(\mathcal{I}^{\hat{\theta}}_{\hat{\theta}}(\rho), \rho) \\
    & \leq \mathbf{D}(\mathcal{N}_{\theta}, \mathcal{N}_{\hat{\theta}})+ \epsilon_{\hat{\theta}} \equiv \epsilon_{\theta, \hat{\theta}} \; , \label{eqn:simple_upper}
\end{align}
where the second inequality follows from the DPI of $\mathbf{D}$, the third follows from the definition of the generalized distance for channels, and the fourth from the assumption of Eq.~\eqref{eqn:upper_1}. It is worth noting that one can derive a similar upper bound using the recoveries, rather than the noisy channels. The advantage of this approach is that we do not need to assume that $\mathbf{D}$ satisfies the DPI, i.e. it suffices for $\mathbf{D}$ to be a distance measure. To see how we simply apply the triangle inequality
\begin{align}
    \mathbf{D}(\mathcal{I}^{\hat{\theta}}_{\theta}(\rho), \rho) &\leq \mathbf{D}(\mathcal{I}^{\hat{\theta}}_{\theta}(\rho), \mathcal{I}^{\theta}_{\theta}(\rho)) + \mathbf{D}(\mathcal{I}^{\theta}_{\theta}(\rho), \rho) \\
    & \leq \mathbf{D}(\mathcal{R}_{\theta}, \mathcal{R}_{\hat{\theta}})+ \epsilon_{\theta} \equiv \epsilon^{\prime}_{\theta, \hat{\theta}} \; ,
\end{align}
where we have used the definition of a distance measure between channels for the second inequality, as well as Eq.~\eqref{eqn:upper_1}. We will shortly show that DPI becomes necessary when considering the multi-cycle case.

\subsection{Multi-Cycle Case}
Let us now extend the upper bound previously derived in the single-cycle case to adaptive multi-cycle recovery. Using the shorthand notation 
\begin{equation}
    \mathbf{D}^{\beta_{n}\beta_{n-1}\cdots \beta_{1}}_{\alpha_{n}\alpha_{n-1}\cdots \alpha_{1}}(\rho) \equiv \mathbf{D}(\mathcal{I}^{\beta_{n}\beta_{n-1}\cdots \beta_{1}}_{\alpha_{n}\alpha_{n-1}\cdots \alpha_{1}}(\rho), \rho) \; ,
\end{equation}
where 
\begin{equation}
    \mathcal{I}^{\beta_{n}\beta_{n-1}\cdots \beta_{1}}_{\alpha_{n}\alpha_{n-1}\cdots \alpha_{1}}\equiv \mathcal{R}_{\beta_{n}} \circ \mathcal{E}_{\alpha_{n}} \circ \mathcal{R}_{\beta_{n-1}} \circ \mathcal{E}_{\alpha_{n-1}} \cdots \circ \mathcal{R}_{\beta_{1}} \circ \mathcal{E}_{\alpha_{1}} \; ,
\end{equation}
and applying the triangle inequality, we get
\begin{align}
    \mathbf{D}^{\hat{\theta}_{n}\hat{\theta}_{n-1}\cdots \hat{\theta}_{1}}_{\theta_{n}\theta_{n-1}\cdots \theta_{1}}(\rho) &\leq  \mathbf{D}(\mathcal{I}^{\hat{\theta}_{n}\hat{\theta}_{n-1}\cdots \hat{\theta}_{1}}_{\theta_{n}\theta_{n-1}\cdots \theta_{1}}(\rho), \mathcal{I}^{\hat{\theta}_{n}\hat{\theta}_{n-1}\cdots \hat{\theta}_{1}}_{\hat{\theta}_{n}\hat{\theta}_{n-1}\cdots \hat{\theta}_{1}}(\rho)) \nonumber \\ &+ \mathbf{D}^{\hat{\theta}_{n}\hat{\theta}_{n-1}\cdots \hat{\theta}_{1}}_{\hat{\theta}_{n}\hat{\theta}_{n-1}\cdots \hat{\theta}_{1}}(\rho) \; . \label{eqn:true_upper}
\end{align}
The second term could be bounded from above by the individual errors $\{ \epsilon_{\hat{\theta}_{i}} \}_{i=1}^{n}$, using only the triangle inequality, as follows
\begin{align}
    \mathbf{D}^{\hat{\theta}_{n}\hat{\theta}_{n-1}\cdots \hat{\theta}_{1}}_{\hat{\theta}_{n}\hat{\theta}_{n-1}\cdots \hat{\theta}_{1}}(\rho) &\leq \mathbf{D}(\mathcal{I}^{\hat{\theta}_{n}\hat{\theta}_{n-1}\cdots \hat{\theta}_{1}}_{\hat{\theta}_{n}\hat{\theta}_{n-1}\cdots \hat{\theta}_{1}}(\rho), \mathcal{I}^{\hat{\theta}_{n-1}\cdots \hat{\theta}_{1}}_{\hat{\theta}_{n-1}\cdots \hat{\theta}_{1}}(\rho)) \nonumber \\ &+ \mathbf{D}^{\hat{\theta}_{n-1}\cdots \hat{\theta}_{1}}_{\hat{\theta}_{n-1}\cdots \hat{\theta}_{1}}(\rho) \; \label{eqn:ideal_upper}.
\end{align}
We assume that
\begin{equation}
    \mathcal{I}^{\hat{\theta}_{n-1}\cdots \hat{\theta}_{1}}_{\hat{\theta}_{n-1}\cdots \hat{\theta}_{1}}(\rho) \in \mathcal{D}(\mathcal{C}) \; , \label{eqn:possible_AQEC}
\end{equation} so that the $n$-th step approximate recovery with perfect knowledge of $\theta$ would be possible, in principle. This leads to
\begin{align}
    &\mathbf{D}(\mathcal{I}^{\hat{\theta}_{n}\hat{\theta}_{n-1}\cdots \hat{\theta}_{1}}_{\hat{\theta}_{n}\hat{\theta}_{n-1}\cdots \hat{\theta}_{1}}(\rho), \mathcal{I}^{\hat{\theta}_{n-1}\cdots \hat{\theta}_{1}}_{\hat{\theta}_{n-1}\cdots \hat{\theta}_{1}}(\rho)) \\ 
    &= \mathbf{D}(\mathcal{I}_{\hat{\theta}_{n}}^{\hat{\theta}_{n}}(\mathcal{I}^{\hat{\theta}_{n-1}\cdots \hat{\theta}_{1}}_{\hat{\theta}_{n-1}\cdots \hat{\theta}_{1}}(\rho)), \mathcal{I}^{\hat{\theta}_{n-1}\cdots \hat{\theta}_{1}}_{\hat{\theta}_{n-1}\cdots \hat{\theta}_{1}}(\rho)) \leq \epsilon_{\hat{\theta}_{n}} \; .
\end{align}
Substituting this result back into Eq.~\eqref{eqn:ideal_upper}, we get
\begin{align}
    \mathbf{D}^{\hat{\theta}_{n}\hat{\theta}_{n-1}\cdots \hat{\theta}_{1}}_{\hat{\theta}_{n}\hat{\theta}_{n-1}\cdots \hat{\theta}_{1}}(\rho) \leq \mathbf{D}^{\hat{\theta}_{n-1}\cdots \hat{\theta}_{1}}_{\hat{\theta}_{n-1}\cdots \hat{\theta}_{1}}(\rho)+\epsilon_{\hat{\theta}_{n}} \; ,
\end{align}
and repeating the above two steps yields
\begin{equation}
    \mathbf{D}^{\hat{\theta}_{n}\hat{\theta}_{n-1}\cdots \hat{\theta}_{1}}_{\hat{\theta}_{n}\hat{\theta}_{n-1}\cdots \hat{\theta}_{1}} \leq \sum_{i=1}^{n} \epsilon_{\hat{\theta}_{i}} \; . \label{eqn:1realAQEC}
\end{equation}
The first term in Eq.~\eqref{eqn:true_upper} is a new error term due to the real-time (drift-adapting) nature of our setup. This term can be bounded from above using the chaining property and the DPI, as follows
\begin{align}
    & \mathbf{D}(\mathcal{I}^{\hat{\theta}_{n}\hat{\theta}_{n-1}\cdots \hat{\theta}_{1}}_{\theta_{n}\theta_{n-1}\cdots \theta_{1}}(\rho), \mathcal{I}^{\hat{\theta}_{n}\hat{\theta}_{n-1}\cdots \hat{\theta}_{1}}_{\hat{\theta}_{n}\hat{\theta}_{n-1}\cdots \hat{\theta}_{1}}(\rho)) \notag \\
    & = \mathbf{D}(\mathcal{I}^{\hat{\theta}_{n}}_{\theta_{n}}\circ \mathcal{I}^{\hat{\theta}_{n-1}\cdots \hat{\theta}_{1}}_{\theta_{n-1}\cdots \theta_{1}}(\rho), \mathcal{I}^{\hat{\theta}_{n}}_{\hat{\theta}_{n}}\circ \mathcal{I}^{\hat{\theta}_{n-1}\cdots \hat{\theta}_{1}}_{\hat{\theta}_{n-1}\cdots \hat{\theta}_{1}}(\rho)) \\
    & \leq \mathbf{D}(\mathcal{I}^{\hat{\theta}_{n}}_{\theta_{n}}(\rho), \mathcal{I}^{\hat{\theta}_{n}}_{\hat{\theta}_{n}}(\rho)) + \mathbf{D}( \mathcal{I}^{\hat{\theta}_{n-1}\cdots \hat{\theta}_{1}}_{\theta_{n-1}\cdots \theta_{1}}(\rho), \mathcal{I}^{\hat{\theta}_{n-1}\cdots \hat{\theta}_{1}}_{\hat{\theta}_{n-1}\cdots \hat{\theta}_{1}}(\rho)) \\
    & \leq \mathbf{D}(\mathcal{N}_{\theta_{n}}(\rho), \mathcal{N}_{\hat{\theta}_{n}}(\rho)) + \mathbf{D}( \mathcal{I}^{\hat{\theta}_{n-1}\cdots \hat{\theta}_{1}}_{\theta_{n-1}\cdots \theta_{1}}(\rho), \mathcal{I}^{\hat{\theta}_{n-1}\cdots \hat{\theta}_{1}}_{\hat{\theta}_{n-1}\cdots \hat{\theta}_{1}}(\rho)) \\
    & \leq \mathbf{D}(\mathcal{N}_{\theta_{n}}, \mathcal{N}_{\hat{\theta}_{n}}) + \mathbf{D}( \mathcal{I}^{\hat{\theta}_{n-1}\cdots \hat{\theta}_{1}}_{\theta_{n-1}\cdots \theta_{1}}(\rho), \mathcal{I}^{\hat{\theta}_{n-1}\cdots \hat{\theta}_{1}}_{\hat{\theta}_{n-1}\cdots \hat{\theta}_{1}}(\rho))\; .
\end{align}
Repeating the above steps $n-1$ times, we arrive at the upper bound 
\begin{equation}
    \mathbf{D}(\mathcal{I}^{\hat{\theta}_{n}\hat{\theta}_{n-1}\cdots \hat{\theta}_{1}}_{\theta_{n}\theta_{n-1}\cdots \theta_{1}}(\rho), \mathcal{I}^{\hat{\theta}_{n}\hat{\theta}_{n-1}\cdots \hat{\theta}_{1}}_{\hat{\theta}_{n}\hat{\theta}_{n-1}\cdots \hat{\theta}_{1}}(\rho)) \leq \sum_{i=1}^{n}\mathbf{D}(\mathcal{N}_{\theta_{i}}, \mathcal{N}_{\hat{\theta}_{i}}) \; . \label{eqn:2realAQEC}
\end{equation}
Combining Eqs.~\eqref{eqn:1realAQEC} and  \eqref{eqn:2realAQEC} with Eq.~\eqref{eqn:true_upper}, we get
\begin{equation}
    D^{\hat{\theta}_{n}\hat{\theta}_{n-1}\cdots \hat{\theta}_{1}}_{\theta_{n}\theta_{n-1}\cdots \theta_{1}} \leq \sum_{i=1}^{n}[\mathbf{D}(\mathcal{N}_{\theta_{i}}, \mathcal{N}_{\hat{\theta}_{i}})+\epsilon_{\hat{\theta}_{i}}] \equiv \sum_{i=1}^{n}\epsilon_{\theta_{i}, \hat{\theta}_{i}} \; ,
\end{equation}
which generalizes Eq.~\eqref{eqn:simple_upper} for real-time approximate recovery. This result says that, if AQEC is possible in principle (see Eq.~\eqref{eqn:possible_AQEC}) when perfect knowledge of $\theta$ is available, then AQEC is also possible when knowledge about $\theta$ is limited. As we have shown, this holds for both the single-cycle and multi-cycle regimes.

Alternatively, we can derive an upper bound that is a function of the recoveries, rather than the noisy channels. This is accomplished as follows
\begin{align}
    \mathbf{D}^{\hat{\theta}_{n}\hat{\theta}_{n-1}\cdots \hat{\theta}_{1}}_{\theta_{n}\theta_{n-1}\cdots \theta_{1}}(\rho) &\leq  \mathbf{D}(\mathcal{I}^{\hat{\theta}_{n}\hat{\theta}_{n-1}\cdots \hat{\theta}_{1}}_{\theta_{n}\theta_{n-1}\cdots \theta_{1}}(\rho), \mathcal{I}^{\theta_{n}\theta_{n-1}\cdots \theta_{1}}_{\theta_{n}\theta_{n-1}\cdots \theta_{1}}(\rho)) \nonumber \\ &+ \mathbf{D}^{\theta_{n}\theta_{n-1}\cdots \theta_{1}}_{\theta_{n}\theta_{n-1}\cdots \theta_{1}}(\rho) \; , \label{eqn:2AQEC}
\end{align}
where the second term is similarly bounded from above by $\sum_{i=1}^{n}\epsilon_{\theta_{i}}$, based only on the triangle inequality (see Eq.~\eqref{eqn:1realAQEC}). We now upper bound the first term in the above inequality as
\begin{align}
    & \mathbf{D}(\mathcal{I}^{\hat{\theta}_{n}\hat{\theta}_{n-1}\cdots \hat{\theta}_{1}}_{\theta_{n}\theta_{n-1}\cdots \theta_{1}}(\rho), \mathcal{I}^{\theta_{n}\theta_{n-1}\cdots \theta_{1}}_{\theta_{n}\theta_{n-1}\cdots \theta_{1}}(\rho)) \nonumber \\
    & \leq \mathbf{D}(\mathcal{I}^{\hat{\theta}_{n}\hat{\theta}_{n-1}\cdots \hat{\theta}_{1}}_{\theta_{n}\theta_{n-1}\cdots \theta_{1}}(\rho), \mathcal{I}^{\hat{\theta}_{n}\theta_{n-1}\cdots \theta_{1}}_{\theta_{n}\theta_{n-1}\cdots \theta_{1}}(\rho)) \nonumber \\
    & +\mathbf{D}(\mathcal{I}^{\hat{\theta}_{n}\theta_{n-1}\cdots \theta_{1}}_{\theta_{n}\theta_{n-1}\cdots \theta_{1}}(\rho), \mathcal{I}^{\theta_{n}\theta_{n-1}\cdots \theta_{1}}_{\theta_{n}\theta_{n-1}\cdots \theta_{1}}(\rho)) \\
    & \leq \mathbf{D}(\mathcal{I}^{\hat{\theta}_{n-1}\cdots \hat{\theta}_{1}}_{\theta_{n-1}\cdots \theta_{1}}(\rho), \mathcal{I}^{\theta_{n-1}\cdots \theta_{1}}_{\theta_{n-1}\cdots \theta_{1}}(\rho))+\mathbf{D}(\mathcal{R}_{\hat{\theta}_{n}}, \mathcal{R}_{\theta_{n}}) \; ,
\end{align}
where we have used the triangle inequality for the first inequality and the DPI and the definition of generalized distance measure between channels for the second inequality. By repeating these two steps $n-1$ times, we arrive at
\begin{align}
    & \mathbf{D}(\mathcal{I}^{\hat{\theta}_{n}\hat{\theta}_{n-1}\cdots \hat{\theta}_{1}}_{\theta_{n}\theta_{n-1}\cdots \theta_{1}}(\rho), \mathcal{I}^{\theta_{n}\theta_{n-1}\cdots \theta_{1}}_{\theta_{n}\theta_{n-1}\cdots \theta_{1}}(\rho)) \leq \sum_{i=1}^{n} \mathbf{D}(\mathcal{R}_{\hat{\theta}_{i}}, \mathcal{R}_{\theta_{i}}) \; .
\end{align}
Consequently, Eq.~\eqref{eqn:2AQEC} yields
\begin{equation}
    \mathbf{D}^{\hat{\theta}_{n}\hat{\theta}_{n-1}\cdots \hat{\theta}_{1}}_{\theta_{n}\theta_{n-1}\cdots \theta_{1}} \leq \sum_{i=1}^{n}[\mathbf{D}(\mathcal{R}_{\theta_{i}}, \mathcal{R}_{\hat{\theta}_{i}})+\epsilon_{\theta_{i}}] \equiv \sum_{i=1}^{n}\epsilon_{\theta_{i}, \hat{\theta}_{i}} \; .
\end{equation}

\section{Necessary and Sufficient Condition for Independence of the Reduced Channel From a Mother Channel Parameter} \label{apx:param_indep}
In this appendix, we are interested in proving the following:
\begin{lemma}
    Consider the mother channel $\mathcal{Z}_{\bm{\theta}}^{MS}$ of a bipartite system $MS$, where $\bm{\theta} \in \Theta^{p}$ is a $p$-dimensional parameter space. The reduced dynamics of subsystem $S$, defined by partial tracing over the subsystem $M$ via $\mathcal{M}_{\bm{\theta}}^{S} \equiv \operatorname{Tr}_{M}\circ \mathcal{Z}_{\bm{\theta}}^{MS}$, is independent of the $\alpha$-th component of the $p$-dimensional vector $\bm{\theta}$ if and only if the Choi matrix $\Gamma^{\mathcal{Z}_{\bm{\theta}}}$ of the mother channel satisfies
    \begin{equation}
        \operatorname{Tr}_{M^{\prime \prime} S^{\prime \prime}}\left[  \left( \frac{\partial}{\partial \theta_{\alpha}}\Gamma^{\mathcal{Z}_{\bm{\theta}}}_{MS,M^{\prime \prime }S^{\prime \prime}} \right)(2P_{S^{\prime \prime}S^{\prime}}^{\text{sym}}-I_{S^{\prime \prime}S^{\prime}}) \right]=0 \; , 
    \end{equation}
    where $P^{\text{sym}}_{S^{\prime \prime}S^{\prime}}$ is the projector onto the symmetric subspace 
	\begin{equation}
		\text{span}(|\mu \rangle_{S^{\prime \prime}} |\nu \rangle_{S^{\prime}} + |\nu \rangle_{S^{\prime \prime}} |\mu \rangle_{S^{\prime}}) \subset \mathcal{H}^{S^{\prime \prime}}\otimes \mathcal{H}^{S^{\prime}} \; ,
	\end{equation}
    and $\{|\mu \rangle\}$ is a basis set of the Hilbert space $\mathcal{H}^{S}$.
\end{lemma}
\begin{proof}
    We consider the dynamics of the combined memory-spectator $(MS)$ system, and express it in terms of the Choi matrix of the mother channel $\mathcal{Z}_{\bm{\theta}}^{MS}$, as follows \cite{khatri2020principles}
\begin{align}
    &\mathcal{Z}_{\bm{\theta}}^{MS \rightarrow M^{\prime}S^{\prime}}(\rho_{MS}) \nonumber \\ &= \mathrm{Tr}_{MS}\left[ (\mathbb{T}_{MS}(\rho_{MS})\otimes I_{M^{\prime}S^{\prime}}) \Gamma^{\mathcal{Z}_{\bm{\theta}}}_{MS,M^{\prime}S^{\prime}} \right] \; ,
\end{align}
then the reduced dynamics of the spectator system yields
\begin{align}
    &\operatorname{Tr}_{M^{\prime}} \circ \mathcal{Z}_{\bm{\theta}}^{MS \rightarrow M^{\prime}S^{\prime}}(\rho_{MS}) \nonumber \\ &= \mathrm{Tr}_{MS}\left[ (\mathbb{T}_{MS}(\rho_{MS})\otimes I_{S^{\prime}}) \Gamma^{\operatorname{Tr}_{M}\circ \mathcal{Z}_{\bm{\theta}}}_{MS,S^{\prime}} \right] \; .
\end{align}
Therefore, for the reduced dynamics to be independent of the noise parameter $\theta_{\alpha}$ for some $\alpha=1, \cdots, p$ and any joint input state $\rho_{MS}$, we must have
\begin{equation}
    \frac{\partial}{\partial \theta_{\alpha}}\Gamma^{\operatorname{Tr}_{M}\circ \mathcal{Z}_{\bm{\theta}}}_{MS,S^{\prime}}=0 \; , \label{eqn:sufficient_cond}
\end{equation}
for all $\bm{\theta} \in \Theta^{p}$. We now derive a necessary and sufficient condition for this equality to hold, in terms of the Choi matrix $\Gamma^{\mathcal{Z}_{\bm{\theta}}}_{MS,M^{\prime}S^{\prime}}$ of the mother channel. We start by recalling the formula for the Choi matrix of the composite channel in terms of the Choi matrices of the individual channels \cite{khatri2020principles}
\begin{align}
    & \Gamma^{\operatorname{Tr}_{M}\circ \mathcal{Z}_{\bm{\theta}}}_{MS,S^{\prime}} \nonumber \\ &= \operatorname{Tr}_{M^{\prime \prime} S^{\prime \prime}}\left[ \mathbb{T}_{M^{\prime \prime}S^{\prime \prime}}\left( \Gamma^{\mathcal{Z}_{\bm{\theta}}}_{MS,M^{\prime \prime }S^{\prime \prime}} \right) \Gamma_{M^{\prime \prime} S^{\prime \prime}, S^{\prime}}^{\operatorname{Tr}_{M}} \right] \\ &=
    \operatorname{Tr}_{M^{\prime \prime} S^{\prime \prime}}\left[ \Gamma^{\mathcal{Z}_{\bm{\theta}}}_{MS,M^{\prime \prime }S^{\prime \prime}} \mathbb{T}^{\dagger}_{M^{\prime \prime}S^{\prime \prime}}\left( \Gamma_{M^{\prime \prime} S^{\prime \prime}, S^{\prime}}^{\operatorname{Tr}_{M}} \right)  \right] \\ &=
    \operatorname{Tr}_{M^{\prime \prime} S^{\prime \prime}}\left[ \Gamma^{\mathcal{Z}_{\bm{\theta}}}_{MS,M^{\prime \prime }S^{\prime \prime}} \mathbb{T}_{M^{\prime \prime}S^{\prime \prime}}\left( \Gamma_{M^{\prime \prime} S^{\prime \prime}, S^{\prime}}^{\operatorname{Tr}_{M}} \right)  \right] \; . \label{eqn:sufficient_param_indep}
\end{align}
We now compute the basis dependent matrix $\mathbb{T}_{M^{\prime \prime}S^{\prime \prime}}\left( \Gamma_{M^{\prime \prime} S^{\prime \prime}, S^{\prime}}^{\operatorname{Tr}_{M}} \right)$ in the separable memory-spectator basis $|i\rangle_{MS}\equiv |i(a,\mu)\rangle_{MS}=|a\rangle_{M}|\mu\rangle_{S}$ of the Hilbert space $\mathcal{H}^{M}\otimes \mathcal{H}^{S}$, as follows
\begin{align}
    \Gamma_{M^{\prime \prime} S^{\prime \prime}, S^{\prime}}^{\operatorname{Tr}_{M}} &=\sum_{ij}|i\rangle \! \langle j|_{M^{\prime \prime}S^{\prime \prime}} \otimes \operatorname{Tr}_{M^{\prime}}\left( |i\rangle \! \langle j|_{M^{\prime}S^{\prime}} \right) \\
    &= I_{M^{\prime \prime}}\otimes \Gamma_{S^{\prime \prime}S^{\prime}} \; ,
\end{align}
where $|\Gamma \rangle_{S^{\prime \prime}S^{\prime}} = \sum_{\mu}|\mu \rangle_{S^{\prime \prime}} |\mu \rangle_{S^{\prime}}$ is the maximally entangled state in the special spectator basis $\{|\mu \rangle\}$. In the same $\{|a\rangle \otimes |\mu \rangle\}$ basis, the partial transpose yields
\begin{align}
    \mathbb{T}_{M^{\prime \prime}S^{\prime \prime}}\left( \Gamma_{M^{\prime \prime} S^{\prime \prime}, S^{\prime}}^{\operatorname{Tr}_{M}} \right) &=\mathbb{T}_{M^{\prime \prime}S^{\prime \prime}}\left(  I_{M^{\prime \prime}}\otimes \Gamma_{S^{\prime \prime}S^{\prime}} \right) \\
    &= I_{M^{\prime \prime}}\otimes((T_{S^{\prime \prime}}\otimes \textsf{id}^{S^{\prime}})(\Gamma_{S^{\prime \prime} S^{\prime}})) \\
    &= I_{M^{\prime \prime}}\otimes((I^{S^{\prime \prime}}\otimes \mathbb{T}_{S^{\prime}})(\Gamma_{S^{\prime \prime} S^{\prime}})) \\
    &= I_{M^{\prime \prime}}\otimes \Gamma_{S^{\prime \prime}S^{\prime}}^{\mathbb{T}} \; ,
\end{align}
where $\Gamma_{S^{\prime \prime}S^{\prime}}^{\mathbb{T}}$ is the Choi matrix of the partial transpose channel. It has been shown in cite{johnston2011quantum} that the Choi matrix $\Gamma_{S^{\prime \prime}S^{\prime}}^{\mathbb{T}}$ is related to the projector $P^{\text{sym}}_{S^{\prime \prime}S^{\prime}}=(I_{S^{\prime \prime} S^{\prime}}+\Gamma_{S^{\prime \prime}S^{\prime}}^{\mathbb{T}})/2$ onto the symmetric subspace 
	\begin{equation}
		\text{span}(|\mu \rangle_{S^{\prime \prime}} |\nu \rangle_{S^{\prime}} + |\nu \rangle_{S^{\prime \prime}} |\mu \rangle_{S^{\prime}}) \subset \mathcal{H}^{S^{\prime \prime}}\otimes \mathcal{H}^{S^{\prime}} \; ,
	\end{equation}
where $d_{S}(d_{S}+1)/2$ is the dimensions of the symmetric subspace in the $d_{S}^{2}$ dimensional Hilbert space $\mathcal{H^{S}}\otimes \mathcal{H^{S}}$. This finally yields
\begin{align}
    \mathbb{T}_{M^{\prime \prime}S^{\prime \prime}}\left( \Gamma_{M^{\prime \prime} S^{\prime \prime}, S^{\prime}}^{\operatorname{Tr}_{M}} \right) &= I_{M^{\prime \prime}}\otimes \Gamma_{S^{\prime \prime}S^{\prime}}^{\mathbb{T}} \\
    &= I_{M^{\prime \prime}}\otimes (2P_{S^{\prime \prime}S^{\prime}}^{\text{sym}}-I_{S^{\prime \prime}S^{\prime}}) \; .
\end{align}
Substituting back into Eq.~\eqref{eqn:sufficient_param_indep} leads to 
\begin{align}
    & \Gamma^{\operatorname{Tr}_{M}\circ \mathcal{Z}_{\bm{\theta}}}_{MS,S^{\prime}} \nonumber \\ &=
    \operatorname{Tr}_{M^{\prime \prime} S^{\prime \prime}}\left[ \Gamma^{\mathcal{Z}_{\bm{\theta}}}_{MS,M^{\prime \prime }S^{\prime \prime}}(2P_{S^{\prime \prime}S^{\prime}}^{\text{sym}}-I_{S^{\prime \prime}S^{\prime}}) \right] \; .
\end{align}
Therefore, Eq.~\eqref{eqn:sufficient_cond} holds if and only if
\begin{equation}
    \operatorname{Tr}_{M^{\prime \prime} S^{\prime \prime}}\left[  \left( \frac{\partial}{\partial \theta_{\alpha}}\Gamma^{\mathcal{Z}_{\bm{\theta}}}_{MS,M^{\prime \prime }S^{\prime \prime}} \right)(2P_{S^{\prime \prime}S^{\prime}}^{\text{sym}}-I_{S^{\prime \prime}S^{\prime}}) \right]=0 \; . \label{eqn:iff}
\end{equation}
\end{proof}
The last equation in the proof can be equivalently written as (using the identity $P_{S^{\prime \prime}S^{\prime}}^{\text{sym}}+(P_{S^{\prime \prime}S^{\prime}}^{\text{sym}})^{\perp}=I_{S^{\prime \prime}S^{\prime}}$)
\begin{align}
    &\operatorname{Tr}_{M^{\prime \prime} S^{\prime \prime}}\left[ \left( \frac{\partial}{\partial \theta_{\alpha}}\Gamma^{\mathcal{Z}_{\bm{\theta}}}_{MS,M^{\prime \prime }S^{\prime \prime}} \right)P_{S^{\prime \prime}S^{\prime}}^{\text{sym}}\right] \nonumber \\ 
    &= \operatorname{Tr}_{M^{\prime \prime} S^{\prime \prime}}\left[ \left( \frac{\partial}{\partial \theta_{\alpha}}\Gamma^{\mathcal{Z}_{\bm{\theta}}}_{MS,M^{\prime \prime }S^{\prime \prime}} \right)\left(P_{S^{\prime \prime}S^{\prime}}^{\text{sym}}\right)^{\perp}\right] \\
    &=\frac{1}{2}\frac{\partial}{\partial \theta_{\alpha}}\operatorname{Tr}_{M^{\prime \prime} S^{\prime \prime}}\left[ \Gamma^{\mathcal{Z}_{\bm{\theta}}}_{MS,M^{\prime \prime }S^{\prime \prime}}\right] \\ &=\frac{1}{2}\frac{\partial}{\partial \theta_{\alpha}}I_{MS}=0 \; .
\end{align}
\begin{remark}
    Note that the condition Eq.~\ref{eqn:iff} is weaker than
\begin{equation}
    \frac{\partial}{\partial \theta_{\alpha}}\Gamma^{\mathcal{Z}_{\bm{\theta}}}_{MS,M^{\prime \prime }S^{\prime \prime}}=0 \; ,
\end{equation}
for all $\bm{\theta}\in \Theta^{p}$, which holds when the mother channel $\mathcal{Z}_{\bm{\theta}}^{MS}$ itself does not depend on the noise parameter $\theta_{\alpha}$, and hence trivially also the reduced channel $\mathcal{M}^{S}_{\bm{\theta}}\equiv \operatorname{Tr}_{M}\circ \mathcal{Z}_{\bm{\theta}}^{MS}$.
\end{remark}

\section{Quantum Fisher Information Matrix}
\label{apx:QFIM}
We review the relevant definitions and results regarding quantum Fisher Information Matrix (QFIM) and the partial QFIM, following \cite{suzuki2020quantum, liu2020quantum}.
\subsection{Useful Definitions}
For a family of parameterized quantum states $\{ \rho_{\bm{\theta}} \}_{\bm{\theta}\in \Theta}$ with a $p$-dimensional parameter space $\Theta^{p} \subseteq \mathbb{R}^{p}$, we define the symmetric inner product between two linear operators $A$ and $B$, with respect to the parameterized family of states, as
\begin{equation}
    \langle A, B\rangle_{\rho_{\bm{\theta}}} \coloneqq \operatorname{Tr}\left[\rho_{\bm{\theta}} \left(\frac{1}{2}\{A^{\dagger},B \}\right) \right] \; , \label{eqn:semi_inner_prod}
\end{equation}
where $\{a, b\}\coloneqq ab+ba$ is the anti-commutator. The symmetric logarithmic derivative (SLD) is a Hermitian operator $L_{\bm{\theta};i}$ that is defined by the solution to the Lyapunov type equation \cite{bhatia2013matrix}
\begin{equation}
    \frac{\partial}{\partial \theta_{\alpha}}\rho_{\bm{\theta}} \eqqcolon \frac{1}{2}\{ L_{\bm{\theta};\alpha}, \rho_{\bm{\theta}} \} \; , \label{eqn:SLD}
\end{equation}
The SLD QFIM corresponding to the parameterized family of states is defined as the $p\times p$ matrix
\begin{equation}
    \left[\textsf{I}_{\operatorname{QF}}(\bm{\theta}; \{ \rho_{\bm{\theta}} \})\right]_{\alpha \beta} \equiv  \textsf{I}^{\bm{\theta}}_{\alpha \beta} \coloneqq \langle L_{\bm{\theta};\alpha}, L_{\bm{\theta};\beta} \rangle_{\rho_{\bm{\theta}}} \; . \label{eqn:QFIM}
\end{equation}
Next, assume that a quantum measurement of an observable $X$ is performed on $\rho_{\bm{\theta}}$, described by a POVM $\Pi \equiv \{\Pi_{x}\}_{x \in \mathcal{X}}$. This yields the statistics $p_{X}(x\vert \bm{\theta})=\operatorname{Tr}[\rho_{\bm{\theta}}\Pi_{x}]$ for the measurement outcomes $x \in \mathcal{X}$. We define an estimate $\bm{\hat{\theta}}:\mathcal{X}\rightarrow \Theta^{p}$ as a mapping from the set of measurement outcomes to the parameter space. We say that the pair $(\Pi,\bm{\hat{\theta}})$ is an estimator, and call it \textit{unbiased} if 
\begin{equation}
    \mathbb{E}\left[ \bm{\hat{\theta}}(X) \right]_{p(x\vert \bm{\theta})} \coloneqq \sum_{x \in \mathcal{X}}\bm{\hat{\theta}}(x)\operatorname{Tr}[\rho_{\bm{\theta}}\Pi_{x}]=\bm{\theta} \; .
\end{equation}
In general, such an estimator does not exist for all $\bm{\theta} \in \Theta^{p}$. Instead, it is customary to use a weaker condition on our estimator, namely that it is \textit{locally unbiased}. This is defined as follows: at a fixed $\bm{\theta}$, we require that the following two conditions are satisfied
\begin{equation}
    \mathbb{E}\left[ \bm{\hat{\theta}}(X) \right]_{p(x\vert \bm{\theta})}=\sum_{x \in \mathcal{X}}\bm{\hat{\theta}}(x)\operatorname{Tr}[\rho_{\bm{\theta}}\Pi_{x}]=\bm{\theta} \; , \label{eqn:loc_unbias_1}
\end{equation}
and \begin{equation}
    \frac{\partial}{\partial \theta_{\beta}}\mathbb{E}\left[ \bm{\hat{\theta}}_{\alpha}(X) \right]_{p(x\vert \bm{\theta})}=\sum_{x \in \mathcal{X}}\hat{\theta}_{\alpha}(x)\operatorname{Tr}\left[\frac{\partial}{\partial \theta_{\beta}}\rho_{\bm{\theta}}\Pi_{x}\right]=\delta_{\alpha \beta}\; . \label{eqn:loc_unbias_2}
\end{equation}
Finally, we define the mean-square error (MSE) matrix corresponding to an estimator $(\Pi, \bm{\hat{\theta}})$ as follows
\begin{equation}
    \operatorname{Var}\left[ \bm{\hat{\theta}}(X) \right]\coloneqq \mathbb{E}\left[ \left( \bm{\hat{\theta}}(X)-\bm{\theta} \right)^{T} \left( \bm{\hat{\theta}}(X)-\bm{\theta} \right) \right]_{p(x\vert \bm{\theta})} \; , \label{eqn:var_matrix}
\end{equation}
with the $(\alpha, \beta)$ entry of this matrix given by 
\begin{equation}
    \mathbb{E}\left[ \left( \hat{\theta}_{\alpha}(X)-\theta_{\alpha} \right) \left( \hat{\theta}_{\beta}(X)-\theta_{\beta} \right) \right]_{p(x\vert \bm{\theta})} \; .
\end{equation}

\subsection{Saturation of QCRB}
\label{apx:saturation}
The quantum Cram\'er-Rao bound (QCRB) provides a lower bound to the variance matrix defined in Eq.~\eqref{eqn:var_matrix} using the QFIM in Eq.~\eqref{eqn:QFIM} \cite{petz2011introduction, suzuki2020quantum}
\begin{equation}
    \operatorname{Var}\left[ \bm{\hat{\theta}} \right] \ge \left(  \textsf{I}^{\bm{\theta}} \right)^{-1} \; .
\end{equation}
The QCRB holds for any locally unbiased estimator, and is a direct consequence to applying the Schwartz inequality for the inner product defined in Eq.~\eqref{eqn:semi_inner_prod}. Here, we are interested in the saturation condition for this inequality. A necessary and sufficient condition for the saturation of the multi-parameter QCRB is given by (see e.g. \cite{hayashi2005asymptotic, liu2020quantum})
\begin{equation}
    \operatorname{Tr}[\rho_{\bm{\theta}}[L_{\bm{\theta};\alpha}, L_{\bm{\theta};\beta}]] = 0 \hspace{0.2cm} \text{for all} \hspace{0.2cm} \alpha, \beta=1, \cdots, p \; .
\end{equation}
To design the optimal measurements for the saturation of the multi-parameter QCRB, we conduct the following: (i) find an SLD $\{L_{\bm{\theta}; \alpha}\}$ that mutually commute, (ii) using the matrices $ \textsf{I}^{\bm{\theta}}$ and $\{L_{\bm{\theta}; \alpha}\}$, construct the (commuting) linear combinations
\begin{equation}
    \tilde{L}_{\bm{\theta}; \alpha} \coloneqq \sum_{\beta=1}^{p}\left(  \textsf{I}^{\bm{\theta}} \right)^{-1}_{\alpha \beta}L_{\bm{\theta}; \beta} \; ,
\end{equation}
for all $\alpha=1, \cdots, p$, and (iii) write the spectral decomposition of the mutually commuting operators
\begin{equation}
    \tilde{L}_{\bm{\theta}; \alpha}=\sum_{i}l_{\alpha i}P_{i} \; ,
\end{equation}
for $\alpha=1, \cdots, p$, where $\{P_{i}\}$ are the projectors onto the simultaneous eigenspaces of $\{\tilde{L}_{\bm{\theta};\alpha}\}$ (or equivalently for $\{L_{\bm{\theta};\alpha}\}$). Then, the QCRB is saturated if we pick the locally unbiased estimator $(\Pi, \bm{\hat{\theta}})$ to be \cite{suzuki2020quantum}
\begin{align}
    \Pi_{x} &\equiv P_{i=x} \; , \\
    \hat{\theta}_{\alpha}(x) &\equiv \theta_{\alpha}+l_{\alpha x} \; .
\end{align}
In the case of a single parameter family, this yields the locally unbiased estimator
\begin{equation}
    \hat{\theta}(x)=\theta+\left(\textsf{I}^{\theta}\right)^{-1}\frac{d}{d\theta}\log p(x\vert \theta) \; ,
\end{equation}
where $p(x\vert \theta)=\operatorname{Tr}[\rho_{\theta}\Pi_{x}]$, which explicitly depend on $\theta$. Although the optimal measurements described above saturate the QCRB, they require prior knowledge of the noise parameters, which defeats the point of implementing spectator systems. In the single-parameter case, this can be remedied.
\subsubsection{Parameter-Independent Estimator}
Nagaoka has shown that, in the single parameter case, a $\theta$-independent locally unbiased estimator exists that saturates the QCRB \cite{hayashi2005asymptotic}. This is possible only for an exponential family $\{\rho_{\theta}\}_{\theta \in \Theta}$ of parameterized states \cite{hasegawa1997exponential}:
\begin{equation}
    \rho_{\theta}=e^{\frac{1}{2}\int_{0}^{\theta}\psi(\theta^{\prime})d\theta^{\prime}\left(O-\theta_{\psi}\right)}\rho_{0}e^{\frac{1}{2}\int_{0}^{\theta}\psi(\theta^{\prime})d\theta^{\prime}\left(O-\theta_{\psi}\right)} \; , \label{eqn:exp_fam}
\end{equation}
for some $\rho_{0}$, where we have assumed for convenience that $\theta=0 \in \Theta$, $\psi(\theta)$ is some function of $\theta$,
\begin{equation}
    \theta_{\psi}=\frac{\int_{0}^{\theta}\theta^{\prime}\psi(\theta^{\prime})d\theta^{\prime}}{\int_{0}^{\theta}\psi(\theta^{\prime})d\theta^{\prime}} \; ,
\end{equation}
and $O$ is an unbiased observable of $\theta$, i.e. $\operatorname{Tr}[\rho_{\theta}O]=\theta$. As such, the SLD of this parametric family is given by $L_{\theta}=\psi(\theta)(O-\theta)$, which guarantees that the Schwartz inequality for the two vectors $L_{\theta}$ and $O-\theta$ is saturated, and hence the saturation of the QCRB \cite{petz2011introduction, hayashi2005asymptotic}. The optimal measurement POVM (as described above) is given by the (parameter-independent) eigenvectors of $O$. Therefore, we see that achieving exponential family of states, as defined in Eq.~\eqref{eqn:exp_fam}, for the output states $\psi \rightarrow \mathcal{M}_{\theta}(\psi)$ of the spectator system is generally helpful for our application. Finally, note that for non-full rank parameterized density matrices, the optimal measurements described above are not unique.

\subsubsection{Maximum Likelihood Estimator}
The maximum likelihood estimator $\hat{\theta}_{\text{MLE}}$ corresponding to the choice of POVM $\Pi\equiv \{\Pi_{x}\}_{x \in \mathcal{X}}$ is defined as
\begin{equation}
    \hat{\theta}_{\text{MLE}}(x) \coloneqq \arg \max_{\theta}p(x\vert \theta) \; ,
\end{equation}
where $p(x\vert \theta)=\operatorname{Tr}[\rho_{\theta}\Pi_{x}]$ for all $x \in \mathcal{X}$.
Although the above definition seems natural, the MLE is known to be a biased estimator for a general parametric family $\{\rho_{\theta}\}_{\theta \in \Theta}$. However, the MLE becomes unbiased, and further, saturates the classical CRB in the asymptotic limit \cite{wasserman2004all}. We recall that a necessary condition for the saturation of the QCRB is that the classical and quantum Fisher informations must coincide \cite{liu2020quantum}. Hence, the MLE is also relevant for the asymptotic saturation of the QCRB. In the context of our article, the asymptotic limit necessarily implies that the spatial dependence of the noise parameter cannot be neglected, as we are performing quantum parameter estimation on a large number of spectator qubits that must be spatially distributed within the quantum memory device. Therefore, to retain the spatial homogeneity assumption of the noise parameters used in the main text, we refrain from considering the asymptotic saturation of the QCRB. Hence, the MLE choice is inappropriate within the context of our manuscript, as it is a biased estimator in the non-asymptotic regime.

\subsection{Partial QFIM}
Now we consider the bipartition of the parameter space $\Theta^{p}$ as $\bm{\theta}=(\bm{\theta}^{\text{I}}, \bm{\theta}^{\text{N}})$, with the number of parameters in each partition is given by $p_{\text{I}}$ and $p_{\text{N}}=p-p_{\text{I}}$, respectively. Here, the subscripts ``I'' and ``N'' stand for ``interest'' and ``nuisance'', respectively. We can then write the $p\times p$ SLD QFIM in a block form
\begin{equation}
     \textsf{I}^{\bm{\theta}}=
    \begin{pmatrix}
         \textsf{I}^{\bm{\theta}}_{\text{I},\text{I}} &  \textsf{I}^{\bm{\theta}}_{\text{I},\text{N}} \\
         \textsf{I}^{\bm{\theta}}_{\text{N},\text{I}} &  \textsf{I}^{\bm{\theta}}_{\text{N},\text{N}} 
    \end{pmatrix} \; ,
\end{equation}
where the upper block diagonal matrix $ \textsf{I}^{\bm{\theta}}_{\text{I},\text{I}}$ is $p_{\text{I}}\times p_{\text{I}}$ and the lower block diagonal matrix $ \textsf{I}^{\bm{\theta}}_{\text{N},\text{N}}$ is of size $p_{\text{N}}\times p_{\text{N}}$. We also write the inverse of the SLD QFIM in a similar block form
\begin{equation}
    \left( \textsf{I}^{\bm{\theta}}\right)^{-1}=
    \begin{pmatrix}
         \textsf{I}^{\bm{\theta};\text{I},\text{I}} &  \textsf{I}^{\bm{\theta}; \text{I},\text{N}} \\
         \textsf{I}^{\bm{\theta};\text{N},\text{I}} &  \textsf{I}^{\bm{\theta};\text{N},\text{N}} 
    \end{pmatrix} \; .
\end{equation}
Using these block forms, the partial SLD QFIM is defined as
\begin{equation}
     \textsf{I}^{\bm{\theta}}_{\text{I}\vert \text{N}} \coloneqq \left(  \textsf{I}^{\bm{\theta};\text{I},\text{I}} \right)^{-1} =  \textsf{I}^{\bm{\theta}}_{\text{I},\text{I}}- \textsf{I}^{\bm{\theta}}_{\text{I},\text{N}}\left(  \textsf{I}^{\bm{\theta}}_{\text{N},\text{N}} \right)^{-1} \textsf{I}^{\bm{\theta}}_{\text{N},\text{I}} \; .
\end{equation}
Let us further define the $p_{I}\times p_{I}$ MSE matrix for the parameters of interest $\bm{\theta}_{\text{I}}$ as 
\begin{equation}
    \operatorname{Var}\left[ \bm{\hat{\theta}}_{\text{I}}(X) \right]\coloneqq 
    \mathbb{E}\left[ \left( \hat{\theta}_{\alpha}(X)-\theta_{\alpha} \right) \left( \hat{\theta}_{\beta}(X)-\theta_{\beta} \right) \right]_{p(x\vert \bm{\theta})} \; ,
\end{equation}
for $\alpha$, $\beta=1, \cdots, p_{I}$. The pair $(\Pi, \bm{\hat{\theta}}_{\text{I}})$ is said to be a locally unbiased estimator for $\bm{\theta}_{I}$ at $\bm{\theta}$ when it satisfies the following two conditions \cite{suzuki2020quantum} (analogous to Eq.~\eqref{eqn:loc_unbias_1} and \eqref{eqn:loc_unbias_2})
\begin{equation}
    \mathbb{E}\left[ \bm{\hat{\theta}}_{\text{I}}(X) \right]_{p(x\vert \bm{\theta})}=\sum_{x \in \mathcal{X}}\bm{\hat{\theta}}_{\text{I}}(x)\operatorname{Tr}[\rho_{\bm{\theta}}\Pi_{x}]=\bm{\theta}_{\text{I}} \; , \label{eqn:loc_unbias_3}
\end{equation}
and \begin{equation}
    \frac{\partial}{\partial \theta_{\beta}}\mathbb{E}\left[ \bm{\hat{\theta}}_{\alpha}(X) \right]_{p(x\vert \bm{\theta})}=\sum_{x \in \mathcal{X}}\hat{\theta}_{\alpha}(x)\operatorname{Tr}\left[\frac{\partial}{\partial \theta_{\beta}}\rho_{\bm{\theta}}\Pi_{x}\right]=\delta_{\alpha \beta} \; , \label{eqn:loc_unbias_4}
\end{equation}
where $\alpha=1, \cdots, p_{I}$ and $\beta=1, \cdots, p$. For locally unbiased estimators, the following QCRB holds in the presence of nuisance parameters
\begin{equation}
    \operatorname{Var}\left[ \bm{\hat{\theta}}_{\text{I}}(X) \right] \ge \left[  \textsf{I}^{\bm{\theta}}_{\text{I}|\text{N}} \right]^{-1} \; ,
\end{equation}
which is a modification of the standard QCRB when nuisance parameters are present.

\section{Choi Matrix Methods for Entanglement Fidelity of Composite Parameterized Channels} \label{apx:Choi_techniques}
In what follows, we present a useful lemma for the entanglement fidelity of composite parameterized channels and then dedicate the rest of this appendix to demonstrating its wide range of applicability in the context of the main text.
\begin{lemma} \label{le:fid_Choi}
    The entanglement fidelity of the composite channel $\mathcal{R}^{B\rightarrow A} \circ \mathcal{N}^{A \rightarrow B}$ is given by the individual Choi states $\Phi^{\mathcal{N}}_{AB}$ and $\Phi^{\mathcal{R}}_{BA}$, as follows
    \begin{equation}
        F_{e}(\mathcal{R}\circ \mathcal{N})=\frac{d_{B}}{d_{A}}\operatorname{Tr}\left[ \Phi^{\mathcal{N}}\left( \Phi^{\mathcal{R}} \right)^{T} \right] \; ,
    \end{equation}
    where $T$ indicates matrix transposition.
\end{lemma}
\begin{proof}
    By definition, the entanglement fidelity of the composite channel can be written in terms of its Choi matrix, as follows
    \begin{align}
    F_{e}\left( \mathcal{R}^{B \rightarrow A^{\prime}} \circ \mathcal{N}^{A \rightarrow B} \nonumber \right) &= \operatorname{Tr}_{AA^{\prime}}\left[ \Phi_{AA^{\prime}} \Phi^{\mathcal{R}\circ \mathcal{N}}_{AA^{\prime}} \right] \\ & =\frac{1}{d_{A}^{2}}\operatorname{Tr}_{AA^{\prime}}\left[ \Gamma_{AA^{\prime}} \Gamma^{\mathcal{R}\circ \mathcal{N}}_{AA^{\prime}} \right] \; . \label{eqn:ent_fid_choi}
\end{align}
where $\mathcal{H}^{A}$ and $\mathcal{H}^{A^{\prime}}$ are isomorphic Hilbert spaces. One can easily verify that the Choi matrix of the composite channel can be written in terms of the Choi matrices of the individual channels, as follows \cite{khatri2020principles}
\begin{equation}
    \Gamma_{AC}^{\mathcal{R}^{B \rightarrow C} \circ \mathcal{N}^{A \rightarrow B}} =\operatorname{Tr}_{B}\left[ \mathbb{T}_{B}\left( \Gamma_{AB}^{\mathcal{N}} \right) \Gamma_{BC}^{\mathcal{R}}\right] \; ,
\end{equation}
where $\mathbb{T}_{B}$ is the partial transpose defined with respect to the same basis as the maximally entangled state $|\Gamma \rangle$. By substituting this form into the entanglement fidelity formula, we arrive at
\begin{align}
    F_{e} &=\frac{1}{d_{A}^{2}}\operatorname{Tr}_{AA^{\prime}}\left[ \Gamma_{AA^{\prime}} \operatorname{Tr}_{B}\left[ \mathbb{T}_{B}\left( \Gamma_{AB}^{\mathcal{N}} \right) \Gamma_{BA^{\prime}}^{\mathcal{R}}\right] \right] \\
    &=\frac{1}{d_{A}^{2}}\operatorname{Tr}_{A^{\prime}B}\left[ \left(\operatorname{Tr}_{A}\left[\Gamma_{AA^{\prime}} \mathbb{T}_{B}\left( \Gamma_{AB}^{\mathcal{N}} \right) \right]\right) \Gamma_{BA^{\prime}}^{\mathcal{R}}\right] \; .
\end{align}
Next, we make standard simplifications for any $\Lambda_{AB}$:
    \begin{align}
        & \operatorname{Tr}_{A}\left[\Gamma_{AA^{\prime}} \Lambda_{AB} \right] \\ &=\sum_{i}\langle i|_{A}\Gamma_{AA^{\prime}}\Lambda_{AB}|i\rangle_{A} \\
        &= \sum_{ij}|i\rangle_{A^{\prime}} \! \langle j|_{A^{\prime}}\langle j|_{A}\Lambda_{AB}|i\rangle_{A} \\
        &= \sum_{ij}|i\rangle_{A^{\prime}} \! \langle j|_{A^{\prime}}\langle i|_{A}\mathbb{T}_{A}\left(\Lambda_{AB}\right)|j\rangle_{A} \\
        &= \sum_{i}|i\rangle_{A^{\prime}} \! \langle i|_{A}\mathbb{T}_{A}\left(\Lambda_{AB}\right)\sum_{j}|j\rangle_{A}\! \langle j|_{A^{\prime}} \\
        &=\mathbb{T}_{A^{\prime}}\left(\Lambda_{A^{\prime}B}\right) \; .
    \end{align}
    This yields for $\Lambda_{AB}\equiv \mathbb{T}_{B}\left( \Gamma_{AB}^{\mathcal{N}}\right)$ the following
    \begin{align}
        \operatorname{Tr}_{A}\left[\Gamma_{AA^{\prime}} \mathbb{T}_{B}\left( \Gamma_{AB}^{\mathcal{N}} \right)\right] &=\mathbb{T}_{A^{\prime}}\left(  \mathbb{T}_{B}\left( \Gamma_{A^{\prime}B}^{\mathcal{N}} \right) \right) \\ &=\left( \Gamma_{A^{\prime}B}^{\mathcal{N}} \right)^{T} \; .
    \end{align}
Substituting back into the entanglement fidelity formula completes the proof.
\end{proof}
Therefore, according to this lemma, the entanglement fidelity of the memory dynamics is generally written as follows for any recovery map
\begin{align}
    F_{e}\left( \mathcal{R}\circ \mathcal{N}_{\bm{\theta}} \right) &=\frac{1}{d_{A}^{2}}\operatorname{Tr}_{AB}\left[ \left( \Gamma_{AB}^{\mathcal{N}_{\bm{\theta}}} \right)^{T}\Gamma_{BA}^{\mathcal{R}}\right] \label{eqn:ent_fid_simplified} \\ 
    &=\frac{1}{d_{A}^{2}}\operatorname{Tr}_{AB}\left[ \Gamma_{AB}^{\mathcal{N}_{\bm{\theta}}} \left(\Gamma_{BA}^{\mathcal{R}}\right)^{T}\right] \; .
\end{align}
In particular, this implies that we can search for a recovery map $\mathcal{R}_{\bm{\phi}}$ that is parameterized by some number of parameters $\bm{\phi}$ and maximize over them (for a fixed $\bm{\theta}$) to arrive at an optimal choice $\bm{\phi}(\bm{\theta})$, see e.g. \cite{fletcher2007channel} in terms of the natural representation of quantum channels (which is related, but not the same as, the Choi representation adopted in this article).

%\textcolor{blue}{
%Assume that the set of Choi states $\{\Phi^{\mathcal{R}_{\bm{\phi}}}\}_{\bm{\phi}}$ of the recovery map $\mathcal{R}_{\bm{\phi}}$ comprises a parameterized family of quantum states, to which we correspond SLD operators, define via
%\begin{equation}
    %\frac{\partial}{\partial \phi_{\alpha}}\Phi^{\mathcal{R}_{\bm{\phi}}}=\frac{1}{2}\{ \Phi^{\mathcal{R}_{\bm{\phi}}}, L^{\bm{\phi};\alpha} \} \; ,
%\end{equation}
%which is a Lyapunov type equation with a well-known solution \cite{bhatia2013matrix}. 
%}

\subsection{Zeroth Derivative: H\"older Type Upper Bounds} \label{sub:Holder_tech}
A useful upper bound on the entanglement fidelity in Eq.~\eqref{eqn:opt_recovery} of the quantum memory dynamics can be given in terms of the Choi state of the recovery map, by applying the H\"older inequality \cite{khatri2020principles}, as follows
\begin{align}
    F_{e}\left( \mathcal{R}\circ \mathcal{N}_{\bm{\theta}} \right) 
    &=\frac{d_{B}}{d_{A}}\left \vert \operatorname{Tr}_{AB}\left[ \Phi_{AB}^{\mathcal{N}_{\bm{\theta}}} \left(\Phi_{BA}^{\mathcal{R}}\right)^{T}\right] \right \vert \\
    &\leq \frac{d_{B}}{d_{A}}\operatorname{Tr}_{AB}\left \vert \Phi_{AB}^{\mathcal{N}_{\bm{\theta}}} \left(\Phi_{BA}^{\mathcal{R}}\right)^{T}\right \vert \\
    & \leq \frac{d_{B}}{d_{A}} \left \Vert \Phi_{AB}^{\mathcal{N}_{\bm{\theta}}} \right \Vert_{\alpha} \times \left \Vert \left(\Phi_{BA}^{\mathcal{R}}\right)^{T} \right \Vert_{\beta} \\ 
    & =\frac{d_{B}}{d_{A}} \left \Vert \Phi_{AB}^{\mathcal{N}_{\bm{\theta}}} \right \Vert_{\alpha} \times \left \Vert \Phi_{BA}^{\mathcal{R}}\right \Vert_{\beta} \; ,
\end{align}
for $\alpha \in [1, \infty)$ and its H\"older dual $\beta$ defined via $1/\alpha+1/\beta=1$. Here, $\Vert X \Vert_{\alpha}\coloneqq \left(\operatorname{Tr}[\vert X^{\alpha} \vert]\right)^{1/\alpha}$ defines the Schatten norms for $\alpha \in [1, \infty)$.

\subsection{First Derivative: Bound on Robustness of a Recovery Map}\label{subsec:robustness_bound}
Consider, for a given recovery map $\mathcal{R}$, the partial derivatives of the entanglement of fidelity in Eq.~\eqref{eqn:ent_fid_simplified} with respect to the components of the noise parameter vector $\bm{\theta}$. Using the inner product definition in Eq.~\eqref{eqn:semi_inner_prod} and the definition of SLD operator for the Choi state of $\mathcal{N}_{\bm{\theta}}$, as given in Eq.~\eqref{eqn:SLD}, we have for the partial derivatives 
\begin{align}
     & \frac{\partial}{\partial \theta_{\alpha}}F_{e}\left( \mathcal{R}\circ \mathcal{N}_{\bm{\theta}} \right) \nonumber \\
     & = \frac{1}{d_{A}^{2}}\operatorname{Tr}_{AB}\left[\frac{\partial}{\partial \theta_{\alpha}} \Gamma_{AB}^{\mathcal{N}_{\bm{\theta}}}\left( \Gamma_{BA}^{\mathcal{R}}\right)^{T}\right] \\ & = \frac{d_{B}}{d_{A}}\operatorname{Tr}_{AB}\left[ \frac{1}{2}\{ \Phi^{\mathcal{N}_{\bm{\theta}}}_{AB}, L^{\bm{\theta};\alpha}_{AB} \}\left( 
     \Phi_{BA}^{\mathcal{R}} \right)^{T}\right] \\ 
     & = \frac{d_{B}}{d_{A}}\operatorname{Tr}_{AB}\left[\Phi_{AB}^{\mathcal{N_{\bm{\theta}}}} \frac{1}{2}\{ \left( 
     \Phi_{BA}^{\mathcal{R}} \right)^{T}, L^{\bm{\theta};\alpha}_{AB} \}\right] \\
     & \equiv \frac{d_{B}}{d_{A}}\left \langle \left(\Phi^{\mathcal{R}}_{BA}\right)^{T}, L^{\bm{\theta};\alpha}_{AB} \right \rangle_{ \Phi_{AB}^{\mathcal{N}_{\bm{\theta}}}} \; ,
\end{align}
where $d_{B}=2^{n}$ and $d_{A}=2^{k}$ for a generic $[n,k]$ QEC code, therefore $d_{B}/d_{A}=2^{n-k}$. Using the fact that the robustness of $\mathcal{R}$ (which is the derivative of the entanglement fidelity with respect to the parameters pertaining to $\mathcal{N}_{\bm{\theta}}$) is written in terms of the inner product in Eq.~\eqref{eqn:semi_inner_prod}, we can use the Cauchy-Schwartz inequality to arrive at an upper bound, as follows
\begin{align}
    \left \vert \left \langle \left(\Phi^{\mathcal{R}}_{BA}\right)^{T}, L^{\bm{\theta};\alpha}_{AB} \right \rangle_{ \Phi_{AB}^{\mathcal{N}_{\bm{\theta}}}} \right \vert^{2} & \leq \left \langle \left(\Phi^{\mathcal{R}}_{BA}\right)^{T},\left(\Phi^{\mathcal{R}}_{BA}\right)^{T}  \right \rangle_{ \Phi_{AB}^{\mathcal{N}_{\bm{\theta}}}} \times \nonumber \\ & \times  \left \langle L^{\bm{\theta};\alpha}_{AB}, L^{\bm{\theta};\alpha}_{AB} \right \rangle_{ \Phi_{AB}^{\mathcal{N}_{\bm{\theta}}}} \\
    &=\left \langle \left(\Phi^{\mathcal{R}}_{BA}\right)^{T},\left(\Phi^{\mathcal{R}}_{BA}\right)^{T}  \right \rangle_{ \Phi_{AB}^{\mathcal{N}_{\bm{\theta}}}} \times \nonumber \\ &\times \left[\textsf{I}_{\operatorname{QF}}(\bm{\theta}; \{ \Phi^{\mathcal{N}_{\bm{\theta}}} \})\right]_{\alpha \alpha} \; ,
\end{align}
which yields the upper bound
\begin{align}
     \left \vert \frac{\partial}{\partial \theta_{\alpha}}F_{e}\left( \mathcal{R}\circ \mathcal{N}_{\bm{\theta}} \right) \right \vert &  \leq \frac{d_{B}}{d_{A}}\sqrt{\left \langle \left(\Phi^{\mathcal{R}}_{BA}\right)^{T},\left(\Phi^{\mathcal{R}}_{BA}\right)^{T}  \right \rangle_{ \Phi_{AB}^{\mathcal{N}_{\bm{\theta}}}} } \times \nonumber \\ &\times \sqrt{\left[\textsf{I}_{\operatorname{QF}}(\bm{\theta}; \{ \Phi^{\mathcal{N}_{\bm{\theta}}} \})\right]_{\alpha \alpha}} \; .
\end{align}

\subsection{Proof of Theorem~\ref{th:finite_var}: Bounds on Spectator-Based Recovery For Finite Estimation Errors}
\label{axp:finite_est_error}
Schatten norms $\Vert X \Vert_{\alpha}\coloneqq \left(\operatorname{Tr}[\vert X^{\alpha} \vert]\right)^{1/\alpha}$, where $\alpha \in [1, \infty)$, are often used to bound trace quantities, via the well-known H\"older inequality \cite{khatri2020principles}
\begin{equation}
    \Vert X Z \Vert_{1} \leq \Vert X \Vert_{\alpha} \times \Vert Z \Vert_{\beta} \; ,
\end{equation}
where $1/\alpha+1/\beta=1$ (a pair $(\alpha, \beta)$ satisfying this equality is called an H\"older pair). It is easy to show that the following statement also applies \cite{khatri2020principles} 
\begin{equation}
    \vert \operatorname{Tr}[X Z] \vert \leq \Vert X \Vert_{\alpha} \times \Vert Z \Vert_{\beta} \; . \label{eqn:modified_Holder}
\end{equation}
For $\alpha \in [0,1)$, the Schatten norm $\Vert \cdot \Vert_{\alpha}$ is no longer a norm (e.g. it does not satisfy the triangle inequality). However, if $Z>0$ (along with $0 \leq \alpha<1$), then the above inequality is reversed \cite{muller2013quantum} (the H\"older dual $\beta$ becomes negative)
\begin{equation}
    \vert \operatorname{Tr}[X Z] \vert \ge \Vert X \Vert_{\alpha} \times \Vert Z \Vert_{\beta} \; . \label{eqn:muller_th}
\end{equation}
We can use this inequality to find a lower bound on the difference between the entanglement fidelities of any two recovery maps $\mathcal{R}^{B \rightarrow A}$, $\tilde{\mathcal{R}}^{B \rightarrow A}$ for the parameterized noise channel $\mathcal{N}^{A \rightarrow B}_{\bm{\theta}}$, as follows
\begin{align}
    & \vert F_{e}\left( \mathcal{R} \circ \mathcal{N}_{\bm{\theta}} \right)-F_{e}( \tilde{\mathcal{R}}\circ \mathcal{N}_{\bm{\theta}}) \vert \nonumber \\
    & =\left \vert \operatorname{Tr}_{AA^{\prime}}\left[ \Phi_{AA^{\prime}}\left(\textsf{id}^{A}\otimes \left( \mathcal{R}^{B \rightarrow A^{\prime}}-\tilde{\mathcal{R}}^{B \rightarrow A^{\prime}} \right) \right)\left(\Phi^{\mathcal{N}_{\theta}}_{AB} \right) \right] \right \vert \\
    & =\left \vert \operatorname{Tr}_{AB}\left[ \left( \Phi_{AB}^{\mathcal{R} ^{\dagger}}-\Phi_{AB}^{\tilde{\mathcal{R}} ^{\dagger}} \right) \left(\Phi^{\mathcal{N}_{\bm{\theta}}}_{AB} \right) \right] \right \vert \\
    & \ge \left \Vert \Phi_{AB}^{\mathcal{R}^{\dagger}}-\Phi_{AB}^{\tilde{\mathcal{R}}^{\dagger}} \right \Vert_{\alpha} \times  \left \Vert \Phi^{\mathcal{N}_{\bm{\theta}}}_{AB} \right \Vert_{\beta} \; . \label{eqn:finite_deriv_1}
\end{align}
Next, we use a theorem relating the Choi matrix of a quantum channel $\mathcal{Q}^{A}$ to its adjoint \cite{johnston2011quantum}
\begin{equation}
	\Gamma^{\mathcal{Q}^{\dagger}}_{AA^{\prime}}=\Gamma^{\mathbb{T}}_{AA^{\prime}}\left(\Gamma^{\mathcal{Q}}_{AA^{\prime}}\right)^{T}\Gamma^{\mathbb{T}}_{AA^{\prime}} \; , \label{eqn:choi_adj}
\end{equation}
where $\mathbb{T}$ is the partial transpose channel, and its Choi matrix yields a SWAP unitary \cite{johnston2011quantum}. Even though $\Vert \cdot \Vert_{\alpha}$ is not a norm for $\alpha \in [0,1)$, its definition is still invariant with respect to a unitary transformation and transposition \cite{khatri2020principles}. This yields
\begin{equation}
    \left \Vert \Phi_{AB}^{\mathcal{R} ^{\dagger}}-\Phi_{AB}^{\tilde{\mathcal{R}}^\dagger} \right \Vert_{\alpha}=\left \Vert \Phi_{AB}^{\mathcal{R}} -\Phi_{AB}^{\tilde{\mathcal{R}}} \right \Vert_{\alpha} \; .
\end{equation}
Substituting back into Eq.~\eqref{eqn:finite_deriv_1} yields the inequality in Theorem~\ref{th:finite_var}.

%Generally, this bound will not be given in terms of the QFI, which yields the QCRB for locally unbiased estimators, all of which assume small differences $\bm{\nu}$. 
%Now consider the optimal recovery protocol for a quantum channel $\mathcal{N}_{\theta}^{A \rightarrow B}$, satisfying
%\begin{equation}
    %F_{e}(\bm{\theta}) \coloneqq F_{e}\left( \mathcal{R}_{\bm{\theta}}\circ \mathcal{N}_{\bm{\theta}} \right) \equiv \sup_{\mathcal{R}}F_{e}\left( \mathcal{R}\circ \mathcal{N}_{\bm{\theta}} \right) \; ,
%\end{equation}
%where the maximization is taken over all $\mathcal{R}^{B \rightarrow A}$ CPTP maps. We are interested in finding a lower bound to the difference $F_{e}\left( \mathcal{R}_{\bm{\theta}}\circ \mathcal{N}_{\bm{\theta}} \right)-F_{e}\left( \mathcal{R}_{\bm{\hat{\theta}}}\circ \mathcal{N}_{\bm{\theta}} \right)$ for any finite difference $\bm{\nu}=\bm{\hat{\theta}}-\bm{\theta}$.

\section{Sufficient Condition For a Negligible Remainder Term in Theorem~\ref{th:spec_QFI}}
\label{apx:remainder}
To quantify the ``smallness'' of $\hat{\theta}-\theta$, for the remainder term in Eq.~\eqref{eqn:thm_2_eqn_1} of Theorem~\ref{th:spec_QFI} to be negligible (given the noise channel $\mathcal{N}_{\theta}$), we first use Eq.~\eqref{eqn:approx_Choi_int_fid} to arrive at a useful bound, as follows:
\begin{align}
	& \left \vert \frac{1}{3!}\partial^{3}_{\nu}F_{e}(\mathcal{R}_{\theta + \nu_{0}}\circ \mathcal{N}_{\theta})(\hat{\theta}-\theta)^{3} \right \vert \nonumber \\ &= \frac{d_{B}}{3!d_{A}}\left \vert \operatorname{Tr}_{AB}\left[\left( \Phi_{AB}^{\mathcal{N}_{\theta}} \right)^{T} \partial_{\theta}^{3}\Phi^{\mathcal{R}_{\theta}}_{BA} \right]\right \vert \vert \hat{\theta}-\theta \vert^{3} \\
	& \leq \frac{d_{B}}{3!d_{A}} \operatorname{Tr}_{AB}\left[ \left \vert \left( \Phi_{AB}^{\mathcal{N}_{\theta}} \right)^{T} \partial_{\theta}^{3}\Phi^{\mathcal{R}_{\theta}}_{BA}\right \vert \right] \vert \hat{\theta}-\theta \vert^{3} \\
	& \leq \frac{d_{B}}{3!d_{A}} \left \Vert \left(  \Phi_{AB}^{\mathcal{N}_{\theta}} \right)^{T} \right \Vert_{1} \left \Vert \partial_{\theta}^{3}\Phi^{\mathcal{R}_{\theta}}_{BA}\right \Vert_{\infty} \vert \hat{\theta}-\theta \vert^{3} \\
	& = \frac{d_{B}}{3!d_{A}} \left \Vert  \Phi_{AB}^{\mathcal{N}_{\theta}} \right \Vert_{1} \left \Vert \partial_{\theta}^{3}\Phi^{\mathcal{R}_{\theta}}_{BA}\right \Vert_{\infty} \vert \hat{\theta}-\theta \vert^{3} \\
	& = \frac{d_{B}}{3!d_{A}} \left \Vert \partial_{\theta}^{3}\Phi^{\mathcal{R}_{\theta}}_{BA}\right \Vert_{\infty} \vert \hat{\theta}-\theta \vert^{3} \; ,
\end{align} 
where we have used the H\"older inequality \cite{khatri2020principles}, the invariance of the trace norm $\Vert \cdot \Vert_{1}$ under transposition, and the unit trace of the Choi state $\Phi^{\mathcal{N}_{\theta}}_{AB}$, in the third, fourth, and fifth lines, respectively. Then, the above bound implies that the remainder term in Eq.~\ref{eqn:thm_2_eqn_1} of Theorem~\ref{th:spec_QFI} is negligible if the following sufficient condition holds
\begin{equation}
	\frac{d_{B}}{3!d_{A}} \left \Vert \partial_{\theta}^{3}\Phi^{\mathcal{R}_{\theta}}_{BA}\right \Vert_{\infty} \mathbb{E}\left[\vert \hat{\theta}-\theta \vert^{3}\right]_{p(x\vert \theta)} << \frac{g(\theta)}{I_{\text{QF}}(\mathcal{M}_{\theta})} \; .
\end{equation}
We can rewrite this condition as
\begin{align}
\mathbb{E}\left[\vert \hat{\theta}-\theta \vert^{3}\right]_{p(x\vert \theta)}\textsf{I}_{\text{QF}}(\mathcal{M}_{\theta}(\psi)) &<< \frac{6d_{A}g(\theta)}{d_{B}\left \Vert \partial_{\theta}^{3}\Phi^{\mathcal{R}_{\theta}}_{BA}\right \Vert_{\infty}} \nonumber \\ &\equiv g^{\prime}(\theta) \; , \label{eqn:negligible_remainder}
\end{align}
where the right-hand-side $g^{\prime}(\theta)$ is fully determined by the parameterized noise channel $\mathcal{N}^{A \rightarrow B}_{\theta}$ (since $\mathcal{R}^{B \rightarrow A}_{\theta}$ is found from Eq.~\eqref{eqn:opt_recovery}, given $\mathcal{N}^{A \rightarrow B}_{\theta}$), similar to $g(\theta)$ in Theorem~\ref{th:spec_QFI}. The bound in Eq.~\eqref{eqn:negligible_remainder} could be understood by saying that, although $\vert \hat{\theta}-\theta \vert$ cannot be arbitrarily small (due to Eq.~\eqref{eqn:QCRB}), it should not be too large so that the small error expansion that is important for the proof of Theorem~\ref{th:spec_QFI} will hold.

\section{Entanglement Fidelity For The [4,1] Code}
\label{apx:[4,1]_ent_fid}
The entanglement fidelity for single-qubit $(k=1)$ $[n, k]$ codes is given by \cite{zhan2013entanglement}
\begin{equation}
    F_{e}(\mathcal{R}_{\theta}\circ \mathcal{N}_{\theta})=\frac{1}{4}\mathrm{Tr}[G] \; ,
\end{equation}
where the matrix elements $G_{\sigma \sigma^{\prime}}=\mathrm{Tr}\left[ D_{\hat{\theta}, \sigma} \mathcal{N}_{\theta}[\sigma^{\prime}_{L}] \right]$, with $D_{\hat{\theta}, \sigma}\equiv 2\sum_{i}R_{\hat{\theta}, \sigma}^{(4)}\sigma_{L}R_{\hat{\theta}, \sigma}^{(4)\dagger}$, describes the effective dynamics of the Bloch coefficients for the encoded single qubit \cite{rahn2002exact}, i.e. if $\rho_{i}=\frac{1}{2}\sum_{\sigma}u_{\sigma}\sigma$ and $\rho_{f}=\frac{1}{2}\sum_{\sigma}v_{\sigma}\sigma$, then $\Vec{v}=G\Vec{u}$.

In \cite{zhan2013entanglement}, the authors derived an analytical formula for the entanglement fidelity $F_{e}(\mathcal{R}_{\theta}\circ \mathcal{N}_{\theta})$ as
\begin{align}
    F_{e}= \frac{1}{4} \left[1 +\sqrt{2}\text{Re}[\alpha]\tau+8\tau^{2}+(\sqrt{2}\text{Re}[\beta]-8)\tau^{3}+\tau^{4} \right] \; , \label{eqn:AD_ent_fid}
\end{align}
where $\tau=1-\theta$ and $\alpha$, $\beta$ (with $\vert \alpha \vert^{2}+\vert \beta \vert^{2}=1$) are the complex parameters that the recovery channel depends on. The optimum recovery channel $\mathcal{R}_{\theta}(\alpha, \beta)=\mathcal{R}(\alpha(\theta), \beta(\theta))$ in \cite{fletcher2008channel} is the one that maximizes the entanglement fidelity with respect to $\alpha$, $\beta$ for the given value of the noise parameter $\theta$.

To find the dependence of $\alpha$ and $\beta$ on the noise parameter $\theta$ for the optimum recovery, we first rewrite Eq.~\eqref{eqn:AD_ent_fid} using $\alpha = \vert \alpha \vert e^{i\psi}$ and $\beta = \vert \beta \vert e^{i\phi}=\sqrt{1-\vert \alpha \vert^{2}}e^{i\phi}$, which yields
\begin{align}
    F_{e}(\vert \alpha \vert, \psi, \phi ; \theta)&=\frac{1}{4}+\frac{\sqrt{2}}{4}\vert \alpha \vert \tau \cos{\psi} +2\tau^{2} \nonumber \\ &+(\sqrt{2(1-\vert \alpha \vert^{2})}\cos{\phi}-8)\frac{\tau^{3}}{4}+\frac{\tau^{4}}{4} \; .
\end{align}
Then we take the partial derivatives of this function with respect to the independent parameters $\vert \alpha \vert \in [0, 1]$ and $\psi, \phi \in [0, 2\pi)$, to arrive at
\begin{equation}
    \vert \alpha_{\text{opt}}(\theta) \vert=\frac{1}{\sqrt{1+\tau^{4}}} \hspace{0.2cm} \text{and} \hspace{0.2cm} (\psi_{\text{opt}}, \phi_{\text{opt}})=\{(0, 0), (\pi, \pi)\} \; .
\end{equation}
By simple substitution, we can check that $(\psi, \phi)=(0, 0)$ is the pair that maximizes the entanglement fidelity function.

Now let us find an analytical formula for the incomplete knowledge scenario $ F_{e}(\mathcal{R}_{\hat{\theta}}\circ \mathcal{N}_{\theta})$. Note that in Eq.~\eqref{eqn:AD_ent_fid}, the dependence of the recovery $\mathcal{R}_{\hat{\theta}}(\alpha, \beta)=\mathcal{R}(\alpha(\hat{\theta}), \beta(\hat{\theta}))$ on the estimated noise parameter $\hat{\theta}$ enters only through $\alpha$ and $\beta$ \cite{fletcher2008channel}. Therefore, we can use the optimum values $\vert \alpha_{\text{opt}}(\hat{\theta}) \vert = 1/\sqrt{1+\tau(\hat{\theta})}$ and $\psi_{\text{opt}}=\phi_{\text{opt}}=0$ and plug it back into Eq.~\eqref{eqn:AD_ent_fid}, which yields
\begin{align}
    F_{e}(&\vert \alpha_{\text{opt}}(\hat{\theta}) \vert  ; \theta)=\frac{1}{4}+\frac{\sqrt{2}}{4}\vert \alpha_{\text{opt}}(\hat{\theta}) \vert \tau(\theta) +2\tau^{2}(\theta) \nonumber \\ &+\left(\sqrt{2(1-\vert \alpha_{\text{opt}}(\hat{\theta}) \vert^{2})}-8\right) \frac{\tau^{3}(\theta)}{4}+\frac{\tau^{4}(\theta)}{4} \; .
\end{align}
This yields for arbitrary finite differences $\theta-\hat{\theta}$ and any estimate $\hat{\theta}$ the following exact formula
\begin{align}
    & F_{e}(\vert \alpha_{\text{opt}}(\theta) \vert ; \theta)-F_{e}(\vert \alpha_{\text{opt}}(\hat{\theta}) \vert  ; \theta) \nonumber \\ &= \frac{\tau(\theta)}{2\sqrt{2}}\left(\frac{1}{\sqrt{1+\tau^{4}(\theta)}}-\frac{1}{\sqrt{1+\tau^{4}(\hat{\theta})}} \right) \nonumber \\
    & \hspace{0.2cm} + \frac{\tau^{3}(\theta)}{2\sqrt{2}}\left(\frac{\tau^{2}(\theta)}{\sqrt{1+\tau^{4}(\theta)}}-\frac{\tau^{2}(\hat{\theta})}{\sqrt{1+\tau^{4}(\hat{\theta})}} \right) \; .
\end{align}
When adaptation is implemented, the following derivative is relevant
\begin{equation}
    \left . \frac{d^{2}F_{e}(\vert \alpha_{\text{opt}}(\theta+\nu) \vert  ; \theta)}{d\nu^{2}} \right \vert_{\nu=0}=  -\frac{\tau^{3}(\theta)}{\sqrt{2}(1+\tau^{4}(\theta))^{3/2}} \; .
\end{equation}
On the other hand, if no adaptation is implemented, the relevant derivative becomes
\begin{align}
    \left . \frac{d^{2}F_{e}(\vert \alpha_{\text{opt}}(\eta+\nu) \vert  ; \theta)}{d\nu^{2}} \right \vert_{\nu=0}=  -\frac{\tau^{3}(\eta)q\left(\tau(\eta), \frac{\tau(\theta)}{\tau(\eta)}\right)}{\sqrt{2}(1+\tau^{4}(\eta))^{3/2}} \; ,
\end{align}
where
\begin{equation}
    q(x,y) \equiv \frac{\left(\frac{5}{2}y^{3}-\frac{3}{2}y\right)x^{4}+\left(\frac{3}{2}y-\frac{1}{2}\right)}{x^{4}+1} \; ,
\end{equation}
which satisfies $q(x,1)=1$ in the adaptive regime $\eta=\theta$. Therefore, further algebraic simplifications yield
\begin{align}
    & F_{e}(\vert \alpha_{\text{opt}}(\theta) \vert ; \theta)-F_{e}(\vert \alpha_{\text{opt}}(\hat{\theta}) \vert  ; \theta) \nonumber \\ &= \frac{\tau^{3}(\theta)}{\sqrt{2}(1+\tau^{4}(\theta))^{3/2}}(\theta-\hat{\theta})^{2} - R(\hat{\theta}-\theta) \; ,
\end{align}
where $R(\hat{\theta}-\theta)\equiv \frac{1}{3!}\partial_{\nu}^{3}F_{e}(\vert \alpha_{\text{opt}}(\theta+\nu_{0}) \vert ; \theta)(\hat{\theta}-\theta)^{3}$ is the Lagrange form of the Taylor series expansion remainder of $F_{e}(\vert \alpha_{\text{opt}}(\theta+\nu) \vert ; \theta)$ with respect to $\nu$, where $\nu_{0} \in [0, \hat{\theta}-\theta]$ is a constant. Finally, taking the expectation of both sides with respect to the spectator system's probability distribution function $p_{X}(x \vert \theta)$ (where $x$ is the measurement outcome of the spectator observable $X=\sum_{x\in \mathcal{X}}x\Pi_{x}$) yields
\begin{align}
    & \mathbb{E}\left[ F_{e}(\vert \alpha_{\text{opt}}(\theta) \vert ; \theta)-F_{e}(\vert \alpha_{\text{opt}}(\hat{\theta}) \vert  ; \theta) \right] \nonumber \\ & \hspace{0.2cm}=g(\theta)\operatorname{Var}(\hat{\theta})-\mathbb{E}\left[R(\hat{\theta}-\theta)\right] \; , \label{eqn:[4,1]fund}
\end{align}
where 
\begin{equation}
    g(\theta)= \frac{(1-\theta)^{3}}{\sqrt{2}(1+(1-\theta)^{4})^{3/2}} \; , 
\end{equation}
and we have used the fact that $\hat{\theta}$ is an unbiased estimate of $\theta$. 

\section{\mathinhead{\chi}{chi}-Matrix Representation of Quantum Channels}
\label{sec:chi_matrix}
Besides the well-known Kraus and Stienspring representations of a CPTP map, a lesser-known representation, called the $\chi$-matrix representation \cite{chuang1997prescription}, is also useful in practice. This is most commonly used in quantum state tomography \cite{nielsen2002quantum} and is extended to quantum process tomography \cite{mohseni2008quantum} where state tomography of the Choi state of a quantum channel is conducted. This is to be contrasted with other approaches in measuring noise, such as randomized benchmarking \cite{knill2008randomized} and QEC itself \cite{combes2014situ}. Interestingly, the $\chi$-matrix representation can be well motivated in the context of QEC by noting that we can rewrite the ``error operators'' $\{ Q_{i} \}$ of any noisy map $\mathcal{Q}(\cdot)=\sum_{i}Q_{i}(\cdot)Q_{i}^{\dagger}$ in terms of a pre-selected ``error basis'' $\{ B_{k} \}_{k=0}^{d^{2}-1}$ in $\mathcal{L}(\mathcal{H})$, where $d \equiv \text{dim}\mathcal{H}$. It is particularly useful to pick one of the basis elements, e.g. $B_{0}$, as the ``desirable'' error (such as being proportional to the unit matrix). Consequently, the coefficient associated with this error component indicates how likely it is that the given Kraus operators of the noisy map will change the state of our quantum system in a ``desirable way''. An additional benefit of the $\chi$-matrix representation is that the effects of channel twirling are especially clear \cite{magesan2008gaining}, as ``diagonalization'' with respect to the generalized Pauli group. Therefore, the rest of the appendix is devoted to recalling the $\chi$-matrix representation in a self-contained way.

Recall that every CP map $\mathcal{Q}^{A\rightarrow B}$ admits a Kraus decomposition
\begin{equation}
    \mathcal{Q}(\cdot) = \sum_{i=1}^{K}Q_{i}(\cdot)Q^{\dagger}_{i} \;. \label{eqn:Kraus}
\end{equation}
in terms of Kraus operators $\{ Q_{i} \}_{i=1}^{K}$ satisfying $\sum_{i=1}^{K}Q_{i}^{\dagger}Q_{i} \leq I_{A}$, where the equality holds for TP maps. Let us consider a CP map $\mathcal{Q}^{A\rightarrow A} \equiv \mathcal{Q}$, where by denoting $d\equiv \text{dim}(\mathcal{H}^{A})$, we can decompose each of the Kraus operators $\{ Q_{i} \}_{i=1}^{K}$ as a linear combination of some orthonormal operator basis $\{ B_{k} \}_{k=0}^{d^{2}-1}$ in $\mathcal{L}(\mathcal{H}^{A})$, as follows
\begin{equation}
    Q_{i}=\sum_{k=0}^{d^{2}-1}\langle B_{k}, Q_{i}\rangle B_{k} \; , \label{eqn:basis_decom}
\end{equation}
where $\langle B_{k}, B_{l}\rangle = \delta_{kl}$, and $\langle \cdot \rangle $ being the Hilbert-Schmidt inner product in $\mathcal{L}(\mathcal{H}^{A})$. One could take $B_{k}\equiv B_{(m,n)}=|m\rangle \! \langle n|$ for $m,n= \{1, \cdots, d$\}, which is known as the standard basis in $\mathcal{L}(\mathcal{H}^{A})$. Substituting Eq.~\eqref{eqn:basis_decom} in the Kraus representation of $\mathcal{Q}$, we get
\begin{equation}
    \mathcal{Q}(\cdot) = \sum_{k=0}^{d^{2}-1}\sum_{l=0}^{d^{2}-1}\chi_{kl}^{\mathcal{Q}}B_{k}(\cdot)B^{\dagger}_{l} \; , \label{eqn:chi_repr}
\end{equation}
where
\begin{equation}
    \chi_{kl}^{\mathcal{Q}}\coloneqq \sum_{i=1}^{K}\langle B_{k}, Q_{i}\rangle \! \langle Q_{i}, B_{l}\rangle \; ,
\end{equation}
is called the $\chi$ matrix of the CP map $\mathcal{Q}$. It is easy to see that the $\chi$ matrix is a positive semi-definite matrix. This matrix has $d^{4}$ complex entries, corresponding to the matrix entries of the superoperator $\mathcal{Q}$ in the Liouville representation (see, e.g. \cite{carignan2019bounding, kimmel2014robust}), namely
\begin{equation}
    \hat{\mathcal{Q}}\coloneqq \sum_{k=0}^{d^{2}-1}\sum_{l=0}^{d^{2}-1}\chi_{kl}^{\mathcal{Q}} |B_{k}\rangle \rangle \! \langle \langle B_{l}| \; ,
\end{equation}
where $|B_{k}\rangle \rangle$ is the $d^{2}\times 1$ vector corresponding to the $d\times d$ matrix $B_{k}$. The number of independent entries of the $\chi$ matrix is reduced from $d^{4}$ to $d^{4}-d^{2}$ complex numbers if the CP map $\mathcal{Q}$ is also TP, since for each of the $d^{2}$ standard basis elements $|n\rangle \! \langle m|$ in $\mathcal{L}(\mathcal{H}^{A})$, the map $\mathcal{Q}$ must also preserve the trace, which leads to $d^{2}$ constraints.

In the context of QEC, it is convenient to choose our operator basis in $\mathcal{L}(\mathcal{H}^{A})$ such that $B_{0}$ indicates a ``desired effect'' on a quantum state. Here $B_{0}\equiv I/\sqrt{d}$ is desirable for QEC, but for other applications, $B_{0}$ could be chosen differently. Next, we write Eq.~\eqref{eqn:basis_decom} for a fixed $i=1, \cdots, K$, as 
\begin{equation}
    Q_{i}=\langle B_{0}, Q_{i}\rangle B_{0}+\sum_{k=1}^{d^{2}-1}\langle B_{k}, Q_{i}\rangle B_{k} \; .
\end{equation}
Then, by multiplying both sides on the left by $Q_{i}^{\dagger}$ and taking the trace, we arrive at
\begin{equation}
    \langle Q_{i}, Q_{i}\rangle =\vert \langle B_{0}, Q_{i}\rangle \vert ^{2}+\sum_{k=1}^{d^{2}-1}\vert \langle B_{k}, Q_{i}\rangle \vert ^{2} \; ,
\end{equation}
or equivalently,
\begin{equation}
    \frac{\vert \langle B_{0}, Q_{i}\rangle \vert^{2}}{\langle Q_{i}, Q_{i}\rangle}+\sum_{k=1}^{d^{2}-1}\frac{\vert \langle B_{k}, Q_{i}\rangle \vert^{2}}{\langle Q_{i}, Q_{i}\rangle}=1 \; .
\end{equation}
By denoting $q_{i}^{2}\coloneqq \langle Q_{i}, Q_{i}\rangle$, $\vert \cos(\phi_{i}) \vert \coloneqq \vert \langle B_{0}, Q_{i}\rangle \vert / q_{i}$, and $v_{i,k} \vert \sin(\phi_{i}) \vert \coloneqq \vert \langle B_{k}, Q_{i}\rangle \vert / q_{i}$ with some real weights $\{ v_{i, k} \}_{k=1}^{d^{2}-1}$ satisfying $\sum_{k=1}^{d^{2}-1}v^{2}_{i, k}=1$, we can rewrite the previous equation in a simple form
\begin{equation}
    \cos^{2}(\phi_{i})+\sin^{2}(\phi_{i})=1 \hspace{0.2cm} \text{for} \hspace{0.2cm} \text{for all} \hspace{0.1cm} i=1, \cdots, K \; ,
\end{equation}
where the angle $\phi_{i}$ indicates how ``close'' the error $Q_{i}$ is the the ``desirable error'' $B_{0}$. Note that, if $\mathcal{Q}$ is also TP, then $\sum_{i=1}^{K}q^{2}_{i}=\sum_{i=1}^{K}\langle Q_{i}, Q_{i}\rangle=\mathrm{Tr}[I]=d$.

Due to the unitary freedom of choosing the Kraus operators of any fixed quantum channel from its Steinspring dilation \cite{nielsen2002quantum}, the $\chi$ matrix is not uniquely determined. Therefore, using the phase freedom $Q_{i}\rightarrow Q_{i}e^{i\omega_{i}}$ (which is a special case of the unitary freedom mentioned above), we can always choose the phases $\omega_{i}$ for $i=1,\cdots, K$ such that all the inner products $\langle B_{0}, Q_{i}\rangle $ with the basis element $B_{0}$ are all non-negative. This means that we can pick $\phi_{i} \in [0, \pi/2]$. Finally, the additional phase in $\langle B_{k}, Q_{i}\rangle$ can always be placed in the vector $v_{k}$, which leads to the final decomposition
\begin{equation}
    Q_{i}=q_{i}\left(\cos(\phi_{i})B_{0}+\sin(\phi_{i})\sum_{k=1}^{d^{2}-1}v_{i,k}B_{k}\right) \; , \label{eqn:N_i}
\end{equation}
where $\{v_{i,k}\}_{k=1}^{d^{2}-1}$ are now complex numbers with $\sum_{k=1}^{d^{2}-1}\vert v_{i,k} \vert^{2}=1$. From here, we can easily compute the matrix element $\chi^{\mathcal{Q}}_{00}$, after taking $B_{0}=I/\sqrt{d}$, as
\begin{equation}
    \chi^{\mathcal{Q}}_{00}=\sum_{i=1}^{K}q_{i}^{2}\cos^{2}(\phi_{i})=\frac{1}{d}\sum_{i=1}^{K}\left \vert \mathrm{Tr}[Q_{i}] \right \vert^{2} \; , \label{eqn:chi_00}
\end{equation}
which is consistent with $\chi^{\mathcal{Q}}_{00}/d=F_{e}(\mathcal{Q}, I/d)$ \cite{schumacher1996quantum, schumacher1996sending}. Using Eq.~\eqref{eqn:N_i} to compute $Q^{\dagger}_{i}Q_{i}$, taking the trace, and using the orthonormality of the operator basis $\{ B_{k} \}_{k=1}^{d^{2}-1}$, we arrive at
\begin{equation}
    \sum_{i=1}^{K}\langle Q_{i}, Q_{i}\rangle = \sum_{i=1}^{K}q_{i}^{2} \leq d \; ,
\end{equation}
where the inequality follows from $\sum_{i}Q_{i}^{\dagger}Q_{i} \leq I$.
Combined with Eq.~\eqref{eqn:chi_00}, this implies that $ 0 \leq \chi_{00}^{\mathcal{Q}} \leq d$, or equivalently $0 \leq F_{e}(\mathcal{Q}) \leq 1$.

\section{Proof of Lemma~\ref{le:err_angle}}
\label{sec:Chi_upper}
Here we derive an upper bound on the matrix element $\chi_{00}^{\mathcal{S}\circ \mathcal{Q}}$ of the composite channel $\mathcal{S}\circ \mathcal{Q}$, given the corresponding $\chi$ matrix elements $\chi_{00}^{\mathcal{Q}}$ and $\chi_{00}^{\mathcal{S}}$ of the individual channels $\mathcal{Q}$ and $\mathcal{S}$, respectively. The technique used for the following derivation is based on \cite{carignan2019bounding}.

Given the Kraus operators $\{ Q_{i} \}_{i=1}^{K(\mathcal{Q})}$, $\{ S_{j} \}_{j=1}^{K(\mathcal{S})}$ of the individual channels $\mathcal{Q}$ and $\mathcal{S}$, the Kraus operators of the composite channel $\mathcal{S}\circ \mathcal{Q}$ are given by $\{ S_{j}Q_{i} \}_{i,j}$ for $i=1, \cdots, K(\mathcal{Q})$ and $j=1, \cdots, K(\mathcal{S})$. Therefore, by using Eq.~\eqref{eqn:N_i} for the individual Kraus operators and using the notation $\langle Q_{i}, Q_{i} \rangle = q_{i}$ and $\langle S_{i}, S_{i} \rangle = s_{i}$, we find for the Kraus operators of the composite channel
\begin{align}
    S_{j}Q_{i} &=s_{j}q_{i} \cos(\phi_{j}^{\mathcal{S}})\cos(\phi_{i}^{\mathcal{Q}})B_{0}^{2} \nonumber \\
    &+ s_{j}q_{i}\sin(\phi_{j}^{\mathcal{S}})\cos(\phi_{i}^{\mathcal{Q}}) \sum_{k=1}^{d^{2}-1}v^{\mathcal{S}}_{j,k}B_{k}B_{0} \nonumber \\ &+ s_{j}q_{i}\cos(\phi_{j}^{\mathcal{S}})\sin(\phi_{i}^{\mathcal{Q}}) \sum_{k=1}^{d^{2}-1}v^{\mathcal{Q}}_{i,k}B_{0}B_{k} \nonumber \\ &+ s_{j}q_{i}\sin(\phi_{j}^{\mathcal{S}})\sin(\phi_{i}^{\mathcal{Q}}) \sum_{k, k^{\prime}=1}^{d^{2}-1}v^{\mathcal{S}}_{j,k}v^{\mathcal{Q}}_{i,k^{\prime}}B_{k}B_{k^{\prime}} \; . 
\end{align}
By substituting into Eq.~\eqref{eqn:chi_00} and choosing the operator basis elements to be Hermitian (hence $\mathrm{Tr}[B_{k}B_{k^{\prime}}]=\langle B_{k}, B_{k^{\prime}} \rangle = \delta_{kk^{\prime}} $, for $k, k^{\prime}=0, 1, \cdots, n^{2}-1$), we arrive at
\begin{align}
    \chi_{00}^{\mathcal{S}\circ \mathcal{Q}} &=\frac{1}{d}\sum_{i,j}s_{j}^{2}q_{i}^{2}\vert \cos(\phi_{j}^{\mathcal{S}})\cos(\phi_{i}^{\mathcal{Q}}) \nonumber \\ &+(v^{\mathcal{S}}_{j}\bullet v^{\mathcal{Q}}_{i})\sin(\phi_{j}^{\mathcal{S}})\sin(\phi_{i}^{\mathcal{Q}}) \vert^{2} \; , \label{eqn:composite_chi}
\end{align}
where we have denoted by $v^{\mathcal{S}}_{j}\bullet v^{\mathcal{Q}}_{i}\equiv \sum_{k=1}^{d^{2}-1}v^{\mathcal{S}}_{j,k}v^{\mathcal{Q}}_{i,k}$, with $\{v^{\mathcal{Q}}_{i,k}\}_{k=0}^{d^{2}-1}$ and $\{v^{\mathcal{S}}_{j,k}\}_{k=0}^{d^{2}-1} \in \mathbb{C}^{d^{2}-1}$. By denoting $c_{ij}\equiv\cos(\phi_{j}^{\mathcal{S}})\cos(\phi_{i}^{\mathcal{Q}})$, $s_{ij}\equiv \sin(\phi_{j}^{\mathcal{S}})\sin(\phi_{i}^{\mathcal{Q}})$, and $v_{ij}\equiv v^{\mathcal{S}}_{j}\bullet v^{\mathcal{Q}}_{i}$, we can use the (forward and reversed) triangle inequality, as follows
\begin{equation}
    \vert \vert c_{ij} \vert - \vert v_{ij}s_{ij} \vert \vert \leq \vert c_{ij}+v_{ij}s_{ij} \vert \leq \vert c_{ij} \vert + \vert v_{ij}s_{ij} \vert \; . \label{eqn:triangle}
\end{equation}
By recalling that $c_{ij}$, $s_{ij} \ge 0$, since $\phi_{i}^{\mathcal{Q}}$, $\phi_{j}^{\mathcal{S}} \in [0, \pi/2]$, the second inequality yields $\vert c_{ij}+v_{ij}s_{ij} \vert \leq \vert c_{ij} \vert + \vert v_{ij}s_{ij} \vert = c_{ij} + \vert v_{ij} \vert s_{ij}$. If we assume that $\vert v_{ij} \vert\leq 1 $, then the inequality becomes $\vert c_{ij}+v_{ij}s_{ij} \vert \leq c_{ij} + s_{ij}$. By squaring both sides, we get
\begin{equation}
    \vert c_{ij}+v_{ij}s_{ij} \vert^{2} \leq  c_{ij}^{2} + s_{ij}^{2} + 2 c_{ij}s_{ij} \; ,
\end{equation}
where this inequality is saturated iff $v_{ij} =1$ for all $i=1,\cdots, K(\mathcal{Q})$ and $j=1, \cdots, K(\mathcal{S})$. Substituting back into Eq.~\eqref{eqn:composite_chi} and using the definitions of $\chi_{00}^{\mathcal{Q}}$ and $\chi_{00}^{\mathcal{S}}$ from Eq.~\eqref{eqn:chi_00}, as well as the fact that $\sum_{i}q_{i}^{2}=\sum_{j}s_{j}^{2}=d$ for CPTP maps $\mathcal{Q}$ and $\mathcal{S}$, we arrive at
\begin{align}
    \chi_{00}^{\mathcal{S}\circ \mathcal{Q}} &= \frac{1}{d}\sum_{i,j}s_{j}^{2}q_{i}^{2}\vert c_{ij}+v_{ij}s_{ij} \vert^{2} \\
    &\leq \frac{1}{d}\sum_{i,j}s_{j}^{2}q_{i}^{2}(c_{ij}^{2} + s_{ij}^{2} + 2 c_{ij}s_{ij}) \\
    & = \frac{1}{d} \left[\chi_{00}^{\mathcal{S}}\chi_{00}^{\mathcal{Q}}+(d-\chi_{00}^{\mathcal{S}})(d-\chi_{00}^{ \mathcal{Q}})\right] \nonumber \\
    & \hspace{0.2cm}+ \frac{2}{d}\sum_{i,j}s_{j}^{2}q_{i}^{2}c_{ij}s_{ij} \\ & = \frac{1}{d} \left[\chi_{00}^{\mathcal{S}}\chi_{00}^{\mathcal{Q}}+(d-\chi_{00}^{\mathcal{S}})(d-\chi_{00}^{ \mathcal{Q}})\right] \nonumber \\
    & \hspace{0.2cm} + \frac{2}{d}\left(\sum_{j}s_{j}^{2}\cos(\phi_{j}^{\mathcal{S}})\sin(\phi_{j}^{\mathcal{S}})\right)\times  \nonumber \\ & \hspace{0.4cm} \times \left(\sum_{i}q_{i}^{2}\cos(\phi_{i}^{\mathcal{Q}})\sin(\phi_{i}^{\mathcal{Q}})\right) \; . \label{eqn:cauchy_0}
\end{align}
Next, we use the Cauchy-Schwartz inequality
\begin{align}
    &\sum_{i}q_{i}^{2}\cos(\phi_{i}^{\mathcal{Q}})\sin(\phi_{i}^{\mathcal{Q}}) \\ & = \sum_{i}\left(q_{i}\cos(\phi_{i}^{\mathcal{Q}})\right) \left( q_{i}\sin(\phi_{i}^{\mathcal{Q}})\right) \\ & \leq \sqrt{\sum_{i} q^{2}_{i}\cos^{2}(\phi_{i}^{\mathcal{Q}})} \sqrt{\sum_{i} q^{2}_{i}\sin^{2}(\phi_{i}^{\mathcal{Q}})} \\ & 
    = \sqrt{\chi_{00}^{\mathcal{Q}}(d-\chi_{00}^{\mathcal{Q}})} \; , \label{eqn:cauchy_1}
\end{align}
where the Cauchy-Schwartz inequality is saturated when the vectors $\{q_{i}\cos(\phi_{i}^{\mathcal{Q}})\}_{i=1}^{K(\mathcal{Q})}$ and $\{q_{i}\sin(\phi_{i}^{\mathcal{Q}})\}_{i=1}^{K(\mathcal{Q})}$ are linearly dependent, therefore $\tan(\phi_{1}^{\mathcal{Q}})=\cdots=\tan(\phi_{K(\mathcal{Q})}^{\mathcal{Q}})$. Since $\phi_{i}^{\mathcal{Q}} \in [0, \pi/2]$ for all $ i=1, \cdots ,K(\mathcal{Q})$ and the function $\tan(x)$ is one-to-one in that region, it follows that the above inequality is saturated iff $\phi_{1}^{\mathcal{Q}}=\cdots=\phi_{K(\mathcal{Q})}^{\mathcal{Q}}$. The exact same argument for the channel $\mathcal{S}$ yields
\begin{equation}
    \sum_{j}s_{j}^{2}\cos(\phi_{j}^{\mathcal{S}})\sin(\phi_{j}^{\mathcal{S}})
    \leq \sqrt{\chi_{00}^{\mathcal{S}}(d-\chi_{00}^{\mathcal{S}})} \; , \label{eqn:cauchy_2}
\end{equation}
where the inequality is saturated iff $\phi_{1}^{\mathcal{S}}=\cdots=\phi_{K(\mathcal{S})}^{\mathcal{S}}$. Substituting Eqs.~\eqref{eqn:cauchy_1} and \eqref{eqn:cauchy_2} into Eq.~\eqref{eqn:cauchy_0}, we arrive at
\begin{align}
    d\chi_{00}^{\mathcal{S}\circ \mathcal{Q}} &\leq \chi_{00}^{\mathcal{S}}\chi_{00}^{\mathcal{Q}}+(d-\chi_{00}^{\mathcal{S}})(d-\chi_{00}^{ \mathcal{Q}}) \nonumber \\ &+2\sqrt{\chi_{00}^{\mathcal{S}}\chi_{00}^{\mathcal{Q}}(d-\chi_{00}^{\mathcal{S}})(d-\chi_{00}^{ \mathcal{Q}})} \; ,
\end{align}
or equivalently,
\begin{equation}
    \sqrt{d\chi_{00}^{\mathcal{S}\circ \mathcal{Q}}} \leq \sqrt{\chi_{00}^{\mathcal{S}}}\sqrt{\chi_{00}^{\mathcal{Q}}}+\sqrt{d-\chi_{00}^{\mathcal{S}}}\sqrt{d-\chi_{00}^{\mathcal{Q}}} \; .
\end{equation}
Dividing both sides by $d$ and redefining $\chi^{\mathcal{Q}}_{00}/d \equiv \cos^{2}(\delta^{\mathcal{Q}})$ for $\delta^{\mathcal{Q}} \in [0, \pi/2]$, as suggested in Eq.~\eqref{eqn:err_angle}, we arrive at
\begin{equation}
    \cos(\delta^{\mathcal{S}\circ \mathcal{Q}}) \leq \cos(\delta^{\mathcal{S}})\cos(\delta^{\mathcal{Q}})+\sin(\delta^{\mathcal{S}})\sin(\delta^{\mathcal{Q}}) \; .
\end{equation}
Using a triangle identity to simplify the right hand side, we get $\cos(\delta^{\mathcal{S}\circ \mathcal{Q}}) \leq \cos(\delta^{\mathcal{S}}-\delta^{\mathcal{Q}}) $, i.e. 
\begin{equation}
    \delta^{\mathcal{S}\circ \mathcal{Q}} \ge \vert \delta^{\mathcal{S}}-\delta^{\mathcal{Q}} \vert \; ,
\end{equation}
which allows $\delta^{\mathcal{S}\circ \mathcal{Q}} \in [0, \pi/2]$ given $\delta^{\mathcal{S}}$, $\delta^{\mathcal{Q}} \in [0, \pi/2]$. In other words, given $\chi^{\mathcal{Q}}_{00}$ and $\chi^{\mathcal{S}}_{00}$, the composite channel $\chi$-matrix element $\chi^{\mathcal{S}\circ \mathcal{Q}}_{00}$ is bounded from above by
\begin{equation}
    \frac{\chi^{\mathcal{S}\circ \mathcal{Q}}_{00}}{d} \leq \cos^{2} \left( \arccos{\sqrt{\frac{\chi^{\mathcal{S}}_{00}}{d}}}-\arccos{\sqrt{\frac{\chi^{\mathcal{Q}}_{00}}{d}}} \right) \; .
\end{equation}

\section{Proof of Theorem~\ref{th:spec_multi}}
\label{apx:multi_cycle_proof}
We denote by $F_{e}\equiv F_{e}(\mathcal{R}_{\theta}\circ \mathcal{N}_{\theta})$, $\hat{F}_{e}\equiv \hat{F}_{e}(\mathcal{R}_{\hat{\theta}}\circ \mathcal{N}_{\theta})$, and $\Delta F_{e} \equiv F_{e}-\hat{F}_{e}$, and then do the simple manipulation
\begin{align}
    \arccos&{\sqrt{F_{e}(\mathcal{R}_{\hat{\theta}}\circ \mathcal{N}_{\theta})}}=\arccos{\sqrt{F_{e}-\Delta F_{e}}} \\ & =\arccos{\left\{ \sqrt{F_{e}}\sqrt{1-\frac{\Delta F_{e}}{F_{e}}} \right\}} \\ & =\arccos{\left\{ \sqrt{F_{e}}\left[1-\frac{\Delta F_{e}}{2 F_{e}}\right]  + \sqrt{F_{e}} R_{1} \left(\frac{\Delta F_{e}}{F_{e}}\right) \right\}} \; ,
\end{align}
where
\begin{equation}
    R_{1}(x) \equiv \frac{1}{3} (1-x_{0})^{-3/2}x^{2} \; ,
\end{equation}
is the Lagrange remainder term of the Taylor expansion of $\sqrt{1-x}$, and $x_{0} \in [0, x]$ is a constant. This yields
\begin{equation}
    \cos{\hat{\delta}} = \sqrt{F_{e}}-\frac{\Delta F_{e}}{2\sqrt{F_{e}}}+\sqrt{F_{e}} R_{1} \left(\frac{\Delta F_{e}}{F_{e}}\right) \; . \label{eqn:cosdelta_deriv_1}
\end{equation}
By denoting $\delta \equiv \arccos{\sqrt{F_{e}}}$, we find the Taylor series expansion of $\hat{\delta}$ with respect to $x\equiv \cos{\delta}-\cos{\hat{\delta}}$, as follows
\begin{equation}
    \hat{\delta}=\arccos{(\cos{\delta}-x)}=\delta+\frac{x}{\sin{\delta}}+R_{2}(x) \; ,
\end{equation}
where 
\begin{equation}
    R_{2}(x)=-\frac{\cos{\delta}-x_{0}}{2(1-(\cos{\delta}-x_{0})^{2})^{3/2}}x^{2} \; ,
\end{equation}
is the Lagrange form of the remainder term, and $x_{0} \in [0, x]$ is a constant. Substituting for $x$ using Eq.~\eqref{eqn:cosdelta_deriv_1} yields
\begin{align}
    \hat{\delta}-\delta &=\frac{x}{\sin{\delta}}+R_{2}(x) \\
    &=\frac{\frac{\Delta F_{e}}{2\sqrt{F_{e}}}-\sqrt{F_{e}} R_{1} \left(\frac{\Delta F_{e}}{F_{e}}\right)}{\sin{\delta}}+R_{2}(x) \\
    &=\frac{\Delta F_{e}}{2\sqrt{F_{e}(1-F_{e})}}+r\left( \frac{\Delta F_{e}}{F_{e}} \right)\; ,
\end{align}
where 
\begin{equation}
    r(x)=\sqrt{\frac{F_{e}}{1-F_{e}}}R_{1}(x)+R_{2}\left( \frac{\sqrt{F_{e}}}{2}x-\sqrt{F_{e}}R_{1}(x) \right) \; ,
\end{equation}
is the accumulative remainder term.

%we use the trigonometric identity
%\begin{equation}
    %\cos{\hat{\delta}}-\cos{\delta}=-2\sin{\frac{\hat{\delta}+\delta}{2}}\sin{\frac{\hat{\delta}-\delta}{2}} \; ,
%\end{equation}
%along with the approximations 
%\begin{equation}
    %\sin{\frac{\hat{\delta}-\delta}{2}} \approx \frac{\hat{\delta}-\delta}{2} \hspace{0.2cm} \text{and} \hspace{0.2cm} \sin{\frac{\hat{\delta}+\delta}{2}}\approx \sin{\delta}=\sqrt{1-F_{e}} \; ,
%\end{equation}
%to arrive at
%\begin{equation}
    %\hat{\delta}-\delta=\frac{\Delta F_{e}}{2\sqrt{F_{e}(1-F_{e})}} \ge 0 \; . \label{eqn:delta_difference}
%\end{equation}
Next, we rewrite Theorem~\ref{th:rec_ent_fid} as a recurrence inequality, where the contribution of the spectator system in each time step is clearly separated.
\begin{align}
    F_{e}^{1\rightarrow n} &\leq \cos^{2}\left( \arccos{\sqrt{F_{e}^{1\rightarrow (n-1)}}}-\arccos{\sqrt{\hat{F}^{n}_{e}}} \right) \\ 
    & \equiv \cos^{2}\left( \delta^{1\rightarrow (n-1)}-\hat{\delta}^{n} \right) \\
    & =\cos^{2}\left( \delta^{1\rightarrow (n-1)}-\delta^{n}-(\hat{\delta}^{n}-\delta^{n})  \right) \; .
\end{align}
By denoting $a=\delta^{1\rightarrow (n-1)}-\delta^{n}$ and $b=\hat{\delta}^{n}-\delta^{n}$, and using the trigonometric identities $\cos^{2}(a-b)=\frac{1}{2}+\frac{1}{2}\cos{2(a-b)}$ and $\cos{(a-b)}=\cos{(a)} \cos{(b)}+\sin{(a)}\sin{(b)}$, we arrive at
\begin{align}
    & \cos^{2}(a-b) \\ 
    & = \frac{1}{2}+\frac{1}{2}\left(\cos{2a}\cos{2b}+\sin{2a}\sin{2b} \right) \\
    & = \cos^{2}(a)-\frac{1}{2}\cos{2a}(1-\cos{2b})+\frac{1}{2}\sin{2a}\sin{2b} \\
    & = \cos^{2}(a)+b\sin(2a)-\frac{1}{2}(\cos{2a})R_{3}(2b) \nonumber \\ 
    & \hspace{0.2cm}+\frac{1}{2}(\sin{2a})R_{4}(2b) \\
    & \equiv \cos^{2}(a)+b\sin(2a) + r^{\prime}(b)\; ,
\end{align}
where we have used the Taylor expansions of the sine and cosine functions to the first order of $b$, and denoted by $R_{3}(x)$ and $R_{4}(x)$ with their remainders in the Lagrange form, respectively.

Consequently, we have for Theorem~\ref{th:rec_ent_fid} the separation 
\begin{align}
    F_{e}&^{1\rightarrow n} \leq \cos^{2}\left( \delta^{1\rightarrow (n-1)}-\delta^{n} \right) \nonumber \\ 
    &+(\hat{\delta}^{n}-\delta^{n})\sin\left(2( \delta^{1\rightarrow (n-1)}-\delta^{n}) \right) +r^{\prime}(\hat{\delta}^{n}-\delta^{n}) \; ,
\end{align}
where the first term is the interference between the previous $n-1$ cycles and the $n$-th cycle error angles with perfect knowledge at the $n$-th step. On the other hand, the second term shows the contribution of the lack of knowledge into the recurrence inequality at the $n$-th timestep, for a fixed error angle $\delta^{1\rightarrow (n-1)}$, as
\begin{equation}
    \frac{\Delta F_{e}^{n}}{2\sqrt{F^{n}_{e}(1-F^{n}_{e})}}\sin\left( 2\delta^{1\rightarrow (n-1)}-2\arccos{\sqrt{F^{n}_{e}}} \right) \; .
\end{equation}
Note that the sign of this contribution depends on the difference between the error angles of the previous $n-1$ cycles and the $n$-th cycle. Taking the expectation with respect to the probability distribution $p_{X}(x_{n}|\theta_{n})$ and using Theorem~\ref{th:spec_QFI}, we arrive at the following:

\end{document}